\documentclass[preprint,11pt,authoryear]{elsarticle}
\usepackage{amsmath}
\usepackage{amsthm}
\allowdisplaybreaks
\usepackage{minted}
\usepackage{orcidlink}
\usepackage[utf8]{inputenc}

\usepackage{amssymb}
\usepackage{url}
\usepackage{adjustbox}
\usepackage{amsthm}
\newtheorem{proposition}{Proposition}
\usepackage{amssymb}
\usepackage{graphicx}
\usepackage{bm}
\usepackage{rotating}
\usepackage{url}
\usepackage{amssymb}
\usepackage{hyperref}
\usepackage{lscape}
\usepackage{hyperref}
\usepackage{multirow}
\usepackage{tfrupee}
\usepackage{subcaption}
\usepackage{amsthm}
\usepackage{comment}
\usepackage[english]{babel}
\usepackage{setspace}
\usepackage{enumerate}
\usepackage{epstopdf}
\usepackage{color}
\usepackage{scalerel,stackengine}
\usepackage{mathtools}
\usepackage{url}
\usepackage[normalem]{ulem}
\usepackage{subcaption}
\usepackage{verbatim}
\usepackage{float}
\usepackage{esvect}

\usepackage[utf8]{inputenc}
\usepackage{amsmath}

\usepackage{url}
\usepackage{mathrsfs}
\usepackage{booktabs}
\usepackage{longtable}
\usepackage{hyperref}
\usepackage[labelfont=bf, textfont=normal]{caption}
\hypersetup{colorlinks,linkcolor={blue},citecolor={blue},urlcolor={red}}  
\usepackage{amsmath}
\usepackage{algorithm}
\usepackage{color}
\usepackage{algpseudocode}
\usepackage{amsthm,setspace}
\newtheorem{theorem}{Theorem}

\newtheorem{remark}{Remark}
 \usepackage{lscape}
 
\usepackage[margin=2.22cm]{geometry}

\usepackage{lineno}
\makeatletter
\def\ps@pprintTitle{%
 \let\@oddhead\@empty
 \let\@evenhead\@empty
 \def\@oddfoot{\reset@font\hfil\thepage\hfil}
 \let\@evenfoot\@oddfoot
}
\makeatletter

\begin{document}

\begin{frontmatter}
\title{\textcolor{black}{Robust Modeling of Extremes in the Presence of Inliers with Enhanced Tail Estimation}\\
}
\author[label1]{Shivshankar Nila\,\orcidlink{0009-0004-0465-5490}}
\author[label1]{Ishapathik Das\corref{cor1}}
\ead{ishapathik@iittp.ac.in}
\cortext[cor1]{Corresponding author: Ishapathik Das}
\author[label1]{N. Balakrishna}

\address[label1]{Department of Mathematics and Statistics, Indian Institute of Technology Tirupati, Tirupati-517619, Andhra Pradesh, India.}
\begin{abstract}
Extreme value theory provides a fundamental framework for modeling rare and extreme events; however, threshold selection remains a persistent challenge, particularly in the presence of inliers such as instantaneous or early failures. Such observations commonly arise in applications including reliability studies and environmental data, where clusters of observations near the origin or at the origin can substantially distort classical threshold selection procedures and tail inference. In this paper, we propose a robust modeling framework that accounts for inliers, extremes, and the tail proportion. Parameter estimation is carried out using maximum likelihood. The proposed methodology is compared with classical numerical and graphical diagnostic tools, including the mean excess plot, parameter stability plot, Hill plot, and Pickands plot, as well as existing extreme value mixture models.
The theoretical properties of the proposed model are established, and its performance is evaluated through extensive Monte Carlo simulations and real-data applications.  The results demonstrate that the proposed methodology provides more accurate threshold estimation and more reliable extreme-value inference in the presence of inliers compared with existing classical approaches. Overall, the proposed methodology provides more accurate threshold estimation and tail inference in the presence of inliers, addressing key limitations of existing methods.

\end{abstract}

\begin{keyword}
Extreme value theory, Threshold estimation, Graphical diagnostics, Mixture models, Inliers, Tail inference, Risk measures.
\end{keyword}
\end{frontmatter}
\section{Introduction}\label{sec1}
\noindent
Extreme value theory (EVT) provides a statistical framework for estimating the occurrence and magnitude of rare or extreme events that lie beyond the range of available observations, and it aids in long-term prediction. EVT provides a well-established theoretical foundation on which we can build statistical models describing extreme events and offers a scientific alternative to pure guesswork; for details, see \citet{coles2001introduction}.
 Traditional statistical theory is primarily concerned with the centre of probability distributions, and the tail regions are often overlooked. In contrast, EVT focuses on the statistical properties of events with exceptionally high or low magnitudes that occur with low frequency.  The literature on EVT has grown substantially over the last few years, driven by its wide range of applications in hydrology, meteorology, environmental studies, actuarial science, economics, finance, traffic prediction, insurance, and structural engineering. Classical extreme value models are asymptotically motivated and specifically developed to characterise the distributional behaviour of extreme events(\citet{hmingthansanga2026gamma}).

Even though EVT is based on asymptotic modeling to quantify extreme events, selecting an appropriate threshold in the peak-over-threshold (POT) method remains a key challenge for accurately approximating the population tail (\citet{murphy2025automated,schneider2021threshold}.
If the selected threshold is low, it leads to poor GPD approximation and biased return-level estimates, whereas a high threshold increases parameter variance due to fewer exceedances.
Several approaches for threshold selection have been explored, including numerical methods, graphical techniques, and mixture models.
In numerical methods for threshold selection, the $90$th percentile,  one may select the $k$ largest observations, with $k = \sqrt{n}$ (\citet{ferreira2003optimising}) or $k = \frac{\sqrt[2/3]{n}}{\log(\log(n))}$ (\citet{benito2023assessing}).
These methods are simple but often unreliable because datasets differ in their characteristics. Several classical graphical diagnostics, such as the mean excess plot (MEP) (\citet{coles2001introduction}), Pickands plot (PP) (\citet{finkenstadt2003extreme}), Hill plot (HP) (\citet{finkenstadt2003extreme}), and parameter stability plot (PSP) (\citet{coles2001introduction}), are commonly used for threshold selection.
However, these approaches are highly subjective and require considerable expertise, thereby typically ignoring uncertainty in threshold selection.  
Firstly, \citet{behrens2004bayesian} proposed the \textit{extreme value mixture model (EVMM)}, which treats the threshold as an additional parameter and accounts for the uncertainty associated with its estimation.
EVMM consists of two components, a flexible parametric model for observations below the threshold (the ``bulk model'') and a GPD for exceedances above the threshold (the ``tail model''). Various EVMMs with different bulk distributions have been proposed in the literature; some are \citet{carreau2009hybrid,tancredi2006accounting,cabras2011bayesian,solari2012unified,lee2012modeling}, and \citet{do2012semiparametric}.
 More flexible approaches have also been developed using kernel density estimators and Dirichlet process mixtures, as described in \citet{macdonald2012extreme,fuquene2015semi}, and \citet{hu2018evmix}.
While these studies retain the GPD for tail modeling, recent work has studied various bulk and tail components (\citet{marambakuyana2024composite}).
However, the robustness of the tail fit relative to the bulk distribution remains a major concern.  Detailed reviews of threshold selection and mixture models are given in \citet{scarrott2012review} and \citet{dey2016extreme}. 
Accurate estimation of extreme quantiles is crucial when the distribution's tail is of interest. Several studies have improved tail estimation using alternative distributions and estimation procedures, see \cite{pinheiro2016comparative} and \cite{jokiel2024estimation}.
Recently, several studies have focused on tail index and threshold estimation approaches, some of which are presented in \citet{northrop2014improved,schneider2021threshold} and \citet{wu2025heavy}.
More recently, another approach addresses threshold selection but ignores non-exceedances and associated uncertainty, potentially leading to suboptimal inference(\citet{sakthivel2025dual,tukey1977exploratory,andrews2015robust}). Recent studies have explored flexible models related to heavy-tailed data (\citet{hmingthansanga2026gamma}).

Many random events occur in real-life scenarios, such as in life-testing experiments, where an item might fail immediately, resulting in a zero or near-zero lifetime. In rainfall data, days with little to no rain can occur during the rainy season. These dry days or fewer rainy days may cause serious problems, such as drought. 
  Real-world datasets frequently exhibit a high concentration of observations near a particular point \( x_{0} \), whereas the remaining observations follow varying distributions. We refer to such observations as  "early failures" or  "inliers near the point \( x_{0} \)".
  When \( x_{0} = 0 \) and observations at particular \( x_{0} = 0 \), this is specifically termed an "instantaneous failure" or an  "inliers at zero." 
 In such situations, inliers often emerge as a subset of observations that appear inconsistent with the rest of the dataset.
  To accommodate such inlier scenarios for improved inference, \citet{muralidharan2006analysis} proposed a modified Weibull framework.

\subsection*{Study motivation}
\noindent There is further scope for study, as there is no generalised framework for defining EVMMs and the tail fraction above the threshold, both of which are crucial for estimating extreme quantiles. Distributional form misspecification below the threshold affects the tail fraction and the estimation of extreme-value parameters. The robustness of both bulk and tail fits remains a major concern \citep{hu2013extreme, hu2018evmix}. Existing EVMMs rarely account for inliers within the bulk component, which may affect threshold estimation and extreme-value inference. Robust models capable of fitting across different situations are therefore recommended by \cite{behrens2004bayesian}, and \cite{hu2018evmix}. \citet{naveau2016modeling} models low, moderate, and heavy positive rainfall amounts without threshold selection, explicitly excluding dry days; the authors note that incorporating rainfall occurrence (wet/dry events) would be an important extension.

To illustrate the impact of inliers on threshold selection and extreme-value parameter estimation, we consider a motivating dataset containing inliers and characterised by a known threshold \(u = 11.49\), scale parameter \(\sigma = 5\), and shape parameter \(\xi = 0.20\) of the GPD. Figure~\ref{fig:meanexcess} presents the diagnostic plots corresponding to the MEP, PSP, PP, and HP. The figure demonstrates that these classical graphical diagnostic methods may yield misleading estimates of thresholds and tail parameters in the presence of inliers. This highlights the need for a more robust extreme-value modeling framework when the underlying data-generating process contains inliers.

\begin{figure}[!t] 
    \centering
    \includegraphics[width=1.0\textwidth,height=10cm]{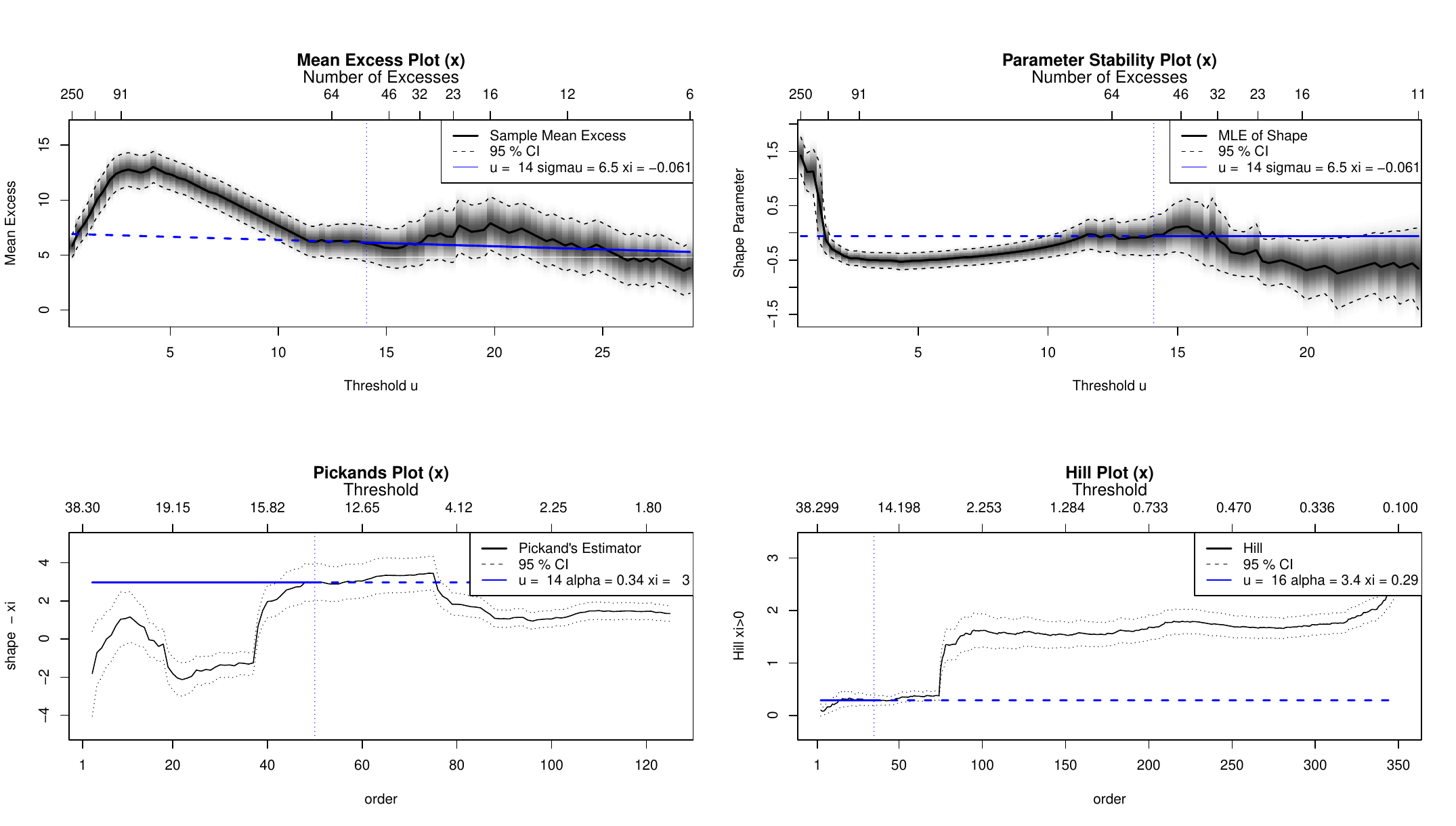}
\caption{
Diagnostic plots of MEP, PSP, PP, and HP for a known dataset characterised by a true threshold \(u = 11.49\), scale parameter \(\sigma = 5\), and shape parameter \(\xi = 0.20\), in the presence of inliers. The results show that the classical graphical methods may provide misleading estimates of the extreme-value parameters \((\xi,\sigma)\).
}
    \label{fig:meanexcess}
\end{figure}


Recently, \citet{nila2025modeling} proposed an extreme value inlier mixture model (EVIMM) for modeling extremes in the presence of instantaneous failures. Their study demonstrated that incorporating inliers improves threshold and tail estimation compared to existing EVMMs. However, when both extreme and non-extreme observations are of interest, accurate modeling of the full distribution becomes essential, as highlighted by \citet{andre2024joint}. Motivated by the need for full distribution modeling discussed in \citet{andre2024joint}, the open problem identified in \citet{nila2025modeling}, and the rainfall scenarios considered in \citet{naveau2016modeling}, we propose a novel extended extreme value inlier mixture model (EEVIMM) that captures inliers, moderate observations, and extreme events within a unified framework. Additionally, we evaluate robustness, goodness-of-fit (GoF), and parameter-estimation accuracy by comparing the proposed EEVIMM with established approaches, including the EVMM \citep{behrens2004bayesian}, the flexible EVMM (FEVMM) \citep{macdonald2011flexible}, the EVIMM \citep{nila2025modeling}, the model proposed by \citet{do2012semiparametric}, and the classical graphical methods. Our analysis examines whether the proposed model improves threshold and tail estimation. Performance is further evaluated using standard GoF tests, including the Anderson-Darling ($AD$), Cramér-von Mises ($CvM$), and Kolmogorov-Smirnov ($KS$) tests, as well as information criteria.

The remainder of this paper is structured as follows. Section~\ref{background} presents background on classical extreme value models and POT. Section~\ref{methodology} presents the proposed methodology, including the model construction, structural properties, risk measures such as Value-at-Risk and Expected Shortfall, algorithms for data simulation and parameter estimation, and the asymptotic properties of the maximum likelihood estimator (MLE). Section \ref{SStudy} presents an extensive Monte Carlo simulation study to evaluate the performance of the proposed methodology. Section \ref{Applications} presents an application of the proposed methodology to real data. Finally, Section~\ref{conclusion} provides a concluding remark and possible future directions.
\section{Background} \label{background}
\subsection{ The classical extreme value model}\label{evt}\label{background1}
\noindent
The popular approach for modeling extremes in EVT is the POT method, which models exceedances above a threshold to analyse tail behaviour, and this methodology is supported by a key result of \citet{Pickands1975}. For a sequence of independent and identically (i.i.d.) distributed observations $\{x_i : i = 1, \ldots, n\}$, the excesses $x-u$ above a sufficiently high threshold \(u\), under mild conditions, can be approximated by a GPD (\cite{davison1990models}), whose cumulative distribution function (CDF) is given by
\begin{equation}
G(x \mid \xi,\sigma,u)=
\Pr(X \le x \,|\, X>u)=
\begin{cases}
1-\left(1+\dfrac{\xi(x-u)}{\sigma}\right)^{-1/\xi}, & \xi \neq 0,\\[2mm]
1-\exp\left(-\dfrac{x-u}{\sigma}\right), & \xi =0,
\end{cases}
\label{eq:GPD_cdf}
\end{equation}
for $u,\xi\in\mathbb{R}$ and $\sigma > 0
$, where the support is $x\geq u$ if $\xi\geq 0$ and $0\leq x\leq u-\sigma/\xi$ if $\xi<0$. Thus, the GPD is bounded if $\xi<0$ and unbounded above if $\xi\geq 0$.
In some applications, an additional implicit parameter is considered, namely the probability of being above the threshold, called the ``tail fraction'' and denoted by $\phi_u = \Pr(X > u)$. This quantity plays an important role in evaluating quantities such as the unconditional survival probability, and is given by
\begin{equation}
\Pr(X > x) = \phi_u \left[1 - \Pr(X \leq x \mid X > u)\right].
\label{eq:unconditional_survival}
\end{equation}
 \noindent 
The MLE of the tail fraction parameter $\phi_u$ is given by the sample proportion of observations above the threshold. The choice of threshold plays an important role in fitting the GPD, as inference varies across thresholds (\cite{tancredi2006accounting}).
\section{Methodology}\label{methodology}
\subsection{Proposed model}
\noindent In this section, we introduce a novel model called the EEVIMM. 
This model is flexible and can address common data issues, such as rainfall, life testing, financial trade durations, and insurance claims. 
Such data often exhibit instantaneous failures ("inliers at zero"), followed by early failures, which are clusters of observations near zero with high frequency (also called "inliers near zero"), and some with higher magnitudes and lower frequency. Standard statistical models may not be sufficiently flexible to capture all these characteristics simultaneously.
The model partitions the support of the non-negative random variable \(X\) into four distinct regions using two parameters \(d\) and \(u\), where \(0 < d < u\).
Let $\boldsymbol{\Theta} = (\phi_1, \phi_2, \Phi_1, d, \Phi_2, \phi_3, u, \xi, \sigma)$
denote the parameter vector, the CDF of the proposed EEVIMM is defined as
\begin{equation}
\label{eq:EEVIMM}
F(x \mid \phi_1, \phi_2, \Phi_1, d, \Phi_2, \phi_3, u, \xi, \sigma) =
\begin{cases}
0,
& \text{if } x < 0, \\[0.6em]
\phi_1,
& \text{if } x = 0, \\[0.6em]
\phi_1 + \phi_2
\displaystyle
\frac{F_1(x \mid \Phi_1)}
{F_1(d \mid \Phi_1)},
& \text{if } 0 < x < d, \\[1.0em]
\phi_1 + \phi_2
+ \left(1-\phi_1-\phi_2-\phi_3\right)
\displaystyle
\frac{F_2(x \mid \Phi_2)-F_2(d \mid \Phi_2)}
     {F_2(u \mid \Phi_2)-F_2(d \mid \Phi_2)},
& \text{if } d \le x < u, \\[1.3em]
(1-\phi_3) + \phi_3\, G(x \mid u,\xi,\sigma),
& \text{if } x \ge u ,
\end{cases}
\end{equation}
where \(0 < \phi_1, \phi_2, \phi_3 < 1\), \(\phi_1+\phi_2+\phi_3 < 1\), \(0<d<u\), \(\sigma>0\), and \(\xi \in \mathbb{R}\). 
The EEVIMM consists of four main components, each aiming to capture different features of the data distribution. Instantaneous failures are captured by the probability mass $\phi_1$ at $x = 0$, while early failures are modeled using an early-range distribution. Any suitable finite-range distribution $F_1(x \mid \Phi_1)$, such as the uniform, beta, or power distribution, may be used to capture this behaviour.
The observations above a sufficiently high threshold \(u\) are modeled using the tail distribution, namely the GPD, denoted by \(G(x \mid u,\xi,\sigma)\), where \(\xi\) controls the heaviness of the tail, \(\sigma > 0\) controls the spread, and \(u\) separates the tail from the rest.
For the intermediate range \(d \leq x < u\), the data are modeled using the bulk distribution. This component captures the main body of the data, excluding instantaneous and early-range events, and the tail distribution. The bulk component can be modeled using distributions such as the gamma, Weibull, beta, or lognormal. The model defined in (\ref{eq:EEVIMM}) is general and requires a suitable choice of the early-range and bulk distributions, depending on the nature of the data. For computational simplicity and stable parameter estimation, a suitable restriction may be imposed on the scale parameter of the bulk distribution, such as the gamma or Weibull distribution, to avoid identifiability issues.
We can express the model given in (\ref{eq:EEVIMM}) as a four-component mixture as
\[
f(x)
= \phi_1 \,\delta_0(x)
+ \phi_2 \, f_1^*(x \mid \Phi_1)
+ (1-\phi_1-\phi_2-\phi_3)\, f_2^*(x \mid \Phi_2)
+ \phi_3 \, g(x \mid u,\xi,\sigma),
\]
where 
\[
\begin{aligned}
f_1^*(x \mid \Phi_1)
&= \frac{f_1(x \mid \Phi_1)}{F_1(d \mid \Phi_1)}\,\mathbb{I}_{(0,d)}(x),
\qquad
f_2^*(x \mid \Phi_2)
= \frac{f_2(x \mid \Phi_2)}
{F_2(u \mid \Phi_2)-F_2(d \mid \Phi_2)}
\,\mathbb{I}_{[d,u)}(x).
\end{aligned}
\]
Here \(\delta_0(x)\) denotes the Dirac delta function at zero, \(f_1(x\mid\Phi_1)\) denotes the density function (DF) corresponding to the early-range distribution CDF \(F_1(x\mid\Phi_1)\), \(f_2(x\mid\Phi_2)\) denotes the DF corresponding to the bulk distribution \(F_2(x\mid\Phi_2)\), and \(g(x\mid u,\xi,\sigma)\) denotes the DF of the GPD \(G(x \mid u,\xi,\sigma)\).
The DF corresponding to the EEVIMM defined in (\ref{eq:EEVIMM}) can be written as
\begin{equation}
f(x \mid \Theta)=
\begin{cases}
\phi_1, & x = 0, \\[6pt]

\displaystyle 
\phi_2 \,
\frac{f_1(x \mid \Phi_1)}{F_1(d \mid \Phi_1)}, 
& 0 < x < d, \\[6pt]

\displaystyle
(1-\phi_1-\phi_2-\phi_3)\,
\frac{f_2(x \mid \Phi_2)}
     {F_2(u \mid \Phi_2)-F_2(d \mid \Phi_2)}, 
& d \le x < u, \\[8pt]

\displaystyle
\phi_3\, g(x \mid u,\xi,\sigma),
& x \ge u .
\end{cases}
\label{eq:EEVIMM_general_pdf}
\end{equation}

In this paper, we use the power distribution to model early failures, while the Weibull distribution with a unit scale parameter is used as the bulk distribution for simulation and parameter estimation.
So, the  DF of the EEVIMM for particular early and bulk distribution is given by
\begin{equation}
\small
\begin{aligned}
f(x)=
\begin{cases}
\phi_1, & x = 0,\\
\phi_2 \dfrac{\theta}{d^{\theta}} x^{\theta-1}, & 0 < x < d,\\
(1-\phi_1-\phi_2-\phi_3)
\dfrac{k x^{k-1} e^{-x^k}}
{e^{-d^k}-e^{-u^k}}, & d \le x < u,\\
\phi_3\, g(x\mid\xi,\sigma,u), & x \ge u,
\end{cases}
\qquad
g(x\mid\xi,\sigma,u)=
\begin{cases}
\dfrac{1}{\sigma}
\left(1+\dfrac{\xi(x-u)}{\sigma}\right)^{-1/\xi-1},
& \xi\neq0,\\[8pt]
\dfrac{1}{\sigma}
e^{-(x-u)/\sigma},
& \xi=0.
\end{cases}
\end{aligned}
\label{evimmpdf}
\end{equation}
here \(\theta > 0\) and \(k > 0\).

We study key properties of the proposed model, including continuity and differentiability at $d$ and $u$, as well as related risk measures.
\subsection{Continuity and differentiability conditions}
\noindent In this subsection, we establish the conditions under which the DF of the EEVIMM is continuous and differentiable at the
transition points $d$ and $u$.
\begin{proposition}[Continuity at $d$ and $u$]
\label{rem:Continuity}
The DF of EEVIMM~\eqref{eq:EEVIMM_general_pdf} is 
continuous at $d$ and the threshold $u$, that is,
\[
f(d^{-}) = f(d^{+})
\quad \text{and} \quad
f(u^{-}) = f(u^{+}),
\]
if and only if the following conditions hold:

\medskip
\noindent
\textbf{(i) Continuity at $d$:}
\[
\phi_2 \frac{f_1(d \mid \Phi_1)}{F_1(d \mid \Phi_1)}
=
\left(1-\phi_1-\phi_2-\phi_3\right)
\frac{f_2(d \mid \Phi_2)}
     {F_2(u \mid \Phi_2)-F_2(d \mid \Phi_2)}.
\]
For the specific choice of the early-range and bulk distribution in \eqref{evimmpdf}, the continuity condition at \(d\) reduces to
\[
\theta
=
\frac{d}{\phi_2}
(1-\phi_1-\phi_2-\phi_3)
\frac{
k
d^{k-1}
\exp\!\left[-d^k\right]
}{
\exp\!\left[-d^k\right]
-
\exp\!\left[-u^k\right]
}.
\]
\noindent
\medskip
\noindent
\textbf{(ii) Continuity at $u$:}
\[
\sigma
=
\frac{
\phi_3 \left[ F_2(u \mid \Phi_2)-F_2(d \mid \Phi_2) \right]
}{
\left(1-\phi_1-\phi_2-\phi_3\right)
f_2(u \mid \Phi_2)
}.
\]
For the specific choice of the bulk distribution in \eqref{evimmpdf}, the continuity condition at the threshold \(u\) reduces to
\[
\sigma
=
\frac{
\phi_3
\left[
\exp\!\left(-d^k\right)
-
\exp\!\left(-u^k\right)
\right]
}{
(1-\phi_1-\phi_2-\phi_3)
k
u^{k-1}
\exp\!\left[-u^k\right]
}.
\]
\noindent
These conditions ensure that the DF~\eqref{evimmpdf} is continuous at \(d\) and \(u\), providing smooth transitions between the early-range and bulk components at \(d\), and between the bulk and tail components at \(u\).
\end{proposition}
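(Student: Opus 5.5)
The plan is to compute the one-sided limits of the density \eqref{eq:EEVIMM_general_pdf} at each transition point and set them equal. Each piece of $f$ is a constant multiple of a density that we assume continuous on a neighbourhood of the relevant point: $f_1$ near $d$, $f_2$ on $[d,u]$, and $g$ at $u^{+}$. So each one-sided limit is just the corresponding piece evaluated at the endpoint.

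\emph{At $d$.} From the left, $f(d^{-})=\phi_2 f_1(d\mid\Phi_1)/F_1(d\mid\Phi_1)$. From the right, $d$ lies in $[d,u)$, so
\[
f(d^{+})=f(d)=(1-\phi_1-\phi_2-\phi_3)\,\frac{f_2(d\mid\Phi_2)}{F_2(u\mid\Phi_2)-F_2(d\mid\Phi_2)}.
\]
Equating the two gives condition (i). This is an equivalence, so both directions of the ``if and only if'' hold at once.

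For the power and Weibull specification in \eqref{evimmpdf}, take $F_1(x)=(x/d_0)^\theta$; after normalization only the ratio matters. Then $f_1(d)/F_1(d)=\theta/d$, and the Weibull piece is
\[
k d^{k-1}e^{-d^k}\big/\bigl(e^{-d^k}-e^{-u^k}\bigr).
\]
Substituting and solving the linear equation for $\theta$ gives the displayed formula. This step needs $\phi_2>0$ and $d>0$, both of which hold by the parameter constraints.

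\emph{At $u$.} From the left, using continuity of $f_2$ at $u$,
\[
f(u^{-})=(1-\phi_1-\phi_2-\phi_3)\,\frac{f_2(u\mid\Phi_2)}{F_2(u\mid\Phi_2)-F_2(d\mid\Phi_2)}.
\]
From the right, $f(u^{+})=\phi_3\,g(u\mid u,\xi,\sigma)$. Setting $x=u$ in the GPD density gives $g(u)=1/\sigma$ in both cases $\xi\neq0$ and $\xi=0$, so the shape parameter $\xi$ drops out of this condition. Equating the two limits and solving for $\sigma$ gives condition (ii). This requires $f_2(u\mid\Phi_2)>0$, which holds for the Weibull since $k u^{k-1}e^{-u^k}>0$, and $1-\phi_1-\phi_2-\phi_3>0$. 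Substituting the unit-scale Weibull forms $F_2(u)-F_2(d)=e^{-d^k}-e^{-u^k}$ and $f_2(u)=k u^{k-1}e^{-u^k}$ yields the specialized expression.

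There is no genuine obstacle here; the argument is bookkeeping. The points that need care are:
\begin{itemize}
\item the right limit at $d$ is the value of the bulk piece, because the intervals are half-open;
\item $g(u)=1/\sigma$ must be verified for both branches of the GPD density;
\item the nonvanishing assumptions used when dividing must be stated explicitly, so that the ``if and only if'' is valid and not merely a sufficient condition.
\end{itemize}
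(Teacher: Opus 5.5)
Your proposal is correct and follows the paper's proof essentially step for step: compute the one-sided limits of the general density at $d$ and $u$ (using $g(u\mid u,\xi,\sigma)=1/\sigma$), equate them, and then substitute the power and unit-scale Weibull forms to solve for $\theta$ and $\sigma$. Your explicit handling of the half-open intervals and of the nonvanishing of $\phi_2$, $1-\phi_1-\phi_2-\phi_3$ and $f_2(u\mid\Phi_2)$ is a small but useful addition, since the paper leaves these points implicit.
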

\begin{proposition}
[Differentiability at the thresholds $d$ and $u$]
\label{rem:differentiability}
The DF of EEVIMM~\eqref{eq:EEVIMM_general_pdf} is  
differentiable at $d$ and the threshold $u$, that is,
\[
f'(d^{-}) = f'(d^{+})
\quad \text{and} \quad
f'(u^{-}) = f'(u^{+}),
\]
if and only if the continuity conditions in Proposition~\ref{rem:Continuity} hold together with the following conditions:\\
\medskip
\noindent
\textbf{(i) Differentiability at $d$:}
\[
\phi_2 \frac{f_1'(d \mid \Phi_1)}{F_1(d \mid \Phi_1)}
=
\left(1-\phi_1-\phi_2-\phi_3\right)
\frac{f_2'(d \mid \Phi_2)}
     {F_2(u \mid \Phi_2)-F_2(d \mid \Phi_2)}.
\]
For the specific choice of the early-range and bulk distributions in \eqref{evimmpdf}, the differentiability condition at \(d\) reduces to
\[
\phi_2 \frac{\theta(\theta-1)}{d^2}
=
\left(1-\phi_1-\phi_2-\phi_3\right)
\frac{
f_2(d \mid k)
\left(
\frac{k-1}{d}
-
k d^{k-1}
\right)
}{
\exp\!\left[-d^k\right]
-
\exp\!\left[-u^k\right]
}.
\]
\medskip
\noindent
\textbf{(ii) Differentiability at $u$:}
\[
\sigma^2
=
-\frac{
\phi_3(1+\xi)\{F_2(u|\Phi_2)-F_2(d|\Phi_2)\}
}{
(1-\phi_1-\phi_2-\phi_3)f_2'(u|\Phi_2)
}.
\]
For the specific choice of the bulk distribution in \eqref{evimmpdf}, the differentiability condition at the threshold \(u\) reduces to
\[
\sigma^2
=
-\frac{
\phi_3(1+\xi)
\left(
\exp[-d^k]-\exp[-u^k]
\right)
}{
\left(1-\phi_1-\phi_2-\phi_3\right)
f_2(u \mid k)
\left(
\frac{k-1}{u}
-
k u^{k-1}
\right)
}.
\]
\noindent
These conditions ensure that the DF of the EEVIMM is differentiable at \(d\) and \(u\), providing smooth transitions between the model components at \(d\) and \(u\).
\end{proposition}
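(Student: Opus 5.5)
The plan is a direct one-sided computation. Differentiability of the piecewise density at an interior junction point means that the left and right derivatives exist and agree. This can only happen if $f$ is continuous there, so Proposition~\ref{rem:Continuity} is taken as a standing requirement. On each open piece the density is a constant multiple of a smooth function: $f_1$ on $(0,d)$, $f_2$ on $[d,u)$, and $g$ on $[u,\infty)$. Hence the one-sided derivatives at $d$ and $u$ are obtained by differentiating each branch of \eqref{eq:EEVIMM_general_pdf} and evaluating the limits. Necessity and sufficiency both follow from the fact that the resulting identities are exactly the equalities $f'(d^-)=f'(d^+)$ and $f'(u^-)=f'(u^+)$.

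\textbf{Step 1: the point $d$.} The normalising constants $F_1(d\mid\Phi_1)$ and $F_2(u\mid\Phi_2)-F_2(d\mid\Phi_2)$ do not depend on $x$. Therefore
\[
f'(d^-)=\phi_2 \frac{f_1'(d\mid\Phi_1)}{F_1(d\mid\Phi_1)}, \qquad f'(d^+)=(1-\phi_1-\phi_2-\phi_3)\frac{f_2'(d\mid\Phi_2)}{F_2(u\mid\Phi_2)-F_2(d\mid\Phi_2)},
\]
and equating the two gives (i). For the power distribution, $f_1(x)=\theta x^{\theta-1}$ and $F_1(d)=d^\theta$, so $f_1'(d)/F_1(d)=\theta(\theta-1)/d^2$. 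For the unit-scale Weibull, $f_2'(x)=f_2(x\mid k)\bigl(\tfrac{k-1}{x}-kx^{k-1}\bigr)$, and $F_2(u)-F_2(d)=e^{-d^k}-e^{-u^k}$. Substituting these gives the displayed special case.

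\textbf{Step 2: the point $u$.} On the right, the GPD density has derivative
\[
g'(x)=-\frac{1+\xi}{\sigma^2}\Bigl(1+\frac{\xi(x-u)}{\sigma}\Bigr)^{-1/\xi-2}.
\]
At $x=u$ this equals $-(1+\xi)/\sigma^2$. The same value arises in the case $\xi=0$, where $g'(u)=-1/\sigma^2$. So $f'(u^+)=-\phi_3(1+\xi)/\sigma^2$. On the left, $f'(u^-)=(1-\phi_1-\phi_2-\phi_3)f_2'(u\mid\Phi_2)/\{F_2(u\mid\Phi_2)-F_2(d\mid\Phi_2)\}$. Setting the two equal and solving for $\sigma^2$ gives (ii), and inserting the Weibull expression for $f_2'(u)$ gives the specialised form.

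\textbf{Main obstacle.} The main care point is the division in (ii). The formula requires $f_2'(u\mid\Phi_2)\neq 0$. In addition, for $\sigma^2>0$ the signs must be compatible: $f_2'(u)<0$ when $\xi>-1$. Otherwise no admissible $\sigma$ exists, and the equivalence should be read as the derivative-matching equation itself. I would state this remark explicitly. I would also note that the continuity condition at $u$, which fixes $\sigma$, combined with (ii) jointly constrains $\xi$. Apart from this, the argument is routine algebra.
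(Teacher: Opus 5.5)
Your proposal is correct and follows the paper's proof essentially line for line: take the continuity conditions of Proposition~\ref{rem:Continuity} as given, differentiate each branch of \eqref{eq:EEVIMM_general_pdf} to get the one-sided derivatives at $d$ and $u$ (using $g'(u)=-(1+\xi)/\sigma^2$ and the Weibull identity $f_2'=f_2\{(k-1)/x-kx^{k-1}\}$), and equate them. Your additional remarks are valid refinements that the paper's proof leaves implicit: the explicit $\xi=0$ case, the need for $f_2'(u\mid\Phi_2)\neq 0$ and a sign-compatible $1+\xi$ so that $\sigma^2>0$, and the observation that continuity and differentiability at $u$ together pin down $\xi$.
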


\subsection{Risk measures}\label{risk}
\noindent Risk measures are tools used to quantify the level of risk associated with a random variable by summarising it into a single value. Several risk measures have been developed in the literature; for details, see \cite{szego2004risk}.
Value-at-Risk (VaR) and Tail-Value-at-Risk (TVaR) are among the most widely used risk measures in risk analysis. These measures are particularly useful for heavy-tailed distributions when information above a given threshold is required. In this section, we derive the VaR and TVaR for the EEVIMM. 
\begin{proposition}[Risk measure: VaR]
\label{prop:var}
Assume that $F_1(\cdot\mid\Phi_1)$,
$F_2(\cdot\mid\Phi_2)$ and
$G(\cdot\mid \xi, \sigma, u)$ admit inverse functions. 
For a probability level $p\in(0,1)$, the VaR denoted by $\mathrm{VaR}_p$, is defined as $\mathrm{VaR}_p = \inf\{x: F(x)\ge p\}$.
Then the $\mathrm{VaR}_p$ of \eqref{eq:EEVIMM} is given by
\[
\mathrm{VaR}_p=
\begin{cases}
0,
& 0 \le p \le \phi_1, \\[10pt]

F_1^{-1}\!\left(
\dfrac{p-\phi_1}{\phi_2} F_1(d\mid\Phi_1)
\mid\Phi_1
\right),
& \phi_1 < p \le \phi_1+\phi_2, \\[12pt]

F_2^{-1}\!\left(
F_2(d\mid\Phi_2)
+
\dfrac{p-(\phi_1+\phi_2)}{1-\phi_1-\phi_2-\phi_3}
\left(
F_2(u\mid\Phi_2)
-
F_2(d\mid\Phi_2)
\right)
\right),
& \phi_1+\phi_2 < p \le 1-\phi_3, \\[14pt]

G^{-1}\!\left(
\dfrac{p-(1-\phi_3)}{\phi_3}
\mid \xi, \sigma, u
\right),
& 1-\phi_3 < p \le 1.
\end{cases}
\]
\end{proposition}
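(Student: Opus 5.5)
The plan is to compute the generalized inverse $\inf\{x: F(x)\ge p\}$ directly from the piecewise CDF \eqref{eq:EEVIMM}. The key preliminary observation is that $F$ is non-decreasing and right-continuous, and its values on the four regions fill four consecutive probability bands. At $x=0$ it equals $\phi_1$. On $(0,d)$ it increases from $\phi_1$ (as $x\downarrow 0$, using $F_1(0\mid\Phi_1)=0$) toward $\phi_1+\phi_2$ as $x\uparrow d$. On $[d,u)$ it increases from $\phi_1+\phi_2$ toward $1-\phi_3$. On $[u,\infty)$ it increases from $1-\phi_3$ toward $1$. I will first check that the pieces agree at the junctions, so that $F$ is continuous at $d$ and $u$ (though not at $0$). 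At $d$ the left limit is $\phi_1+\phi_2$ and the bulk piece evaluated at $d$ is also $\phi_1+\phi_2$. At $u$ the left limit is $\phi_1+\phi_2+(1-\phi_1-\phi_2-\phi_3)=1-\phi_3$, and $G(u)=0$ gives the same value. This identifies which region contains $\mathrm{VaR}_p$ for each $p$.

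Next I will treat the four cases in order.
\begin{itemize}
\item[(a)] If $0\le p\le\phi_1$: since $F(x)=0$ for $x<0$ and $F(0)=\phi_1\ge p$, the infimum is $0$. For $p=0$ one can take the convention that the infimum over the support is $0$.
\item[(b)] If $\phi_1<p\le\phi_1+\phi_2$: every $x\le 0$ has $F(x)\le\phi_1<p$, so the infimum lies in $(0,d]$. There I solve $\phi_1+\phi_2F_1(x)/F_1(d)=p$, which gives $F_1(x)=\frac{p-\phi_1}{\phi_2}F_1(d)\in(0,F_1(d)]$. Because $F_1$ is invertible and increasing on the relevant range, this yields the stated $F_1^{-1}$ expression. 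At the endpoint $p=\phi_1+\phi_2$ it returns $d$, which is consistent with continuity at $d$.
\item[(c)] If $\phi_1+\phi_2<p\le 1-\phi_3$: I solve the bulk equation for $F_2(x)$, which gives $F_2(d)+\frac{p-\phi_1-\phi_2}{1-\phi_1-\phi_2-\phi_3}\{F_2(u)-F_2(d)\}$. This lies in $(F_2(d),F_2(u)]$, so applying $F_2^{-1}$ lands in $(d,u]$.
\item[(d)] If $1-\phi_3<p\le1$: I solve $(1-\phi_3)+\phi_3G(x)=p$.
\end{itemize}

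In each case, because $F$ is strictly increasing on the region, the root is the infimum. The values of $F$ on earlier regions are all strictly below $p$, so no smaller $x$ qualifies.

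I expect the main obstacle to be the careful handling of endpoints and boundary conventions rather than the algebra. Three points need attention. First, the atom at $0$ makes $F$ jump, so case (a) must be argued from the definition of the infimum rather than by inversion. Second, $p=1$ in case (d) gives $G^{-1}(1)$, which is $u-\sigma/\xi$ when $\xi<0$ and $+\infty$ otherwise, so the formula should be read with the extended inverse. Third, "admit inverse functions" should be interpreted as strict monotonicity on the relevant intervals, possibly via generalized inverses. Strict monotonicity is what guarantees that the solution of $F(x)=p$ coincides with $\inf\{x:F(x)\ge p\}$, and I will state this explicitly.
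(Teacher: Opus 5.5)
Your proposal is correct and takes the same approach as the paper, which gives no written proof and simply says Proposition~\ref{prop:var} follows directly from the definition $\mathrm{VaR}_p=\inf\{x:F(x)\ge p\}$ applied to the piecewise CDF \eqref{eq:EEVIMM}. Your case-by-case inversion is that argument written out in full, including the junction checks at $d$ and $u$, the implicit requirement $F_1(0\mid\Phi_1)=0$, and the conventions at $p=0$ and $p=1$ that the paper's statement leaves unaddressed.
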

\begin{proposition}[Risk measure: TVaR]
\label{prop:tvar}
For a given probability level $p$, let $\mathrm{VaR}_p$ be defined as in 
Proposition~\ref{prop:var}. 
The \emph{TVaR} also called Expected Shortfall or conditional tail expectation, is defined for a random variable $X$ as
\[
\mathrm{TVaR}_p(X)
= \mathbb{E}[X \mid X > \mathrm{VaR}_p]
= \frac{\mathbb{E}\!\left[ X \,\mathbf{1}_{\{X > \mathrm{VaR}_p\}} \right]}
       {\mathbb{P}(X > \mathrm{VaR}_p)}
= \frac{1}{1-p} \int_{\mathrm{VaR}_p}^{\infty} x\, f(x)\,dx,
\]
where $f(x)$ is the DF defined in
\eqref{eq:EEVIMM_general_pdf}.
The TVaR at level~$p$ can therefore be expressed as
\begin{equation}
\resizebox{1.11\textwidth}{!}{$
\mathrm{TVaR}_p =
\begin{cases}
\displaystyle
\frac{1}{1-p}\Biggl[
\int_{0^+}^{d} x \,
\phi_2 \frac{f_1(x\mid\Phi_1)}{F_1(d\mid\Phi_1)}\,dx
+
\int_{d}^{u} x \,
\frac{1-\phi_1-\phi_2-\phi_3}
     {F_2(u\mid\Phi_2)-F_2(d\mid\Phi_2)}
     f_2(x\mid\Phi_2)\,dx
+
\int_{u}^{\infty} x\, \phi_3\, g(x\mid \xi,\sigma, u)\,dx
\Biggr],
& 0 \le p \le \phi_1,\\[18pt]

\displaystyle
\frac{1}{1-p}\Biggl[
\int_{\mathrm{VaR}_p}^{d} x \,
\phi_2 \frac{f_1(x\mid\Phi_1)}{F_1(d\mid\Phi_1)}\,dx
+
\int_{d}^{u} x \,
\frac{1-\phi_1-\phi_2-\phi_3}
     {F_2(u\mid\Phi_2)-F_2(d\mid\Phi_2)}
     f_2(x\mid\Phi_2)\,dx
+
\int_{u}^{\infty} x\, \phi_3\, g(x\mid \xi, \sigma, u)\,dx
\Biggr],
& \phi_1 < p \le \phi_1+\phi_2,\\[18pt]

\displaystyle
\frac{1}{1-p}\Biggl[
\int_{\mathrm{VaR}_p}^{u} x \,
\frac{1-\phi_1-\phi_2-\phi_3}
     {F_2(u\mid\Phi_2)-F_2(d\mid\Phi_2)}
     f_2(x\mid\Phi_2)\,dx
+
\int_{u}^{\infty} x\, \phi_3\, g(x\mid \xi, \sigma, u)\,dx
\Biggr],
& \phi_1+\phi_2 < p \le 1-\phi_3,\\[18pt]

\displaystyle
\frac{1}{1-p}
\int_{\mathrm{VaR}_p}^{\infty} x \,\phi_3\, g(x\mid \xi, \sigma, u)\,dx,
& 1-\phi_3 < p \le 1.
\end{cases}
$}
\end{equation}
\noindent
Note the lower limit $0^+$ in the first case excludes the possible atom at $x=0$ of mass $\phi_1$; when $p\le\phi_1$ we have $\mathrm{VaR}_p=0$. For the EEVIMM, these quantities can be obtained analytically through the derived integrals or estimated numerically using a Monte Carlo simulation.
\end{proposition}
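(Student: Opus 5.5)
The TVaR formula follows from the definition by splitting the tail integral along the four regions of the support. First I will check that the identity $\mathbb{E}[X\mathbf{1}_{\{X>\mathrm{VaR}_p\}}]/\mathbb{P}(X>\mathrm{VaR}_p)=\frac{1}{1-p}\int_{\mathrm{VaR}_p}^\infty x f(x)\,dx$ is valid. This needs $\mathbb{P}(X>\mathrm{VaR}_p)=1-F(\mathrm{VaR}_p)=1-p$. By Proposition~\ref{prop:var}, $F$ is continuous and strictly increasing on each piece, with $F(\mathrm{VaR}_p)=p$ whenever $p>\phi_1$. For $p\le\phi_1$ the atom at zero breaks this, since $F(0)=\phi_1\ge p$. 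In that case I will read the normalising constant $1/(1-p)$ as in the stated definition, and note that the exact conditional expectation would use $1-\phi_1$ instead. I will remark on this rather than hide it. The integrand $x\,\delta_0(x)$ contributes nothing in any case, because $x=0$ there, which is why the lower limit $0^+$ is harmless.

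The main step is a case analysis on which region contains $\mathrm{VaR}_p$, using the interval boundaries $\phi_1$, $\phi_1+\phi_2$ and $1-\phi_3$ from Proposition~\ref{prop:var}.
\begin{itemize}
\item $p\le\phi_1$: $\mathrm{VaR}_p=0$, and the integral over $(0,\infty)$ splits into the early-range, bulk and tail pieces. On each piece I substitute the corresponding branch of the DF \eqref{eq:EEVIMM_general_pdf}.
\item $\phi_1<p\le\phi_1+\phi_2$: $\mathrm{VaR}_p\in(0,d]$, so the early-range integral runs from $\mathrm{VaR}_p$ to $d$, and the bulk and tail integrals are complete.
\item $\phi_1+\phi_2<p\le1-\phi_3$: $\mathrm{VaR}_p\in(d,u]$. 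Only a partial bulk integral from $\mathrm{VaR}_p$ to $u$ and the full tail integral remain.
\item $p>1-\phi_3$: $\mathrm{VaR}_p\ge u$, and only the tail term survives.
\end{itemize}
In each case additivity of the Lebesgue integral over the disjoint intervals $(0,d)$, $[d,u)$ and $[u,\infty)$ gives exactly the displayed expression.

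The only real obstacle is finiteness. The early-range and bulk integrals are over bounded intervals of integrable densities, so they are finite. The tail integral $\int x\,\phi_3 g\,dx$ is finite if and only if $\xi<1$, because the GPD mean exists only in that case. I will state this as a standing assumption, since otherwise $\mathrm{TVaR}_p=\infty$ and the formula holds in the extended sense. When $\xi<1$ it is also worth recording the closed form
\[
\int_{\mathrm{VaR}_p}^\infty x\,\phi_3 g\,dx=(1-p)\,\frac{\mathrm{VaR}_p+\sigma-\xi u}{1-\xi}.
\]
This comes from the mean-excess property of the GPD, and it gives an analytic TVaR in the last case. It is consistent with the closing remark that the quantities can be obtained analytically.
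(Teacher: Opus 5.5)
Your proposal is correct and takes the same route as the paper, which only says the result follows directly from the definition, i.e.\ by splitting $\int_{\mathrm{VaR}_p}^{\infty} x f(x)\,dx$ over the regions located by Proposition~\ref{prop:var}. Your additional remarks are sound and go beyond what the paper records: for $p<\phi_1$ the $1/(1-p)$ expression is indeed not $\mathbb{E}[X\mid X>0]=\mathbb{E}[X]/(1-\phi_1)$ (it is the Acerbi--Tasche expected shortfall $\frac{1}{1-p}\int_p^1 \mathrm{VaR}_s\,ds$, because the correction term $\mathrm{VaR}_p\{F(\mathrm{VaR}_p)-p\}$ vanishes when $\mathrm{VaR}_p=0$, which gives a principled reading of the statement), the tail integral is finite exactly when $\xi<1$, and your closed form $(\mathrm{VaR}_p+\sigma-\xi u)/(1-\xi)$ for the last case is correct.
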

The proofs of Propositions~\ref{rem:Continuity} and \ref{rem:differentiability} are provided in Appendix~A, whereas Propositions~\ref{prop:var} and \ref{prop:tvar} follow directly from the definitions of the corresponding risk measures; see \citet{szego2004risk}.

\subsection{Algorithm for data simulation}\label{algorithm}
\noindent In this section, we present an algorithm for simulating data $x_i$, 
$i = 1, 2, 3, \dots, n$, from the proposed model, EEVIMM, with the CDF 
$F(x \mid \bm{\Theta})$
(\ref{eq:EEVIMM}). The simulation algorithm is given in Algorithm~\ref{alg:EEVIMM_general}.
\begin{algorithm}[t!]
\caption{Simulation algorithm for the EEVIMM}
\label{alg:EEVIMM_general}
\begin{algorithmic}[1]
\Require $n$, $\phi_1, \phi_2, \Phi_1, d, \Phi_2, \phi_3, u, \xi, \sigma$
\Ensure A random sample $\{x_1, x_2, \dots, x_n\}$ from the EEVIMM.
\State Set $\phi_4 \gets 1 - \phi_1 - \phi_2 - \phi_3$
\State Start with an empty vector $x$ of length $n$.

\For{$i = 1$ to $n$}

    \State Generate $U \sim \text{Uniform}(0,1)$

    \If{$U \le \phi_1$}
        \State $x_i \gets 0$
        \Comment{Mass at zero}

    \ElsIf{$\phi_1 < U \le \phi_1 + \phi_2$}
        \State $U_1 \gets \dfrac{U - \phi_1}{\phi_2}$
        \Comment{Rescale to $(0,1)$}
        \State $x_i \gets F_1^{-1}\!\left(U_1\,F_1(d \mid \Phi_1) \mid \Phi_1\right)$
        \Comment{Quantile of early-range distribution}

    \ElsIf{$\phi_1 + \phi_2 < U \le 1 - \phi_3$}
        \State $U_2 \gets \dfrac{U - (\phi_1 + \phi_2)}{\phi_4}$
        \Comment{Rescale to $(0,1)$}
        \State Compute $F_{2d} \gets F_2(d \mid \Phi_2)$
        \Comment{CDF of $F_2$ at $d$}
        \State Compute $F_{2u} \gets F_2(u \mid \Phi_2)$
        \Comment{CDF of $F_2$ at $u$}
        \State $p \gets F_{2d} + U_2 (F_{2u} - F_{2d})$
        \State $x_i \gets F_2^{-1}(p \mid \Phi_2)$
        \Comment{Quantile of bulk distribution}

    \Else
        \State $U_3 \gets \dfrac{U - (1 - \phi_3)}{\phi_3}$
        \Comment{Rescale to $(0,1)$}

        \If{$\xi \neq 0$}
            \State $x_i \gets u + \dfrac{\sigma}{\xi}
            \left[(1 - U_3)^{-\xi} - 1\right]$
        \Else
            \State $x_i \gets u - \sigma \log(1 - U_3)$
        \EndIf
        \Comment{Quantile of GPD}
    \EndIf

\EndFor

\State \Return $\{x_1, x_2, \dots, x_n\}$

\end{algorithmic}
\end{algorithm}
\subsection{Parameter estimation}\label{estimation}
\noindent In this section, we present the likelihood function of the proposed EEVIMM and describe the corresponding estimation method and algorithm based on maximum likelihood estimation.
Parameter estimation is performed numerically due to the absence of closed-form solutions. The estimation methodology and the associated numerical algorithm are described in the following subsections.
\subsubsection{Maximum likelihood estimation}\label{mle}
\noindent Let \( \bm{x} = (x_1, x_2, \dots, x_n) \) be a random sample from the EEVIMM $f(x\mid \bm{\Theta})$ \eqref{eq:EEVIMM_general_pdf}. The likelihood function is 
$L(\bm{x}; \bm{\Theta}) = \prod_{i=1}^n f(x_i; \bm{\Theta})$,
where \(f(x_i; \bm{\Theta})\) represents the DF. We introduce indicator functions corresponding to distinct regions of the data, as
\[
I_0(x_i) =
\begin{cases} 
1, & \text{if } x_i = 0, \\
0, & \text{otherwise},
\end{cases}
\quad
I_1(x_i; d) =
\begin{cases} 
1, & \text{if } 0 < x_i < d, \\
0, & \text{otherwise},
\end{cases}
\]
\[
I_2(x_i; d,u) =
\begin{cases} 
1, & \text{if } d \le x_i < u, \\
0, & \text{otherwise},
\end{cases}
\quad \text{and} \quad
I_3(x_i; u) =
\begin{cases} 
1, & \text{if } x_i \ge u, \\
0, & \text{otherwise}.
\end{cases}
\]
\noindent
The log-likelihood function is given by $
\ell(\boldsymbol{\Theta} \mid \bm{x})
:= \log \mathcal{L}(\boldsymbol{\Theta} \mid \bm{x})$,
\begin{equation}
\begin{aligned}
\ell(\boldsymbol{\Theta}\mid\bm{x})
&=
n_0\log\phi_1
+ n_1\log\phi_2
+ n_2\log(1-\phi_1-\phi_2-\phi_3)
+ n_3\log\phi_3
\\
&\quad
- n_2\log\!\left[F_2(u\mid\Phi_2)-F_2(d\mid\Phi_2)\right]
+ \sum_{i=1}^n I_1(x_i;d) \log f_1(x_i\mid\Phi_1) - n_1 \log F_1(d\mid\Phi_1)
\\
&\quad
+ \sum_{i=1}^n I_2(x_i;d,u) \log f_2(x_i\mid\Phi_2)
+\sum_{i=1}^n I_3(x_i;u) \log g(x_i\mid u,\xi,\sigma),
\end{aligned}
\end{equation}
$n_0 = \sum_{i=1}^n I_0(x_i),\ 
n_1 = \sum_{i=1}^n I_1(x_i; d),\ 
n_2 = \sum_{i=1}^n I_2(x_i; d,u),\ 
n_3 = \sum_{i=1}^n I_3(x_i; u)$.
For the specific choice of the early-range and bulk distributions in \eqref{evimmpdf}, the log-likelihood function reduces to
\begin{equation}
\begin{aligned}
\ell(\boldsymbol{\Theta}\mid\bm{x})
&=
n_0 \log \phi_1
+ n_1 \log \phi_2
+ n_1 \log \theta
- n_1 \theta \log d
 +
 (\theta-1)\sum_{i=1}^n I_1(x_i;d)\log x_i
+ n_2 \log(1-\phi_1-\phi_2-\phi_3)
\\
&\quad
- n_2 \log (\exp(-d^k) - \exp(-u^k))
+ \sum_{i=1}^n I_2(x_i;d,u)
\left[
\log k
+ (k-1)\log x_i
- x_i^k
\right]
+
 n_3 \log \phi_3
- n_3 \log \sigma
\\
&\quad
- \left(\frac{1}{\xi}+1\right)
\sum_{i=1}^n I_3(x_i;u)
\log\!\left(1+\xi\frac{x_i-u}{\sigma}\right),
\qquad \xi \neq 0 , ~~\text{and}
\end{aligned}
\end{equation}
\begin{equation}
\begin{aligned}
\ell(\boldsymbol{\Theta}\mid\bm{x})
&=
n_0 \log \phi_1
+ n_1 \log \phi_2
+ n_1 \log \theta
- n_1 \theta \log d
+ (\theta-1)\sum_{i=1}^n I_1(x_i;d)\log x_i
+ n_2 \log(1-\phi_1-\phi_2-\phi_3)
\\
&\quad
- n_2 \log (\exp(-d^k) - \exp(-u^k))
+ \sum_{i=1}^n I_2(x_i;d,u)
\left[
\log k
+ (k-1)\log x_i
- x_i^k
\right]
\\
&\quad
+ n_3 \log \phi_3
- n_3 \log \sigma
- \frac{1}{\sigma}\sum_{i=1}^n I_3(x_i;u)(x_i-u),
\qquad \xi = 0 .
\end{aligned}
\end{equation}
The MLEs of the unknown parameter vector 
\(\boldsymbol{\Theta}= (\phi_1, \phi_2, \theta, d, k, \phi_3, u, \xi, \sigma)\) 
are obtained by maximising the log-likelihood function.
The likelihood function of the EEVIMM is often complex and lacks a closed-form solution, so numerical methods are required to compute the MLEs. 
Optimisation of the likelihood for such EVMMs is often challenging due to the multimodal nature of the likelihood surface. 
In general, the DF of an EVMM need not be continuous at the threshold, which further complicates the optimisation procedure; see \cite{dey2016extreme} for more details.

\subsubsection{Algorithm for parameter estimation}
\noindent This section presents the estimation procedure for estimating the parameter vector \(\Theta\). Direct optimisation of the likelihood for such EVMMs is often challenging due to the multimodal nature of the likelihood surface. Another challenge is that optimisation is very sensitive to the initial threshold values. Various advanced optimisation strategies, such as multiple starting values or stochastic optimisation methods, may also be used to improve numerical stability and reduce the risk of the optimisation routine getting stuck in a local maximum. To overcome the optimisation issues associated with the multimodality of the likelihood surface, a profile-likelihood approach combined with a grid search over candidate threshold values is a better alternative approach, and strongly recommended in \cite{dey2016extreme} and \cite{hu2018evmix}. We use the profile likelihood with a grid search to estimate the threshold parameter \(u\), while the parameter \(d\) is selected using an appropriate GoF measure over the candidate grid.

Optimisation is performed using the L-BFGS-B numerical optimisation algorithm via the \texttt{optim} function from the \texttt{stats} package in \cite{team2020ra}, with a maximum of 20{,}000 iterations to prevent premature termination. Model-based quantile estimation is performed only when the likelihood optimisation algorithm has converged successfully. Appropriate box constraints or user-specified lower and upper bounds are imposed on the parameters to ensure numerical stability. The overall computational procedure is summarised in Algorithm~\ref{alg:EEVIMM_estimation}. This estimation algorithm is used in the simulation study and real data analysis presented in Sections~\ref{SStudy} and~\ref{Applications}.

\begin{algorithm}[!t]
\caption{Estimation algorithm for the EEVIMM}
\label{alg:EEVIMM_estimation}
\begin{algorithmic}[1]

\State \textbf{Step 1.}
Given $\{x_1,\ldots,x_n\}$, construct a candidate grid
$\mathcal{D}=\{d_1,\ldots,d_m\}$ from empirical quantiles of the data. For each candidate value $d$, let $\mathcal{X}_d=\{x_i:0<x_i<d\}$. The estimate $\hat d$ is selected using a GoF measure, such as minimising the Kolmogorov-Smirnov distance
\[
\hat d=\arg\min_{d\in\mathcal{D}}
\sup_{x\in(0,d)}
\left|
\widehat F_{\mathcal{X}_d}(x)-
\frac{F_1(x\mid\Phi_1)}{F_1(d\mid\Phi_1)}
\right|,
\]
where $\widehat F_{\mathcal{X}_d}$ denotes the empirical CDF of $\mathcal{X}_d$. Obtain initial values $\Phi_1^{(0)}$ from $\mathcal{X}_{\hat d}$ using moment-based estimators or MLE, depending on the chosen form of $F_1(\cdot)$.

\State \textbf{Step 2.}
Let $\mathcal{U}=\{u_1,\ldots,u_L\}$ be a grid of threshold candidates selected from high empirical quantiles, satisfying $u>\hat d$ and
$\sum_{i=1}^n I(x_i\ge u)\ge m_0$, where $m_0$ denotes the minimum number of exceedances.
\State \textbf{Step 3.}
For each $u\in\mathcal{U}$, estimate the mixing probabilities either using sample proportions or their closed-form MLEs,
\[
\hat{\phi}_1=\frac{n_0}{n},\qquad
\hat{\phi}_2=\frac{n_1}{n},\qquad
\hat{\phi}_3(u)=\frac{n_3}{n}.
\]
Evaluate the profile log-likelihood
\[
\ell_p(u)=\max_{(\Phi_1,\Phi_2,\xi,\sigma)}
\ell(\Phi_1,\Phi_2,\xi,\sigma;\hat d,u,\hat\phi_1,\hat\phi_2,\hat\phi_3(u)).
\]
\State \textbf{Step 4.}
The threshold estimate $\hat u$ is obtained by maximising the profile
log-likelihood over the candidate grid $\mathcal{U}$, i.e.,
\[
\hat u = \arg\max_{u \in \mathcal{U}} \, \ell_p(u).
\]

\State \textbf{Step 5.}
Obtain initial values $(\xi^{(0)},\sigma^{(0)})$ by fitting a GPD to the exceedances $\{x_i:x_i\ge \hat u\}$, and obtain $\Phi_2^{(0)}$ using observations in $\{x_i:\hat d\le x_i<\hat u\}$.
\State \textbf{Step 6.}
Given $(\hat d,\hat u)$, the mixing probabilities
$(\hat\phi_1,\hat\phi_2,\hat\phi_3)$ are computed again using MLE or the sample-proportion
estimators defined in Step~3, evaluated at $u=\hat u$.
\State \textbf{Step 7.}
Finally, given $(\hat d,\hat u)$ and $(\hat\phi_1,\hat\phi_2,\hat\phi_3)$, estimate the remaining parameters $(\Phi_1,\Phi_2,\xi,\sigma)$ by maximising the log-likelihood
\[
(\hat\Phi_1,\hat\Phi_2,\hat\xi,\hat\sigma)
=
\arg\max_{(\Phi_1,\Phi_2,\xi,\sigma)\in\mathcal{B}}
\ell(\Phi_1,\Phi_2,\xi,\sigma;\hat d,\hat u,\hat\phi_1,\hat\phi_2,\hat\phi_3),
\]
using the L-BFGS-B algorithm with initial values
$(\Phi_1^{(0)},\Phi_2^{(0)},\xi^{(0)},\sigma^{(0)})$
obtained from Steps~1 and~5, where $\mathcal{B}$ denotes user-specified lower and upper bounds imposed for numerical stability. The resulting estimator is
\[
\hat{\boldsymbol{\Theta}}
=
(\hat\phi_1,\hat\phi_2,\hat\Phi_1,\hat d,\hat\Phi_2,\hat\phi_3,\hat u,\hat\xi,\hat\sigma).
\]
\end{algorithmic}
\end{algorithm}

\subsection{Asymptotic distribution of the MLE}
In this Subsection, we establish the asymptotic properties of the MLEs of the proposed model parameters. The derivation of the asymptotic distribution is carried out under suitable regularity conditions and assumptions on the threshold and tail parameters.

\begin{remark}[Assumption: threshold (\(u\)) and paramter (\(d\))]
For the development of the asymptotic theory, the threshold \(u\) and cut-point \(d\) are assumed to be known and fixed at \(u=u_0\) and \(d=d_0\). Treating them as unknown parameters may violate the regularity conditions required for the asymptotic distribution of the estimators. Accordingly, the score functions and Fisher information matrix are derived under this assumption. In practice, \(u\) may be estimated using profile likelihood over a grid of candidate thresholds, while \(d\) may be selected using a GoF measure, profile likelihood, or model selection criteria. Once selected, \(u_0\) and \(d_0\) are treated as fixed in the subsequent inference.
\end{remark}

\begin{remark}[Regularity assumptions for the GPD shape parameter (\(\xi\))]
The asymptotic normality of the MLE of \(\xi\) holds under suitable regularity conditions, including finite Fisher information. The MLE of the $\xi$ does not satisfy the regularity conditions when $\xi \in (-1,-0.5)$; moreover, MLEs do not exist for $\xi < -1$, as the likelihood function can become unbounded (~\citet{castillo1997fitting,do2012semiparametric}).
It is common practice to assume that $\xi$ lies within the interval $-0.5 < \xi < 0.5$, since this range is frequently encountered in practical applications as noted in~\citet{hosking1985estimation}.
\end{remark}

\noindent \textbf{Notation and definitions:} \\
For \(X \geq u_{0}\), let \(X \sim \mathrm{GPD}(\xi, \sigma, u_{0})\), and define the standardized variable \(Z = \frac{X-u_{0}}{\sigma} \sim \mathrm{GPD}(\xi,1,0)\), where \(\mathbb{E}_Z[\cdot]\) denotes expectation with respect to \(Z\). For simplicity, let \(\phi_4 = 1-\phi_1-\phi_2-\phi_3\).
\subsubsection{Score Functions: Case \texorpdfstring{\(\xi \ne 0\)}{xi != 0}}
\noindent Using the EEVIMM DF defined in ~\eqref{evimmpdf}, the log-likelihood function for a single observation \(x\) is given by
\begin{equation}
\ell(\boldsymbol{\Theta}\mid x)=
\begin{cases}
\log(\phi_1), & x = 0, \\[8pt]

\log(\phi_2) + \log \theta - \theta \log d_0 + (\theta - 1)\log x,
& 0 < x < d_0, \\[10pt]

\log(1 - \phi_1 - \phi_2 - \phi_3)
+ \log k
+ (k - 1)\log x
- x^k
- \log\!\left( e^{-d_0^k} - e^{-u_0^k} \right),
& d_0 \le x < u_0, \\[12pt]

\log(\phi_3)
- \log \sigma
-
\left(\frac{1}{\xi} + 1\right)
\log\!\left(1 + \frac{\xi(x-u_0)}{\sigma}\right),
& x \ge u_0,\ \xi\neq0, \\[12pt]

\log(\phi_3)
-\log\sigma
-\dfrac{x-u_0}{\sigma},
& x\ge u_0,\ \xi=0.
\end{cases}
\label{loglikelihood_eevimm}
\end{equation}
\noindent Let 
$\boldsymbol{\Theta}^{*} = (\phi_1, \phi_2, \theta, k, \phi_3, \xi, \sigma)^\top$ be the parameter vector. The corresponding score functions are
\begin{equation*}
\small
\frac{\partial \ell}{\partial \phi_1} =
\begin{cases}
\dfrac{1}{\phi_1}, & x = 0, \\
-\dfrac{1}{\phi_4}, & d_0 \le x < u_0, \\
0, & \text{otherwise},
\end{cases}
\quad
\frac{\partial \ell}{\partial \phi_2} =
\begin{cases}
\dfrac{1}{\phi_2}, & 0 < x < d_0, \\
-\dfrac{1}{\phi_4}, & d_0 \le x < u_0, \\
0, & \text{otherwise},
\end{cases}
\quad
\frac{\partial \ell}{\partial \phi_3} =
\begin{cases}
-\dfrac{1}{\phi_4}, & d_0 \le x < u_0, \\
\dfrac{1}{\phi_3}, & x \ge u_0, \\
0, & \text{otherwise}.
\end{cases}
\end{equation*}

\begin{equation*}
\small
\frac{\partial \ell}{\partial \theta} =
\begin{cases}
\dfrac{1}{\theta}-\log d_0+\log x, & 0<x<d_0,\\[4pt]
0, & \text{otherwise},
\end{cases}
\quad\quad
\frac{\partial \ell}{\partial k} =
\begin{cases}
\dfrac{1}{k}+\log x-x^k\log x
+\dfrac{d_0^k\log d_0\,e^{-d_0^k}-u_0^k\log u_0\,e^{-u_0^k}}
{e^{-d_0^k}-e^{-u_0^k}},
& d_0\le x<u_0,\\[8pt]
0, & \text{otherwise}.
\end{cases}
\end{equation*}
\begin{equation*}
\frac{\partial \ell}{\partial \sigma} =
\begin{cases}
-\dfrac{1}{\sigma}
+
\left(\dfrac{1}{\xi}+1\right)
\dfrac{\xi Z}{\sigma(1+\xi Z)},
& x \ge u_0, \\[8pt]
0, & \text{otherwise},
\end{cases}
\quad\quad
\frac{\partial \ell}{\partial \xi} =
\begin{cases}
\dfrac{1}{\xi^2}\log(1+\xi Z)
-\left(\dfrac{1}{\xi}+1\right)
\dfrac{Z}{1+\xi Z},
& x \ge u_0, \\[8pt]
0, & \text{otherwise}.
\end{cases}
\end{equation*}
One can verify that
\begin{align*}
&\mathbb{E}\!\left[\frac{\partial}{\partial \phi_1}\ell(X)\right]=0,
\quad
\mathbb{E}\!\left[\frac{\partial}{\partial \phi_2}\ell(X)\right]=0,
\quad
\mathbb{E}\!\left[\frac{\partial}{\partial \phi_3}\ell(X)\right]=0,
\\[6pt]
&\mathbb{E}\!\left[\frac{\partial}{\partial \theta}\ell(X)\right]=0,
\quad
\mathbb{E}\!\left[\frac{\partial}{\partial k}\ell(X)\right]=0,
\quad
\mathbb{E}\!\left[\frac{\partial}{\partial \sigma}\ell(X)\right]=0,
\quad
\mathbb{E}\!\left[\frac{\partial}{\partial \xi}\ell(X)\right]=0,
\end{align*}
\noindent The Fisher information matrix and its \((i,j)\)-th element are given by
\begin{align}
\mathcal{I}(\boldsymbol{\Theta}^{*})
&=
- \mathbb{E}\!\left[
\frac{\partial^2 \ell(\boldsymbol{\Theta}^{*}\mid \mathbf{x})}
{\partial \boldsymbol{\Theta}^{*}\,\partial (\boldsymbol{\Theta}^{*})^{T}}
\right],
\quad \text{and} \quad
\mathcal{I}_{ij}(\boldsymbol{\Theta}^{*})
=
- \mathbb{E}\!\left[
\frac{\partial^2 \ell(\boldsymbol{\Theta}^{*}\mid \bm{x})}
{\partial \Theta^{*}_i\,\partial \Theta^{*}_j}
\right].
\end{align}
where $\Theta_i^{*}$ denotes the $i$-th component of $\boldsymbol{\Theta}^{*}$. So, the elements of the Fisher information matrix are
\begin{align*}
\mathcal{I}_{\phi_1\phi_1}
&=
\frac{1}{\phi_1}+\frac{1}{\phi_4},
\qquad
\mathcal{I}_{\phi_2\phi_2}
=
\frac{1}{\phi_2}+\frac{1}{\phi_4},
\qquad
\mathcal{I}_{\phi_3\phi_3}
=
\frac{1}{\phi_3}+\frac{1}{\phi_4},
\qquad
\mathcal{I}_{\phi_1\phi_2}
=
\mathcal{I}_{\phi_1\phi_3}
=
\mathcal{I}_{\phi_2\phi_3}
=
\frac{1}{\phi_4},
\qquad
\mathcal{I}_{\theta\theta} = \frac{\phi_2}{\theta^2}.
\end{align*}
\begin{align*}
\mathcal{I}_{kk}
&=
\phi_4
\left[
\frac{1}{k^2}
+
\mathbb{E}_{X\in[d_0,u_0)}
\left\{X^k(\log X)^2\right\}
-
\frac{\partial^2}{\partial k^2}
\log\!\left(e^{-d_0^k}-e^{-u_0^k}\right)
\right].
\end{align*}
\begin{align*}
\mathcal{I}_{\sigma\sigma}
&=
\frac{\phi_3}{\sigma^2}
\left[
\left(\frac{1}{\xi}+1\right)
\mathbb{E}_Z
\left(
\frac{\xi Z(2+\xi Z)}{(1+\xi Z)^2}
\right)
-
1
\right],
\qquad
\mathcal I_{\sigma\xi}
=
\phi_3\,\mathbb{E}_Z
\left[
\dfrac{Z}{\sigma \xi (1+\xi Z)}
-
\left(\dfrac{1}{\xi}+1\right)
\dfrac{Z}{\sigma(1+\xi Z)^2}
\right].
\end{align*}

\begin{align*}
\mathcal I_{\xi\xi}
=
\phi_3\,
\mathbb{E}_Z
\left[
\frac{2}{\xi^3}\log(1+\xi Z)
-
\frac{2}{\xi^2}\frac{Z}{1+\xi Z}
-
\left(\frac{1}{\xi}+1\right)
\frac{Z^2}{(1+\xi Z)^2}
\right],
\qquad
\mathcal{I}_{\phi_1\theta}
=
\mathcal{I}_{\phi_1 k}
=
\mathcal{I}_{\phi_1 \xi}
=
\mathcal{I}_{\phi_1 \sigma}
=
\mathcal{I}_{\phi_2\theta}
=
0.
\end{align*}

\[
\mathcal{I}_{\phi_2 k}
=
\mathcal{I}_{\phi_2 \xi}
=
\mathcal{I}_{\phi_2 \sigma}
=
\mathcal{I}_{\phi_3\theta}
=
\mathcal{I}_{\phi_3 k}
=
\mathcal{I}_{\phi_3 \xi}
=
\mathcal{I}_{\phi_3 \sigma}
=
\mathcal I_{\theta k}
=
\mathcal I_{\theta\xi}
=
\mathcal I_{\theta\sigma}
=
\mathcal I_{k\xi}
=
\mathcal I_{k\sigma}
=
0.
\]
\begin{theorem}[Asymptotic normality of the MLE for $\xi \neq 0$]
Let \( X_1, X_2, \dots, X_n \) be i.i.d. observations from the EEVIMM with known \( d_0>0 \), \( u_0>d_0 \), and parameter vector
\[
\boldsymbol{\Theta}^{*} = (\phi_1, \phi_2, \theta, k, \phi_3, \xi, \sigma)^\top
\in (0,1)^2 \times (0,\infty)^2 \times (0,1) \times \left( (-0.5,\infty)\setminus\{0\} \right) \times (0,\infty),
\]
with \(0<\phi_1,\phi_2,\phi_3<1\) and \(\phi_1+\phi_2+\phi_3<1\). 
Suppose that the true parameter vector \( \boldsymbol{\Theta}^{*}_0 \) belongs to the interior of the parameter space and that the usual regularity conditions for maximum likelihood estimation are satisfied.
Let \( \widehat{\boldsymbol{\Theta}^{*}}_n \) be the MLE of $\boldsymbol{\Theta}^{*}_0$ based on the sample of size \( n \). Then, as \( n \to \infty \),
\[
\sqrt{n} \left( \widehat{\boldsymbol{\Theta}^{*}}_n - \boldsymbol{\Theta}^{*}_0 \right) 
\xrightarrow{d} \mathcal{N}_7 \left( \bm{0}, \, \mathcal{I}^{-1}(\boldsymbol{\Theta}^{*}_0) \right),
\]
where $\mathcal{I}(\boldsymbol{\Theta}^{*}_0)$ denotes the Fisher information matrix corresponding to a single observation, evaluated at the true parameter value \( \boldsymbol{\Theta}^{*}_0 \), and having entries $\mathcal{I}_{ij}(\boldsymbol{\Theta}^{*}_0)$.
The Fisher information matrix is given by
\[
\mathcal{I}(\boldsymbol{\Theta}^{*}_0) =
\begin{bmatrix}
\mathcal{I}_{\phi_1\phi_1} & \mathcal{I}_{\phi_1\phi_2} & 0 & 0 & \mathcal{I}_{\phi_1\phi_3} & 0 & 0 \\[6pt]
\mathcal{I}_{\phi_2\phi_1} & \mathcal{I}_{\phi_2\phi_2} & 0 & 0 & \mathcal{I}_{\phi_2\phi_3} & 0 & 0 \\[6pt]
0 & 0 & \mathcal{I}_{\theta\theta} & 0 & 0 & 0 & 0 \\[6pt]
0 & 0 & 0 & \mathcal{I}_{kk} & 0 & 0 & 0 \\[6pt]
\mathcal{I}_{\phi_3\phi_1} & \mathcal{I}_{\phi_3\phi_2} & 0 & 0 & \mathcal{I}_{\phi_3\phi_3} & 0 & 0 \\[6pt]
0 & 0 & 0 & 0 & 0 & \mathcal{I}_{\xi\xi} & \mathcal{I}_{\xi\sigma} \\[6pt]
0 & 0 & 0 & 0 & 0 & \mathcal{I}_{\sigma\xi} & \mathcal{I}_{\sigma\sigma}
\end{bmatrix}.
\]
\end{theorem}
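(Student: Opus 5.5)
With $d_0$ and $u_0$ fixed, the plan is to reduce the statement to the classical Cramér theorem for i.i.d. regular parametric families. Once $d_0,u_0$ are known, the support partition $\{0\},(0,d_0),[d_0,u_0),[u_0,\infty)$ no longer depends on $\boldsymbol{\Theta}^{*}$. The density in \eqref{evimmpdf}, taken with respect to the dominating measure $\delta_0+\text{Lebesgue}$, then has a fixed support on each piece. Within the tail piece, $\xi>-0.5$ keeps the GPD upper endpoint $u_0-\sigma/\xi$ parameter dependent, but this is the standard GPD regular case.

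The first step is to observe that the log-likelihood \eqref{loglikelihood_eevimm} factorises additively into four blocks:
\begin{itemize}
\item a multinomial block in $(\phi_1,\phi_2,\phi_3)$ with counts $(n_0,n_1,n_2,n_3)$;
\item a truncated-power block in $\theta$;
\item a doubly truncated Weibull block in $k$;
\item a GPD block in $(\xi,\sigma)$.
\end{itemize}
Each continuous block depends only on the observations falling in its region.

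The second step verifies the regularity conditions block by block. For each block I will check:
\begin{itemize}
\item identifiability, which holds because the maps $\phi\mapsto$ cell probabilities, $\theta\mapsto$ power density, $k\mapsto$ truncated Weibull density and $(\xi,\sigma)\mapsto$ GPD density are all injective;
\item that the score expectations vanish, already displayed before the theorem;
\item the identity $\mathbb{E}[\nabla\ell\nabla\ell^{T}]=-\mathbb{E}[\nabla^2\ell]$, obtained by differentiating under the integral sign twice;
\item a local integrable dominating bound on third derivatives.
\end{itemize}
The power and truncated Weibull blocks live on compact intervals bounded away from $0$ and $\infty$ (except $\log x$ near $0$ in the power block, where $\int_0^{d_0}|\log x|^m x^{\theta-1}dx<\infty$). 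Their third derivatives are therefore dominated by integrable envelopes uniformly in a neighbourhood of the truth. The multinomial block is trivially regular in the interior of the simplex.

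The third step assembles the Fisher information and shows it is block diagonal and positive definite. The off-block zeros listed before the theorem follow because the second derivatives mixing parameters of different blocks vanish identically. The $\phi$-block is the multinomial information $\mathrm{diag}(1/\phi_1,1/\phi_2,1/\phi_3)+\phi_4^{-1}\mathbf{1}\mathbf{1}^{T}$, which is positive definite. The entries $\mathcal{I}_{\theta\theta}=\phi_2/\theta^2$ and $\mathcal{I}_{kk}$ are positive because $\mathcal{I}_{kk}=\phi_4\,\mathrm{Var}(\partial_k\log f_2^*)>0$. The GPD block equals $\phi_3$ times the Smith (1985) GPD information matrix, which is finite and positive definite for $\xi>-1/2$. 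With invertibility established, the standard argument applies: consistency of a root of the likelihood equations, Taylor expansion of the score, the CLT for $n^{-1/2}\sum\nabla\ell(X_i)$ with covariance $\mathcal{I}$, and the LLN for the Hessian. Slutsky's theorem then gives $\sqrt{n}(\widehat{\boldsymbol{\Theta}^{*}}_n-\boldsymbol{\Theta}^{*}_0)\to\mathcal{N}_7(\mathbf{0},\mathcal{I}^{-1})$.

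The main obstacle is the GPD block. Near $\xi\to0$ and for negative $\xi$ the support endpoint moves with $(\xi,\sigma)$, so differentiation under the integral needs care. I would handle it by citing Smith's result for $\xi>-1/2$, where the density vanishes at the endpoint fast enough that boundary terms disappear and the scores have finite variance. I would then note that the indicator $\mathbb{I}\{x\ge u_0\}$ merely scales the information by $\phi_3$. The exclusion of $\xi=0$ only avoids the removable singularity in the displayed formulas; the information is continuous there.
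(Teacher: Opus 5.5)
Your proposal is correct and takes essentially the paper's route. With $d_0,u_0$ fixed, the log-likelihood splits additively into multinomial, truncated-power, truncated-Weibull and GPD blocks, so all cross-block second derivatives vanish, and the result is standard MLE asymptotics with the block-structured information built from the scores. The only difference is rigour: the paper simply assumes the ``usual regularity conditions'', whereas you check them block by block and handle the parameter-dependent GPD endpoint for $-1/2<\xi<0$ via Smith's result; your characterisation $\mathcal{I}_{kk}=\phi_4\,\mathrm{Var}(\partial_k\log f_2^*)$ is also safer than the paper's displayed formula, in which the term $\partial_k^2\log\bigl(e^{-d_0^k}-e^{-u_0^k}\bigr)$ should enter with a plus sign rather than a minus sign.
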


\subsubsection{Score Functions: Case \texorpdfstring{\(\xi = 0\)}{xi = 0}}
\noindent Let the parameter vector be 
$\boldsymbol{\Theta}^{*}=(\phi_1,\phi_2,\phi_3,\theta,k,\sigma)$.  
Using the log-likelihood function in~\eqref{loglikelihood_eevimm}, the tail distribution reduces to an exponential distribution with mean $\sigma$. The corresponding derivatives with respect to \(\sigma\) are
\begin{align*}
\frac{\partial \ell}{\partial \sigma}
&=
\begin{cases}
-\dfrac{1}{\sigma} + \dfrac{x-u_0}{\sigma^2},
& x \ge u_0, \\[8pt]
0, & \text{otherwise},
\end{cases}
\qquad
\frac{\partial^2 \ell}{\partial \sigma^2}
=
\begin{cases}
\dfrac{1}{\sigma^2}
-\dfrac{2(x-u_0)}{\sigma^3},
& x \ge u_0, \\[8pt]
0, & \text{otherwise}.
\end{cases}
\end{align*}
\noindent For \(X \ge u_0\), we have \(Y = X - u_0 \sim \mathrm{Exp}(\sigma)\) with \(\mathbb{E}[Y] = \sigma\).  
The Fisher information entries involving \(\sigma\) are
$\mathcal{I}_{\sigma\sigma} = \frac{\phi_{3}}{\sigma^{2}}, 
\quad
\mathcal{I}_{\phi_1\sigma}
=
\mathcal{I}_{\phi_2\sigma}
=
\mathcal{I}_{\phi_3\sigma}
=
\mathcal{I}_{\theta\sigma}
=
\mathcal{I}_{k\sigma}
=0,$
and the remaining entries of the Fisher information matrix are the same as those in the case \(\xi \ne 0\), excluding the entries involving \(\xi\).
\begin{theorem}[Asymptotic normality of the MLE for $\xi = 0$]
Let \( X_1, X_2, \dots, X_n \) be i.i.d. observations from the EEVIMM with known \( d_0>0 \), \( u_0>d_0 \), and parameter vector
\[
\boldsymbol{\Theta}^{*} = (\phi_1, \phi_2, \theta, k, \phi_3,  \sigma)^\top
\in (0,1)^2 \times (0,\infty)^2 \times (0,1) \times (0,\infty),
\]
with \(0<\phi_1,\phi_2,\phi_3<1\) and \(\phi_1+\phi_2+\phi_3<1\). Suppose that the true parameter vector \( \boldsymbol{\Theta}^{*}_0 \) belongs to the interior of the parameter space and that the usual regularity conditions for maximum likelihood estimation are satisfied.
Let \( \widehat{\boldsymbol{\Theta}^{*}}_n \) be the MLE of $\boldsymbol{\Theta}^{*}_0$ based on the sample of size \( n \). Then, as \( n \to \infty \),
\[
\sqrt{n} \left( \widehat{\boldsymbol{\Theta}^{*}}_n - \boldsymbol{\Theta}^{*}_0 \right) 
\xrightarrow{d} \mathcal{N}_6 \left( \bm{0}, \, \mathcal{I}^{-1}(\boldsymbol{\Theta}^{*}_0) \right),
\]
where $\mathcal{I}(\boldsymbol{\Theta}^{*}_0)$ denotes the Fisher information matrix corresponding to a single observation, evaluated at the true parameter value \( \boldsymbol{\Theta}^{*}_0 \), and having entries $\mathcal{I}_{ij}(\boldsymbol{\Theta}^{*}_0)$.
The Fisher information matrix is defined similarly as in the case \(\xi \neq 0\), with the row and column corresponding to \(\xi\) removed.
\end{theorem}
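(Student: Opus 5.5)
With $d_0,u_0$ fixed and $\xi=0$, the density is a member of a smooth parametric family in $\boldsymbol{\Theta}^{*}=(\phi_1,\phi_2,\theta,k,\phi_3,\sigma)$ whose support does not depend on the parameters. The plan is to invoke the classical Cramér-type theorem for i.i.d. MLEs. I will check its hypotheses directly rather than appeal to them abstractly.

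\textbf{Step 1: factorization.} Write the single-observation log-likelihood in \eqref{loglikelihood_eevimm} (case $\xi=0$) as a sum of four blocks:
\begin{itemize}
\item a multinomial block in $(\phi_1,\phi_2,\phi_3)$ with cell probabilities $(\phi_1,\phi_2,\phi_4,\phi_3)$;
\item a truncated power-distribution block in $\theta$ on $(0,d_0)$;
\item a truncated Weibull block in $k$ on $[d_0,u_0)$;
\item an exponential block in $\sigma$ on $[u_0,\infty)$.
\end{itemize}
Each block is multiplied by the indicator of its region. This is the structural fact behind the block-diagonal information matrix listed before the theorem. Because the indicators are disjoint and the region probabilities are free of $\theta,k,\sigma$, every mixed second derivative across blocks either vanishes identically or has zero conditional expectation. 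This confirms the off-diagonal zeros.

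\textbf{Step 2: regularity conditions.}
\begin{itemize}
\item \emph{Identifiability.} This follows because each block is identifiable: the multinomial probabilities, the power exponent $\theta$, the Weibull shape $k$ given fixed truncation points, and the exponential mean.
\item \emph{Smoothness and support.} The log-density is $C^3$ in $\boldsymbol{\Theta}^{*}$ on the open parameter set, and the support is fixed.
\item \emph{Interchange of differentiation and integration.} On the bounded region $[d_0,u_0)$ the Weibull terms $x^k(\log x)^j$ are bounded. On $(0,d_0)$ the terms $(\log x)^j$ are integrable against $x^{\theta-1}$. On the tail, $Y=X-u_0$ has all moments. Hence the scores have mean zero, as already verified, and $\mathcal I=\mathbb E[ss^\top]=-\mathbb E[\nabla^2\ell]$.
\item \emph{Third-derivative domination.} I would bound the third derivatives, uniformly over a compact neighbourhood of $\boldsymbol{\Theta}^{*}_0$, by integrable envelopes: polynomials in $|\log x|$ near zero, bounded functions on $[d_0,u_0]$, and polynomials in $y$ for the exponential part.
\end{itemize}

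\textbf{Step 3: positive definiteness.} Show that $\mathcal I(\boldsymbol{\Theta}^{*}_0)$ is positive definite block by block:
\begin{itemize}
\item The $3\times3$ block equals $\mathrm{diag}(1/\phi_1,1/\phi_2,1/\phi_3)+\phi_4^{-1}\mathbf 1\mathbf 1^\top$, which is positive definite.
\item $\mathcal I_{\theta\theta}=\phi_2/\theta^2>0$.
\item $\mathcal I_{\sigma\sigma}=\phi_3/\sigma^2>0$.
\item $\mathcal I_{kk}$ equals $\phi_4$ times the variance of the $k$-score under the truncated Weibull. This is positive because $\log x - x^k\log x$ is non-constant on $[d_0,u_0)$.
\end{itemize}

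\textbf{Step 4: conclusion.} Take a consistent root of the likelihood equations, via Wald's argument or the local-concavity version of Cramér's theorem. Expand the score around the true value; the CLT gives $n^{-1/2}\sum s_i\to\mathcal N_6(\mathbf 0,\mathcal I)$, and the LLN plus Step 2 control the Hessian and remainder. Slutsky's lemma then gives the stated limit $\mathcal N_6(\mathbf 0,\mathcal I^{-1}(\boldsymbol{\Theta}^{*}_0))$.

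Alternatively, the estimators decouple explicitly:
\begin{itemize}
\item $\hat\phi_j$ are sample proportions, so the multinomial CLT applies and yields the inverse of the $3\times3$ block.
\item $\hat\sigma$ is the mean excess over $u_0$.
\item $\hat\theta=n_1/\sum_{0<x_i<d_0}\log(d_0/x_i)$, handled by the delta method.
\end{itemize}
Joint normality then follows from a single multivariate CLT. Only $\hat k$ is implicit, and it requires the M-estimator argument.

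\textbf{Main obstacle.} I expect the hardest step to be the $k$-block. It requires verifying that the truncated-Weibull score equation has a unique consistent root and obtaining uniform domination of $\partial_k^3$ of $\log(e^{-d_0^k}-e^{-u_0^k})$. I would try to show that the log-partition term is strictly convex in $k$, making the $k$-log-likelihood strictly concave. The argument: it is an exponential family in $k$ only after the reparametrization $x\mapsto\log x$, so I would instead bound the derivatives directly on a compact $k$-interval using $d_0>0$.
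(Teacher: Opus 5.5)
Your argument is correct, and its backbone is the paper's. The paper's justification consists only of writing down the block-wise scores, noting that they have mean zero, listing the block-diagonal information entries, and invoking standard MLE asymptotics under the hypothesis that ``the usual regularity conditions'' hold. You follow the same route but actually discharge those conditions: fixed support once $d_0,u_0$ are known, block-wise identifiability, integrable envelopes, and positive definiteness of $\mathrm{diag}(1/\phi_1,1/\phi_2,1/\phi_3)+\phi_4^{-1}\mathbf{1}\mathbf{1}^\top$. Your closed-form alternative is genuinely more elementary. Since $\hat\phi_j$, $\hat\theta=n_1/\sum_{0<x_i<d_0}\log(d_0/x_i)$ and $\hat\sigma$ are explicit, one multivariate CLT plus the delta method gives the limit for five of the six coordinates. (The multinomial covariance $\mathrm{diag}(\phi)-\phi\phi^\top$ is exactly the Sherman--Morrison inverse of the $3\times3$ block.) Only $\hat k$ then needs an M-estimator linearisation. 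This makes the proof nearly self-contained and isolates the one place where the regularity hypothesis is doing real work.

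That place is the $k$-block, and there you should drop the concavity idea.
\begin{itemize}
\item As $k\to0^+$, $\log(e^{-d_0^k}-e^{-u_0^k})=\log k+\mathrm{const}+O(k^2)$, so this term is concave, not convex, near $0$.
\item Consequently the per-observation $k$-Hessian tends to $\mathbb{E}[T^2]-(\log x)^2$ with $T$ uniform on $[\log d_0,\log u_0]$, so the $k$-log-likelihood need not be concave.
\item The truncated Weibull is also not an exponential family in $k$ after $x\mapsto\log x$: the factor $\exp(-e^{kt})$ is not of that form.
\end{itemize}
Hence Cramér's local theorem only yields a consistent root. Identifying it with the global maximiser needs the Wald step. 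That step must handle the fact that as $k\to0^+$ the truncated Weibull converges to the log-uniform law on $[d_0,u_0)$, so the $k$-likelihood tends to a finite limit rather than to $-\infty$. Compactifying $k$ to $[0,\infty]$ and using that this limit law differs from the truth makes the argument go through. The paper absorbs all of this into its regularity hypothesis, so it is not a gap relative to the statement as posed.

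Finally, your representation $\mathcal{I}_{kk}=\phi_4\,\mathrm{Var}(\partial_k\ell\mid d_0\le X<u_0)$ is the one to rely on. Differentiating the score gives $+\partial_k^2\log(e^{-d_0^k}-e^{-u_0^k})$ inside the bracket of $\mathcal{I}_{kk}$, whereas the paper's display has a minus sign there. For example, with $d_0=1$ and $u_0\downarrow1$, the correct bracket tends to $0$ while the displayed one tends to $2/k^2$.
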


\begin{remark}[Asymptotic confidence intervals for the MLE]
Based on the asymptotic normality of the MLE, the approximate \(100(1-\alpha)\%\) confidence interval for the \(i\)-th parameter \(\Theta_i^{*}\in\boldsymbol{\Theta}^{*}\) is given by 
\[
\widehat{\Theta}_i^{*}
\pm
z_{\alpha/2}
\sqrt{
\frac{
\left[
\mathcal{I}^{-1}(\widehat{\boldsymbol{\Theta}}^{*}_n)
\right]_{ii}
}{n}
},
\]
where \(z_{\alpha/2}\) denotes the upper \(\alpha/2\)-quantile of the standard normal distribution, and
\(
\left[
\mathcal{I}^{-1}(\widehat{\boldsymbol{\Theta}}^{*}_n)
\right]_{ii}
\)
denotes the \(i\)-th diagonal element of the inverse Fisher information matrix evaluated at the MLE \(\widehat{\boldsymbol{\Theta}}^{*}_n\). In the case \(\xi=0\), the confidence intervals are defined similarly after excluding the parameter \(\xi\).
\end{remark}

\begin{remark}
The mixture proportions $\phi_1$, $\phi_2$, and $\phi_3$ exhibit statistical dependence in the Fisher information matrix, as reflected by the nonzero off-diagonal terms $\mathcal{I}_{\phi_i\phi_j} \neq 0$, for $i \neq j$. This dependence arises from the unit-sum constraint
$\phi_1 + \phi_2 + \phi_3 + (1-\phi_1-\phi_2-\phi_3)=1$,
which induces asymptotic correlation among the corresponding MLEs. Consequently, changes in one proportion necessarily alter the probability mass available to the remaining components. The proposed EEVIMM framework explicitly accounts for the proportions of instantaneous failures ($\phi_1$), early failures ($\phi_2$), and extremes ($\phi_3$), whereas the existing EVMM literature has often either ignored this component or not accounted for it thoroughly. Ignoring such components may affect threshold estimation, tail inference, and consequently, the associated risk measures.
\end{remark}
\section{Monte Carlo simulation study}\label{SStudy} \label{simulation}
\noindent This section presents numerical illustrations of the EEVIMM framework. We conducted an extensive Monte Carlo simulation study using data generated from the proposed model to evaluate the performance of the estimation procedure. Then, we apply the methodology to a real-world dataset to demonstrate its practical utility.
 To study the parameter-estimation performance across various parameter settings, we generate data \(x_i\), for \(i = 1,2,3,\dots,n\), from the EEVIMM~\eqref{evimmpdf} using the simulation algorithm given in Algorithm~\ref{alg:EEVIMM_general} of Section~\ref{algorithm}. The parameters considered in the study are as \( \phi_1 \in \{0.1, 0.2, 0.3, 0.4\} \), \( \phi_2 \in \{0.1, 0.2, 0.3\} \), \( d \in \{0.5, 1\} \), \( k \in \{0.5, 1.5\}\), \( \phi_3 \in \{0.10,0.15\} \), \( u \in \{7.7263,8.4879, 9.6579, 11.8831,  13.8155 \} \), \( \xi \in \{-0.2, 0, 0.2\} \), \(\sigma = 5 \). The sample sizes considered are \( n \in \{300, 400, 500, 750, 1000,1200\} \).

The simulated data are analysed using the EEVIMM, assuming that the parameter vector
$\boldsymbol{\Theta}
= (\phi_1, \phi_2, \theta, d, k, \phi_3, u, \xi, \sigma)$ is unknown. The parameter vector \( \bm{\Theta} \) is estimated using the methodologies described in Section~\ref{estimation}. For each sample size $n$, $N=2000$ datasets are simulated, and the proposed estimation approach is applied to each dataset.
We computed the sample mean, bias, and root-mean-square error (RMSE). To assess estimation uncertainty, bootstrap standard errors (BSEs) and 95\% bootstrap confidence intervals (BCIs) are constructed using bootstrap resamples for each parameter set. Estimate the coverage probability (CI), where CI is the percentage of times the confidence interval includes the true parameter value. The bias, RMSE, and CI are calculated as
\begin{equation}
\mathrm{Bias}(\hat{\Theta}_j)
= \frac{1}{N} \sum_{i=1}^{N} \bigl(\hat{\Theta}_j^{(i)} - \Theta_j\bigr),
\quad
\mathrm{RMSE}(\hat{\Theta}_j)
= \sqrt{ \frac{1}{N} \sum_{i=1}^{N}
\bigl(\hat{\Theta}_j^{(i)} - \Theta_j\bigr)^2 },
\quad
\mathrm{CP}(\hat{\Theta}_j)
= \frac{1}{N} \sum_{i=1}^{N} \mathbf{1}\{{\Theta}_0 \in CI_i\},
\end{equation}
\noindent where $\hat{\Theta}_j$ denotes the estimator of the true parameter $\Theta_j$, and $\hat{\Theta}_j^{(i)}$ denotes the estimate of $\Theta_j$ obtained
from the $i$th simulated dataset. Here, \( N \) is the total number of simulated datasets, each generated with sample size \( n \). $\bm{\Theta_0}$ is the true parameter value, and $CI_i$ is the confidence interval obtained in the $i$-th replication.
 The sample mean, BCI, BSE, MSE, and bias using estimated parameter values with their respective sample sizes are given in Tables \ref{Table-100} - \ref{Table-103}, respectively. These tables report the results for the heavy-tailed, lighter-tailed, and exponential-tailed scenarios, respectively. Additional results are given in the Appendix. We see that the estimated parameter values by the proposed methodologies are close to the true parameter values. The parameters $u$, $\xi$ and $\sigma$ exhibit some variability at smaller sample sizes. As the sample size increases, both bias and MSE decrease, indicating improved estimation accuracy and greater estimator stability.
 The bootstrap BCIs provide expected results for estimating model parameters using the proposed methodologies.  
 
The simulated data are generated from the proposed EEVIMM under different tail scenarios. The threshold ($u$) and extreme-value parameters ($\xi, \sigma$) are estimated using different classical threshold selection methods, including numerical fixed-quantile methods, graphical methods, EVMM-based methods, and other approaches.  In the numerical fixed-quantile methods, the $90$th percentile, $k=\sqrt{n}$, and $k=n^{2/3}/\log\log n$ are considered. Graphical threshold selection methods, including MEP, PSP, PP, and HP, are considered. Some other estimation methods,
such as the DPT method (\cite{sakthivel2025dual}), are considered. In the mixture model framework, EVMM (\cite{behrens2004bayesian}), FEVMM (\cite{hu2013extreme}),
M\_EVMM (\cite{do2012semiparametric}) and the recently proposed model, EVIMM (\cite{nila2025modeling}), are used to estimate the threshold and extreme-value parameters, which are then compared, and the results are shown in Tables~\ref{Table-106} - \ref{Table-108}.
Values in parentheses in the \(\phi_u\) column denote empirical estimates of the tail proportion, corresponding to \(\phi_3\) in the proposed EEVIMM. The number of exceedances above the threshold \(u\), reported in the \(N_u\) column, is given by \(N_u = n\phi_3\).
The \texttt{evmix} R package developed by~\cite{hu2018evmix}, which is available on CRAN, is used for this purpose. Estimates of the parameters $u$, $\xi$, and $\sigma$ vary across cases and may deviate from their true values, indicating unreliable tail estimation. In contrast, the EEVIMM estimates remain closer to the true values and exhibit more accurate tail estimation in the presence of inliers.
\begin{landscape}
\begin{table*}[htbp]
\centering
\scriptsize
\setlength{\tabcolsep}{3pt}
\renewcommand{\arraystretch}{0.9}
 \captionsetup{
      labelfont=bf,
      textfont=normal
    }
 \caption{Monte Carlo simulation results ($N=2000$) for the heavy-tailed scenario with sample sizes $n=300,400,500,750,1000,$ and $1200$, reporting Sample mean, BSE, 95\% BCI, RMSE, Bias, and CP.}
\begin{tabular}{|p{2.5cm}|p{1.8cm}|p{1.8cm}|p{2.3cm}|p{1.8cm}|p{2.6cm}|p{2.3cm}|p{2.3cm}|p{1.5cm}|}
\hline
\textbf{$n$} & \textbf{Parameters} & \textbf{True Value} &
\textbf{Sample Mean} & \textbf{BSE} & \textbf{BCI} &
\textbf{RMSE} & \textbf{Bias} & \textbf{CP} \\ \hline
\multirow{9}{*}{ 300 }
 & $\phi_1$ &  0.3  &  0.3001  &  0.0250  & ( 0.2000 ,  0.2967 )
 &  0.0263  &  0.0001  &  91.7  \\ \cline{2-9}
 & $\phi_2$ &  0.2  &  0.1812  &  0.0590  & ( 0.0667 ,  0.2733 )
 &  0.0553  &  -0.0189  &  100.0  \\ \cline{2-9}
 & $\theta$ &  2  &  2.1184  &  0.3884  & ( 1.4628 ,  2.8827 )
 &  0.4659  &  0.1184  &  95.0  \\ \cline{2-9}
 & $d$ &  1  &  0.9407  &  0.1858  & ( 0.5211 ,  1.2201 )
 &  0.1811  &  -0.0593  &  96.7  \\ \cline{2-9}
 & $k$ &  0.5  &  0.4952  &  0.1174  & ( 0.3000 ,  0.7178 )
 &  0.1224  &  -0.0048  &  100.0  \\ \cline{2-9}
 & $\phi_3$ &  0.1  &  0.1225  &  0.0190  & ( 0.1000 ,  0.1600 )
 &  0.0293  &  0.0225  &  93.3  \\ \cline{2-9}
 & $u$ &  9.6579  &  8.1125  &  0.8774  & ( 5.2539 ,  9.1437 )
 &  2.4016  &  -1.5454  &  91.7  \\ \cline{2-9}
 & $\xi$ &  0.2  &  0.0820  &  0.3028  & ( -0.0972 ,  1.1582 )
 &  0.3072  &  -0.1180  &  93.3  \\ \cline{2-9}
 & $\sigma$ &  5  &  5.9601  &  1.3599  & ( 1.7002 ,  7.2260 )
 &  2.4977  &  0.9601  &  93.3  \\ \hline

\multirow{9}{*}{ 400 }
 & $\phi_1$ &  0.3  &  0.3004  &  0.0228  & ( 0.2375 ,  0.3225 )
 &  0.0230  &  0.0004  &  91.7  \\ \cline{2-9}
 & $\phi_2$ &  0.2  &  0.1808  &  0.0507  & ( 0.0625 ,  0.2525 )
 &  0.0536  &  -0.0192  &  100.0  \\ \cline{2-9}
 & $\theta$ &  2  &  2.0876  &  0.3462  & ( 1.5851 ,  2.9040 )
 &  0.3856  &  0.0876  &  98.3  \\ \cline{2-9}
 & $d$ &  1  &  0.9425  &  0.1586  & ( 0.5472 ,  1.1705 )
 &  0.1718  &  -0.0575  &  98.3  \\ \cline{2-9}
 & $k$ &  0.5  &  0.5030  &  0.0860  & ( 0.3000 ,  0.6642 )
 &  0.1012  &  0.0030  &  100.0  \\ \cline{2-9}
 & $\phi_3$ &  0.1  &  0.1109  &  0.0205  & ( 0.0750 ,  0.1575 )
 &  0.0251  &  0.0109  &  98.3  \\ \cline{2-9}
 & $u$ &  9.6579  &  9.0336  &  1.3597  & ( 5.4786 ,  11.2215 )
 &  1.8682  &  -0.6243  &  95.0  \\ \cline{2-9}
 & $\xi$ &  0.2  &  0.1625  &  0.2462  & ( -0.4802 ,  0.4948 )
 &  0.2665  &  -0.0375  &  91.7  \\ \cline{2-9}
 & $\sigma$ &  5  &  5.2925  &  1.6503  & ( 2.1440 ,  9.0474 )
 &  2.0211  &  0.2925  &  96.7  \\ \hline

\multirow{9}{*}{ 500 }
 & $\phi_1$ &  0.3  &  0.3003  &  0.0205  & ( 0.2400 ,  0.3220 )
 &  0.0206  &  0.0003  &  93.3  \\ \cline{2-9}
 & $\phi_2$ &  0.2  &  0.1796  &  0.0541  & ( 0.0400 ,  0.2460 )
 &  0.0528  &  -0.0204  &  98.3  \\ \cline{2-9}
 & $\theta$ &  2  &  2.0651  &  0.3918  & ( 1.3796 ,  2.9350 )
 &  0.3236  &  0.0651  &  91.7  \\ \cline{2-9}
 & $d$ &  1  &  0.9393  &  0.1799  & ( 0.4316 ,  1.1521 )
 &  0.1665  &  -0.0607  &  95.0  \\ \cline{2-9}
 & $k$ &  0.5  &  0.5071  &  0.0653  & ( 0.3000 ,  0.5608 )
 &  0.0896  &  0.0071  &  100.0  \\ \cline{2-9}
 & $\phi_3$ &  0.1  &  0.1082  &  0.0177  & ( 0.0700 ,  0.1480 )
 &  0.0228  &  0.0082  &  100.0  \\ \cline{2-9}
 & $u$ &  9.6579  &  9.2267  &  1.5415  & ( 6.1560 ,  12.2812 )
 &  1.6061  &  -0.4312  &  100.0  \\ \cline{2-9}
 & $\xi$ &  0.2  &  0.1698  &  0.3326  & ( -0.0652 ,  1.2427 )
 &  0.2276  &  -0.0302  &  95.0  \\ \cline{2-9}
 & $\sigma$ &  5  &  5.1922  &  1.1958  & ( 1.4431 ,  6.2025 )
 &  1.7203  &  0.1922  &  98.3  \\ \hline

\multirow{9}{*}{ 750 }
 & $\phi_1$ &  0.3  &  0.2997  &  0.0161  & ( 0.2360 ,  0.3000 )
 &  0.0171  &  -0.0003  &  93.3  \\ \cline{2-9}
 & $\phi_2$ &  0.2  &  0.1780  &  0.0508  & ( 0.0680 ,  0.2547 )
 &  0.0515  &  -0.0220  &  95.0  \\ \cline{2-9}
 & $\theta$ &  2  &  2.0369  &  0.2212  & ( 1.5980 ,  2.4945 )
 &  0.2447  &  0.0369  &  95.0  \\ \cline{2-9}
 & $d$ &  1  &  0.9341  &  0.1588  & ( 0.5704 ,  1.1631 )
 &  0.1592  &  -0.0659  &  95.0  \\ \cline{2-9}
 & $k$ &  0.5  &  0.5109  &  0.0920  & ( 0.3000 ,  0.6708 )
 &  0.0687  &  0.0109  &  96.7  \\ \cline{2-9}
 & $\phi_3$ &  0.1  &  0.1053  &  0.0270  & ( 0.0707 ,  0.1587 )
 &  0.0190  &  0.0053  &  98.3  \\ \cline{2-9}
 & $u$ &  9.6579  &  9.4151  &  2.3122  & ( 4.7242 ,  11.9646 )
 &  1.2669  &  -0.2428  &  98.3  \\ \cline{2-9}
 & $\xi$ &  0.2  &  0.1843  &  0.1740  & ( -0.3576 ,  0.3274 )
 &  0.1719  &  -0.0157  &  91.7  \\ \cline{2-9}
 & $\sigma$ &  5  &  5.0550  &  1.8419  & ( 4.2905 ,  11.2615 )
 &  1.1948  &  0.0550  &  91.7  \\ \hline

\multirow{9}{*}{ 1000 }
 & $\phi_1$ &  0.3  &  0.2999  &  0.0149  & ( 0.2810 ,  0.3390 )
 &  0.0147  &  -0.0001  &  98.3  \\ \cline{2-9}
 & $\phi_2$ &  0.2  &  0.1765  &  0.0427  & ( 0.0610 ,  0.2170 )
 &  0.0519  &  -0.0235  &  91.7  \\ \cline{2-9}
 & $\theta$ &  2  &  2.0301  &  0.2366  & ( 1.8470 ,  2.7731 )
 &  0.2213  &  0.0301  &  96.7  \\ \cline{2-9}
 & $d$ &  1  &  0.9297  &  0.1276  & ( 0.5756 ,  1.0483 )
 &  0.1600  &  -0.0703  &  93.3  \\ \cline{2-9}
 & $k$ &  0.5  &  0.5113  &  0.0503  & ( 0.3714 ,  0.5737 )
 &  0.0562  &  0.0113  &  96.7  \\ \cline{2-9}
 & $\phi_3$ &  0.1  &  0.1036  &  0.0115  & ( 0.0880 ,  0.1320 )
 &  0.0162  &  0.0036  &  96.7  \\ \cline{2-9}
 & $u$ &  9.6579  &  9.5035  &  0.3906  & ( 8.9139 ,  10.4212 )
 &  1.0344  &  -0.1544  &  95.0  \\ \cline{2-9}
 & $\xi$ &  0.2  &  0.1895  &  0.1395  & ( 0.0108 ,  0.5631 )
 &  0.1438  &  -0.0105  &  91.7  \\ \cline{2-9}
 & $\sigma$ &  5  &  5.0227  &  0.6362  & ( 2.6049 ,  5.1326 )
 &  0.9734  &  0.0227  &  96.7  \\ \hline

\multirow{9}{*}{ 1200 }
 & $\phi_1$ &  0.3  &  0.3001  &  0.0129  & ( 0.2642 ,  0.3150 )
 &  0.0134  &  0.0001  &  93.3  \\ \cline{2-9}
 & $\phi_2$ &  0.2  &  0.1764  &  0.0420  & ( 0.0633 ,  0.2192 )
 &  0.0504  &  -0.0236  &  93.3  \\ \cline{2-9}
 & $\theta$ &  2  &  2.0279  &  0.2220  & ( 1.8982 ,  2.7895 )
 &  0.1938  &  0.0279  &  98.3  \\ \cline{2-9}
 & $d$ &  1  &  0.9300  &  0.1146  & ( 0.5368 ,  0.9687 )
 &  0.1551  &  -0.0700  &  91.7  \\ \cline{2-9}
 & $k$ &  0.5  &  0.5104  &  0.0494  & ( 0.4420 ,  0.6307 )
 &  0.0493  &  0.0104  &  91.7  \\ \cline{2-9}
 & $\phi_3$ &  0.1  &  0.1023  &  0.0159  & ( 0.0750 ,  0.1492 )
 &  0.0142  &  0.0023  &  96.7  \\ \cline{2-9}
 & $u$ &  9.6579  &  9.5831  &  1.1616  & ( 5.8166 ,  11.3874 )
 &  0.8554  &  -0.0748  &  96.7  \\ \cline{2-9}
 & $\xi$ &  0.2  &  0.1963  &  0.1325  & ( -0.0442 ,  0.4819 )
 &  0.1291  &  -0.0037  &  95.0  \\ \cline{2-9}
 & $\sigma$ &  5  &  4.9858  &  1.0200  & ( 4.1214 ,  8.2986 )
 &  0.8558  &  -0.0142  &  95.0  \\ \hline
\end{tabular}
    \label{Table-100}
\end{table*}
\end{landscape}

\begin{landscape}
\begin{table*}[htbp]
\centering
\scriptsize
\setlength{\tabcolsep}{3pt}
\renewcommand{\arraystretch}{0.9}
 \captionsetup{
      labelfont=bf,
      textfont=normal
    }
   \caption{Monte Carlo simulation results ($N=2000$) for the light-tailed scenario with sample sizes $n=300,400,500,750,1000,$ and $1200$, reporting Sample mean, BSE, 95\% BCI, RMSE, Bias, and CP.}
\begin{tabular}{|p{2.5cm}|p{1.8cm}|p{1.8cm}|p{2.3cm}|p{1.8cm}|p{2.6cm}|p{2.3cm}|p{2.3cm}|p{1.5cm}|}
\hline
\textbf{$n$} & \textbf{Parameters} & \textbf{True Value} &
\textbf{Sample Mean} & \textbf{BSE} & \textbf{BCI} &
\textbf{RMSE} & \textbf{Bias} & \textbf{CP} \\ \hline
\multirow{9}{*}{ 300 }
 & $\phi_1$ &  0.3  &  0.3001  &  0.0265  & ( 0.2633 ,  0.3667 )
 &  0.0263  &  0.0001  &  95.0  \\ \cline{2-9}
 & $\phi_2$ &  0.2  &  0.1812  &  0.0641  & ( 0.0500 ,  0.2533 )
 &  0.0553  &  -0.0189  &  98.3  \\ \cline{2-9}
 & $\theta$ &  2  &  2.1184  &  0.3385  & ( 0.9409 ,  2.2750 )
 &  0.4659  &  0.1184  &  95.0  \\ \cline{2-9}
 & $d$ &  1  &  0.9407  &  0.3116  & ( 0.5414 ,  1.6693 )
 &  0.1811  &  -0.0593  &  98.3  \\ \cline{2-9}
 & $k$ &  0.5  &  0.4860  &  0.0992  & ( 0.3000 ,  0.6592 )
 &  0.1244  &  -0.0140  &  100.0  \\ \cline{2-9}
 & $\phi_3$ &  0.1  &  0.1228  &  0.0179  & ( 0.1000 ,  0.1600 )
 &  0.0295  &  0.0228  &  93.3  \\ \cline{2-9}
 & $u$ &  9.6579  &  8.0221  &  1.1592  & ( 6.2515 ,  11.4454 )
 &  2.4657  &  -1.6358  &  88.3  \\ \cline{2-9}
 & $\xi$ &  -0.2  &  -0.3435  &  0.2129  & ( -0.7252 ,  0.1489 )
 &  0.2863  &  -0.1435  &  93.3  \\ \cline{2-9}
 & $\sigma$ &  5  &  6.6266  &  1.8583  & ( 2.8574 ,  10.5628 )
 &  3.0175  &  1.6266  &  85.0  \\ \hline

\multirow{9}{*}{ 400 }
 & $\phi_1$ &  0.3  &  0.3004  &  0.0228  & ( 0.2525 ,  0.3400 )
 &  0.0230  &  0.0004  &  95.0  \\ \cline{2-9}
 & $\phi_2$ &  0.2  &  0.1808  &  0.0568  & ( 0.0650 ,  0.2575 )
 &  0.0536  &  -0.0192  &  98.3  \\ \cline{2-9}
 & $\theta$ &  2  &  2.0876  &  0.3429  & ( 1.5922 ,  2.9518 )
 &  0.3856  &  0.0876  &  96.7  \\ \cline{2-9}
 & $d$ &  1  &  0.9425  &  0.1630  & ( 0.5288 ,  1.1407 )
 &  0.1718  &  -0.0575  &  96.7  \\ \cline{2-9}
 & $k$ &  0.5  &  0.4960  &  0.1118  & ( 0.3000 ,  0.7657 )
 &  0.1038  &  -0.0040  &  100.0  \\ \cline{2-9}
 & $\phi_3$ &  0.1  &  0.1118  &  0.0227  & ( 0.0750 ,  0.1575 )
 &  0.0255  &  0.0118  &  100.0  \\ \cline{2-9}
 & $u$ &  9.6579  &  8.9146  &  2.1223  & ( 4.4117 ,  11.2205 )
 &  1.9240  &  -0.7433  &  100.0  \\ \cline{2-9}
 & $\xi$ &  -0.2  &  -0.2676  &  0.2526  & ( -0.8859 ,  0.1472 )
 &  0.2344  &  -0.0676  &  93.3  \\ \cline{2-9}
 & $\sigma$ &  5  &  5.7185  &  2.2563  & ( 2.0622 ,  10.1917 )
 &  2.3085  &  0.7185  &  96.7  \\ \hline

\multirow{9}{*}{ 500 }
 & $\phi_1$ &  0.3  &  0.3003  &  0.0199  & ( 0.2740 ,  0.3540 )
 &  0.0206  &  0.0003  &  90.0  \\ \cline{2-9}
 & $\phi_2$ &  0.2  &  0.1796  &  0.0521  & ( 0.0640 ,  0.2460 )
 &  0.0528  &  -0.0204  &  98.3  \\ \cline{2-9}
 & $\theta$ &  2  &  2.0651  &  0.2936  & ( 1.6048 ,  2.7878 )
 &  0.3236  &  0.0651  &  96.7  \\ \cline{2-9}
 & $d$ &  1  &  0.9393  &  0.1570  & ( 0.5344 ,  1.1071 )
 &  0.1665  &  -0.0607  &  95.0  \\ \cline{2-9}
 & $k$ &  0.5  &  0.5032  &  0.0910  & ( 0.3000 ,  0.6652 )
 &  0.0921  &  0.0032  &  98.3  \\ \cline{2-9}
 & $\phi_3$ &  0.1  &  0.1092  &  0.0229  & ( 0.0700 ,  0.1580 )
 &  0.0233  &  0.0092  &  96.7  \\ \cline{2-9}
 & $u$ &  9.6579  &  9.1238  &  2.0311  & ( 5.2568 ,  12.1659 )
 &  1.6535  &  -0.5341  &  98.3  \\ \cline{2-9}
 & $\xi$ &  -0.2  &  -0.2530  &  0.2365  & ( -0.6050 ,  0.3158 )
 &  0.1964  &  -0.0530  &  93.3  \\ \cline{2-9}
 & $\sigma$ &  5  &  5.5020  &  1.8756  & ( 2.0824 ,  9.2663 )
 &  1.9068  &  0.5020  &  95.0  \\ \hline

\multirow{9}{*}{ 750 }
 & $\phi_1$ &  0.3  &  0.2997  &  0.0171  & ( 0.2947 ,  0.3613 )
 &  0.0171  &  -0.0003  &  91.7  \\ \cline{2-9}
 & $\phi_2$ &  0.2  &  0.1780  &  0.0433  & ( 0.0613 ,  0.2187 )
 &  0.0515  &  -0.0220  &  91.7  \\ \cline{2-9}
 & $\theta$ &  2  &  2.0369  &  0.2504  & ( 1.7065 ,  2.7200 )
 &  0.2447  &  0.0369  &  95.0  \\ \cline{2-9}
 & $d$ &  1  &  0.9341  &  0.1457  & ( 0.5891 ,  1.1361 )
 &  0.1592  &  -0.0659  &  90.0  \\ \cline{2-9}
 & $k$ &  0.5  &  0.5079  &  0.0629  & ( 0.3289 ,  0.5913 )
 &  0.0706  &  0.0079  &  98.3  \\ \cline{2-9}
 & $\phi_3$ &  0.1  &  0.1062  &  0.0146  & ( 0.0827 ,  0.1427 )
 &  0.0196  &  0.0062  &  98.3  \\ \cline{2-9}
 & $u$ &  9.6579  &  9.3323  &  0.6646  & ( 7.9056 ,  10.8448 )
 &  1.3482  &  -0.3256  &  98.3  \\ \cline{2-9}
 & $\xi$ &  -0.2  &  -0.2332  &  0.1277  & ( -0.3366 ,  0.1593 )
 &  0.1412  &  -0.0332  &  93.3  \\ \cline{2-9}
 & $\sigma$ &  5  &  5.2724  &  0.8602  & ( 2.8304 ,  6.1354 )
 &  1.3726  &  0.2724  &  91.7  \\ \hline

\multirow{9}{*}{ 1000 }
 & $\phi_1$ &  0.3  &  0.2999  &  0.0148  & ( 0.2900 ,  0.3480 )
 &  0.0147  &  -0.0001  &  96.7  \\ \cline{2-9}
 & $\phi_2$ &  0.2  &  0.1765  &  0.0452  & ( 0.0650 ,  0.2350 )
 &  0.0519  &  -0.0235  &  96.7  \\ \cline{2-9}
 & $\theta$ &  2  &  2.0301  &  0.1808  & ( 1.5058 ,  2.2193 )
 &  0.2213  &  0.0301  &  98.3  \\ \cline{2-9}
 & $d$ &  1  &  0.9297  &  0.1566  & ( 0.5436 ,  1.1246 )
 &  0.1600  &  -0.0703  &  93.3  \\ \cline{2-9}
 & $k$ &  0.5  &  0.5101  &  0.0798  & ( 0.3000 ,  0.5740 )
 &  0.0583  &  0.0101  &  93.3  \\ \cline{2-9}
 & $\phi_3$ &  0.1  &  0.1043  &  0.0249  & ( 0.0710 ,  0.1590 )
 &  0.0169  &  0.0043  &  95.0  \\ \cline{2-9}
 & $u$ &  9.6579  &  9.4418  &  1.6970  & ( 5.4560 ,  11.3447 )
 &  1.0984  &  -0.2161  &  93.3  \\ \cline{2-9}
 & $\xi$ &  -0.2  &  -0.2241  &  0.1186  & ( -0.5755 ,  -0.1165 )
 &  0.1155  &  -0.0241  &  91.7  \\ \cline{2-9}
 & $\sigma$ &  5  &  5.1754  &  1.4135  & ( 3.7868 ,  8.8529 )
 &  1.0850  &  0.1754  &  93.3  \\ \hline

\multirow{9}{*}{ 1200 }
 & $\phi_1$ &  0.3  &  0.3001  &  0.0138  & ( 0.2850 ,  0.3392 )
 &  0.0134  &  0.0001  &  96.7  \\ \cline{2-9}
 & $\phi_2$ &  0.2  &  0.1764  &  0.0494  & ( 0.0700 ,  0.2425 )
 &  0.0504  &  -0.0236  &  91.7  \\ \cline{2-9}
 & $\theta$ &  2  &  2.0279  &  0.1491  & ( 1.4127 ,  2.0025 )
 &  0.1938  &  0.0279  &  98.3  \\ \cline{2-9}
 & $d$ &  1  &  0.9300  &  0.1743  & ( 0.5451 ,  1.2050 )
 &  0.1551  &  -0.0700  &  91.7  \\ \cline{2-9}
 & $k$ &  0.5  &  0.5102  &  0.0847  & ( 0.4355 ,  0.7657 )
 &  0.0502  &  0.0102  &  93.3  \\ \cline{2-9}
 & $\phi_3$ &  0.1  &  0.1028  &  0.0216  & ( 0.0817 ,  0.1600 )
 &  0.0144  &  0.0028  &  98.3  \\ \cline{2-9}
 & $u$ &  9.6579  &  9.5370  &  1.5282  & ( 4.1977 ,  9.6035 )
 &  0.8821  &  -0.1209  &  93.3  \\ \cline{2-9}
 & $\xi$ &  -0.2  &  -0.2169  &  0.0787  & ( -0.5447 ,  -0.2322 )
 &  0.1008  &  -0.0169  &  95.0  \\ \cline{2-9}
 & $\sigma$ &  5  &  5.1017  &  1.2105  & ( 5.8588 ,  10.4552 )
 &  0.9105  &  0.1017  &  96.7  \\ \hline

 \end{tabular}
    \label{Table-101}
\end{table*}
\end{landscape}

\begin{landscape}
\begin{table*}[htbp]
\centering
\scriptsize
\setlength{\tabcolsep}{3pt}
\renewcommand{\arraystretch}{0.9}
 \captionsetup{
      labelfont=bf,
      textfont=normal
    }
   \caption{Monte Carlo simulation results ($N=2000$) for the exponential-tailed scenario with sample sizes $n=300,400,500,750,1000,$ and $1200$, reporting Sample mean, BSE, 95\% BCI, RMSE, Bias, and CP.}
\begin{tabular}{|p{2.5cm}|p{1.8cm}|p{1.8cm}|p{2.3cm}|p{1.8cm}|p{2.6cm}|p{2.3cm}|p{2.3cm}|p{1.5cm}|}
\hline
\textbf{$n$} & \textbf{Parameters} & \textbf{True Value} &
\textbf{Sample Mean} & \textbf{BSE} & \textbf{BCI} &
\textbf{RMSE} & \textbf{Bias} & \textbf{CP} \\ \hline
\multirow{9}{*}{ 300 }
 & $\phi_1$ &  0.3  &  0.3001  &  0.0268  & ( 0.2433 ,  0.3500 )
 &  0.0263  &  0.0001  &  95.0  \\ \cline{2-9}
 & $\phi_2$ &  0.2  &  0.1812  &  0.0538  & ( 0.0633 ,  0.2567 )
 &  0.0553  &  -0.0189  &  100.0  \\ \cline{2-9}
 & $\theta$ &  2  &  2.1184  &  0.3919  & ( 1.3581 ,  2.8252 )
 &  0.4659  &  0.1184  &  95.0  \\ \cline{2-9}
 & $d$ &  1  &  0.9407  &  0.2064  & ( 0.5192 ,  1.3285 )
 &  0.1811  &  -0.0593  &  91.7  \\ \cline{2-9}
 & $k$ &  0.5  &  0.4900  &  0.1487  & ( 0.3000 ,  0.7972 )
 &  0.1235  &  -0.0100  &  100.0  \\ \cline{2-9}
 & $\phi_3$ &  0.1  &  0.1228  &  0.0193  & ( 0.1000 ,  0.1600 )
 &  0.0296  &  0.0228  &  95.0  \\ \cline{2-9}
 & $u$ &  9.6579  &  8.0561  &  1.2199  & ( 4.2721 ,  9.0433 )
 &  2.4531  &  -1.6018  &  71.7  \\ \cline{2-9}
 & $\xi$ &  0  &  -0.1291  &  0.2791  & ( -0.3724 ,  0.7656 )
 &  0.2932  &  -0.1291  &  80.0  \\ \cline{2-9}
 & $\sigma$ &  5  &  6.2756  &  1.9718  & ( 2.4786 ,  10.0008 )
 &  2.7358  &  1.2756  &  86.7  \\ \hline

\multirow{9}{*}{ 400 }
 & $\phi_1$ &  0.3  &  0.3004  &  0.0239  & ( 0.2375 ,  0.3325 )
 &  0.0230  &  0.0004  &  98.3  \\ \cline{2-9}
 & $\phi_2$ &  0.2  &  0.1808  &  0.0473  & ( 0.0550 ,  0.2375 )
 &  0.0536  &  -0.0192  &  98.3  \\ \cline{2-9}
 & $\theta$ &  2  &  2.0876  &  0.5006  & ( 1.8378 ,  3.8512 )
 &  0.3856  &  0.0876  &  96.7  \\ \cline{2-9}
 & $d$ &  1  &  0.9425  &  0.1428  & ( 0.5495 ,  1.1497 )
 &  0.1718  &  -0.0575  &  95.0  \\ \cline{2-9}
 & $k$ &  0.5  &  0.4997  &  0.0929  & ( 0.3000 ,  0.7469 )
 &  0.1024  &  -0.0003  &  100.0  \\ \cline{2-9}
 & $\phi_3$ &  0.1  &  0.1112  &  0.0196  & ( 0.0750 ,  0.1525 )
 &  0.0253  &  0.0112  &  100.0  \\ \cline{2-9}
 & $u$ &  9.6579  &  8.9889  &  1.5795  & ( 4.5423 ,  10.5345 )
 &  1.8952  &  -0.6690  &  96.7  \\ \cline{2-9}
 & $\xi$ &  0  &  -0.0504  &  0.2129  & ( -0.7893 ,  0.0665 )
 &  0.2475  &  -0.0504  &  90.0  \\ \cline{2-9}
 & $\sigma$ &  5  &  5.4747  &  2.1034  & ( 2.7428 ,  10.9283 )
 &  2.1211  &  0.4747  &  95.0  \\ \hline

\multirow{9}{*}{ 500 }
 & $\phi_1$ &  0.3  &  0.3003  &  0.0213  & ( 0.2780 ,  0.3620 )
 &  0.0206  &  0.0003  &  98.3  \\ \cline{2-9}
 & $\phi_2$ &  0.2  &  0.1796  &  0.0473  & ( 0.0620 ,  0.2360 )
 &  0.0528  &  -0.0204  &  100.0  \\ \cline{2-9}
 & $\theta$ &  2  &  2.0651  &  0.3325  & ( 1.6757 ,  2.9815 )
 &  0.3236  &  0.0651  &  100.0  \\ \cline{2-9}
 & $d$ &  1  &  0.9393  &  0.1510  & ( 0.6049 ,  1.1834 )
 &  0.1665  &  -0.0607  &  96.7  \\ \cline{2-9}
 & $k$ &  0.5  &  0.5055  &  0.0923  & ( 0.3000 ,  0.6932 )
 &  0.0908  &  0.0055  &  98.3  \\ \cline{2-9}
 & $\phi_3$ &  0.1  &  0.1085  &  0.0211  & ( 0.0740 ,  0.1580 )
 &  0.0229  &  0.0085  &  100.0  \\ \cline{2-9}
 & $u$ &  9.6579  &  9.1921  &  1.6603  & ( 4.7413 ,  11.1435 )
 &  1.6099  &  -0.4658  &  98.3  \\ \cline{2-9}
 & $\xi$ &  0  &  -0.0400  &  0.1913  & ( -0.5410 ,  0.2120 )
 &  0.2086  &  -0.0400  &  86.7  \\ \cline{2-9}
 & $\sigma$ &  5  &  5.3212  &  1.8114  & ( 2.9717 ,  10.0672 )
 &  1.7682  &  0.3212  &  91.7  \\ \hline

\multirow{9}{*}{ 750 }
 & $\phi_1$ &  0.3  &  0.2997  &  0.0161  & ( 0.2506 ,  0.3147 )
 &  0.0171  &  -0.0003  &  95.0  \\ \cline{2-9}
 & $\phi_2$ &  0.2  &  0.1780  &  0.0459  & ( 0.0627 ,  0.2333 )
 &  0.0515  &  -0.0220  &  91.7  \\ \cline{2-9}
 & $\theta$ &  2  &  2.0369  &  0.2616  & ( 1.7177 ,  2.7041 )
 &  0.2447  &  0.0369  &  96.7  \\ \cline{2-9}
 & $d$ &  1  &  0.9341  &  0.1366  & ( 0.5596 ,  1.0671 )
 &  0.1592  &  -0.0659  &  86.7  \\ \cline{2-9}
 & $k$ &  0.5  &  0.5097  &  0.0635  & ( 0.3000 ,  0.5152 )
 &  0.0695  &  0.0097  &  96.7  \\ \cline{2-9}
 & $\phi_3$ &  0.1  &  0.1057  &  0.0268  & ( 0.0707 ,  0.1600 )
 &  0.0192  &  0.0057  &  95.0  \\ \cline{2-9}
 & $u$ &  9.6579  &  9.3849  &  2.2819  & ( 6.7920 ,  14.3953 )
 &  1.2978  &  -0.2730  &  93.3  \\ \cline{2-9}
 & $\xi$ &  0  &  -0.0233  &  0.1555  & ( -0.5536 ,  0.0542 )
 &  0.1539  &  -0.0233  &  86.7  \\ \cline{2-9}
 & $\sigma$ &  5  &  5.1466  &  1.8024  & ( 3.5746 ,  10.2725 )
 &  1.2557  &  0.1466  &  96.7  \\ \hline

\multirow{9}{*}{ 1000 }
 & $\phi_1$ &  0.3  &  0.2999  &  0.0145  & ( 0.2680 ,  0.3250 )
 &  0.0147  &  -0.0001  &  100.0  \\ \cline{2-9}
 & $\phi_2$ &  0.2  &  0.1765  &  0.0482  & ( 0.0710 ,  0.2460 )
 &  0.0519  &  -0.0235  &  86.7  \\ \cline{2-9}
 & $\theta$ &  2  &  2.0301  &  0.1902  & ( 1.6107 ,  2.3653 )
 &  0.2213  &  0.0301  &  100.0  \\ \cline{2-9}
 & $d$ &  1  &  0.9297  &  0.1403  & ( 0.5553 ,  1.0747 )
 &  0.1600  &  -0.0703  &  88.3  \\ \cline{2-9}
 & $k$ &  0.5  &  0.5108  &  0.0966  & ( 0.3000 ,  0.6672 )
 &  0.0576  &  0.0108  &  96.7  \\ \cline{2-9}
 & $\phi_3$ &  0.1  &  0.1041  &  0.0275  & ( 0.0720 ,  0.1600 )
 &  0.0167  &  0.0041  &  96.7  \\ \cline{2-9}
 & $u$ &  9.6579  &  9.4677  &  1.4701  & ( 5.6564 ,  10.8970 )
 &  1.0796  &  -0.1902  &  98.3  \\ \cline{2-9}
 & $\xi$ &  0  &  -0.0170  &  0.1135  & ( -0.3105 ,  0.1487 )
 &  0.1273  &  -0.0170  &  93.3  \\ \cline{2-9}
 & $\sigma$ &  5  &  5.0916  &  0.9652  & ( 4.0358 ,  8.0260 )
 &  1.0082  &  0.0916  &  98.3  \\ \hline

\multirow{9}{*}{ 1200 }
 & $\phi_1$ &  0.3  &  0.3001  &  0.0134  & ( 0.2850 ,  0.3375 )
 &  0.0134  &  0.0001  &  96.7  \\ \cline{2-9}
 & $\phi_2$ &  0.2  &  0.1764  &  0.0374  & ( 0.0525 ,  0.1900 )
 &  0.0504  &  -0.0236  &  90.0  \\ \cline{2-9}
 & $\theta$ &  2  &  2.0279  &  0.2279  & ( 1.8118 ,  2.7075 )
 &  0.1938  &  0.0279  &  98.3  \\ \cline{2-9}
 & $d$ &  1  &  0.9300  &  0.1190  & ( 0.5093 ,  0.9625 )
 &  0.1551  &  -0.0700  &  93.3  \\ \cline{2-9}
 & $k$ &  0.5  &  0.5102  &  0.0605  & ( 0.3000 ,  0.5871 )
 &  0.0494  &  0.0102  &  98.3  \\ \cline{2-9}
 & $\phi_3$ &  0.1  &  0.1024  &  0.0214  & ( 0.0717 ,  0.1558 )
 &  0.0141  &  0.0024  &  98.3  \\ \cline{2-9}
 & $u$ &  9.6579  &  9.5694  &  1.6550  & ( 5.2003 ,  11.3979 )
 &  0.8562  &  -0.0885  &  95.0  \\ \cline{2-9}
 & $\xi$ &  0  &  -0.0093  &  0.1232  & ( -0.3240 ,  0.1456 )
 &  0.1131  &  -0.0093  &  93.3  \\ \cline{2-9}
 & $\sigma$ &  5  &  5.0314  &  1.2491  & ( 3.9414 ,  8.8486 )
 &  0.8643  &  0.0314  &  93.3  \\ \hline
 \end{tabular}
    \label{Table-103}
\end{table*}
\end{landscape}

\begin{table}[!t]
\centering
\small
\textbf{Threshold and tail parameter estimates from various methods}
\vspace{2mm}
\begin{tabular}{lllcccc}
\hline
\textbf{Type} & \textbf{Method / Model} & $u$ & $\xi$ & $\sigma$ & $N_u$ & $\phi_3$/$\phi_u$ \\
\hline
\multicolumn{2}{l}{\textbf{True value}} 
 & $11.8831$ & $0.20$ & $5.00$ & $150$ & $0.15$ \\
\hline
\multirow{3}{*}{Numerical methods} 
 & $90$th percentile 
&  14.5779 & 0.2650 & 5.3026 & 100 & -- \\
 & $k=\sqrt{n}$     
& 21.4138 & 0.2681 & 7.6612 &  31 & -- \\
 & $k=n^{2/3}/\log\log n$ 
& 18.4621 & 0.3669 & 5.5696 &  51 & -- \\
\hline

\multirow{3}{*}{Graphical methods}
 & MEP,  PSP 
& 15.00 & 0.27 & 5.30 & 96 & -- \\
 & PP 
& 15.00 & 0.87 & -- & 96 & -- \\
 & HP 
& 17.00 & 0.32 & -- & 67 & -- \\
\hline

\multirow{1}{*}{Other method}
 & DPT 
& 15.4112 & 0.3692 &  4.5033 & -- & 0.0920 \\
\hline

\multirow{5}{*}{Extreme value Mixture models}
& EVMM 
& 15.5215 & 0.4187 & 4.1674 & -- & -- (0.0940) \\
 & M\_EVMM 
& 17.1226 & 0.4392 & 4.5244 & -- & -- (0.0948) \\
 & FEVMM 
& 11.7704 & 0.2414 & 4.9287 & -- & 0.2192 \\
 & EVIMM 
& 15.5211 & 0.4795 & 3.5478 & -- & -- (0.0920) \\
 & EEVIMM 
& 11.7814 & 0.1909 & 5.6067 & -- & 0.1540 \\
\hline
\end{tabular}
\caption{Heavy-tailed scenario; True parameter vector 
$\boldsymbol{\Theta}=(0.3,0.2,2,1,0.5,0.15,11.8831,0.2,5)$; $n=1000$.}
\label{Table-106}
\end{table}

\begin{table}[!t]
\centering
\small
\textbf{Threshold and tail parameter estimates from various methods}
\vspace{2mm}
\begin{tabular}{lllcccc}
\hline
\textbf{Type} & \textbf{Method / Model} & $u$ & $\xi$ & $\sigma$ & $N_u$ &  $\phi_3/\phi_u$ \\
\hline
\multicolumn{2}{l}{\textbf{True value}} 
 & $8.4879$ & $0.00$ & $5.00$ & $150$ & $0.15$ \\
\hline

\multirow{3}{*}{Numerical methods} 
 & $90$th percentile 
& 11.0471 & 0.0591 & 4.7941 & 100& -- \\
 & $k=\sqrt{n}$     
& 16.5622 & 0.0425 & 5.5168 &  31 & -- \\
 & $k=n^{2/3}/\log\log n$ 
& 14.3283 & 0.1261 & 4.5581 &  51 & -- \\
\hline

\multirow{3}{*}{Graphical methods}
 & MEP,  PSP 
& 11.00 & 0.06 & 4.80 & 101 & -- \\
 & PP 
& 11.00  & 0.68 & -- & 101 & -- \\
 & HP
& 13.00 & 0.30 & -- & 68 & -- \\
\hline

\multirow{1}{*}{Other method}
 & DPT 
& 11.6561 &0.1361 &  4.1507 & -- & 0.094 \\
\hline

\multirow{5}{*}{Extreme value Mixture models}
& EVMM 
& 22.3514 & 0.7397 & 1.7237 & -- & -- (0.0119) \\
 & M\_EVMM 
& 11.9681 & 0.1664 & 3.9619 & -- & -- (0.1259) \\
 & FEVMM 
& 8.5230 & 0.0065 & 5.3339 & -- & 0.2164 \\
 & EVIMM 
& 10.6284 & 0.0664 & 4.4296 & -- & -- (0.1090) \\
 & EEVIMM 
& 8.4407 & -0.0012 & 5.4274 & -- & 0.1540 \\
\hline

\end{tabular}
\caption{
Exponential-tailed scenario; True parameter vector 
$\boldsymbol{\Theta}=(0.3,0.2,2,1,0.5,0.15,8.4879,0.00,5)$; $n=1000$.}
\label{Table-107}
\end{table}

\begin{table}[!t]
\centering
\small
\textbf{Threshold and tail parameter estimates from various methods}
\vspace{2mm}
\begin{tabular}{lllcccc}
\hline
\textbf{Type} & \textbf{Method / Model} & $u$ & $\xi$ & $\sigma$ & $N_u$ &  $\phi_3/\phi_u$  \\
\hline
\multicolumn{2}{l}{\textbf{True value}} 
 & $7.7263$ & $-0.20$ & $5.00$ & $150$ & $0.15$ \\
\hline
\multirow{3}{*}{Numerical methods} 
 & $90$th percentile 
& 10.1589 & -0.1530 & 4.3623  &100 & -- \\
 & $k=\sqrt{n}$     
& 14.6264 &-0.1914 & 4.0072 &  31 & --\\
 & $k=n^{2/3}/\log\log n$ 
&  12.9347& -0.1138 & 3.7286 &  51 & -- \\
\hline
\multirow{3}{*}{Graphical methods}
 & MEP,  PSP 
& 10.00 & -0.15 & 4.40 & 104 & -- \\
 & PP 
& 10.00 & 0.64 & -- & 104 & -- \\
 & HP
& 12.00 & 0.25 & -- & 67 & -- \\
\hline

\multirow{1}{*}{Other method}
 & DPT 
& 10.6334 & -0.0938& 3.8336& -- & 0.0950 \\
\hline

\multirow{5}{*}{Extreme value mixture models}
& EVMM 
& 10.9028 & -0.0494 & 3.5199 & -- & -- (0.0910) \\
 & M\_EVMM 
& 10.6426 & -0.0836 & 3.7613 & -- & -- (0.1344) \\
 & FEVMM 
& 7.5634 & -0.2183 & 5.5165 & -- & 0.2207 \\
 & EVIMM 
& 10.903 & -0.1494 & 3.9247 & -- & -- (0.0920) \\
 & EEVIMM 
& 7.6878 & -0.2079 & 5.374 & -- & 0.1540 \\
\hline
\end{tabular}
\caption{Light-tailed scenario; True parameter vector 
$\boldsymbol{\Theta}= (0.3,0.2,2,1,0.5,0.15,7.7263,-0.20,5)$; $n=1000$.}
\label{Table-108}
\end{table}
\section{Analysis of real life data examples}\label{Applications}
\noindent In this section, we illustrate the proposed methodology using two real-world datasets: rainfall data from Hyderabad and Swedish motor insurance data. The threshold and tail parameters are estimated using several classical numerical, graphical, and EVMM-based methods. We further study the associated risk measures for both datasets and present the corresponding results.
\subsubsection*{Illustration 1: Analysis of daily rainfall in Hyderabad, India}
\noindent
This study analyses daily rainfall data for Hyderabad, India, motivated by the need for reliable rainfall models to support urban planning, agriculture, and disaster management. The analysis focuses on the monsoon months (June-September) over the period June 2010 to September 2020, with $n = 1{,}220$ observations.
The population of Hyderabad is growing rapidly, and it is a major financial and business centre, making it vulnerable to extreme rainfall events that can cause economic losses and disrupt daily activities. In addition, dry periods in the monsoon season affect water availability for business purposes and daily life. The rainfall data were obtained from Kaggle (\href{https://www.kaggle.com/datasets/rohanvarma9187/hyderabad-weather1950-2020-temp-and-rainfall}{Hyderabad Daily Weather Data}), and parameter estimates from different methods are reported in Table~\ref{tab:threshold1}.

\begin{table}[!t]
\centering
\small
\textbf{Panel A1: Threshold and tail parameter estimates from various methods.}
\vspace{2mm}
\begin{tabular}{lllcccc}
\hline
\textbf{Type} & \textbf{Method / Model} & $u$ & $\xi$ & $\sigma$ & $N_u$ & $\phi_3/\phi_u$ \\
\hline
\multirow{3}{*}{Numerical methods} 
 & $90$th percentile 
&  12.7366 &  0.0560&  15.6386  & 122 & -- \\

 & $k=\sqrt{n}$     
 & 34.1336 &  0.0442 & 16.5534  &  34& -- \\
 & $k=n^{2/3}/\log\log n$ 
& 23.8152 & -0.0342 & 18.9323  &  58  & -\\
\hline
\multirow{3}{*}{Graphical methods}
 & MEP,  PSP 
& 13.0000 & 0.0560 & 16.0000 & 122 & -- \\
 & PP 
& 13.0000 & 0.8000 & -- & 122 & -- \\
 & HP
& 24.0000 & 0.5000 & -- & 58 & -- \\
\hline

\multirow{1}{*}{Other method}
 & DPT 
& 15.0804 & 0.0269 &16.7063  & -- & 0.0836  \\
\hline
\multirow{5}{*}{Extreme value Mixture models}
& EVMM 
& 8.4379 & 0.1346 & 13.1976 & -- & -- (0.2858) \\
 & M\_EVMM 
& 15.072 & 0.0264 & 16.7327 & -- & -- (0.1777) \\
 & FEVMM 
& 13.6606 & 0.1037 & 14.4022 & -- & 0.2073 \\
 & EVIMM 
& 10.0755 & 0.0428 & 15.5298 & -- & -- (0.1221) \\
 & EEVIMM 
& 7.0608 & 0.1311 & 13.0423 & -- & 0.1590 \\
\hline
\end{tabular}
\label{tab:threshold2}

\vspace{4mm}
\textbf{Panel B1: Additional EEVIMM parameter estimates}
\vspace{2mm}

\begin{tabular}{ccccc}
\hline
$\phi_1$ & $\phi_2$ & $\theta$ & $d$ & $k$\\
\hline
0.5295 & 0.1352 & 0.9835 & 1.2822 & 0.3000\\
\hline
\end{tabular}
\caption{Parameter estimates for the Hyderabad rainfall dataset.
Panel A1 reports threshold and tail parameter estimates $(u,\xi,\sigma,\phi_3)$.
Panel B1 reports additional model-specific parameters.}
\label{tab:threshold1}
\end{table}

\subsubsection*{Illustration 2: Analysis of Swedish motor insurance, Sweden}
\noindent This study analyses the Swedish motor insurance dataset, focusing on the \texttt{Payment} variable, which represents the total claim amount (in Swedish kronor), based on a sample of $n=1237$ observations corresponding to territorial risk classes (Zones~$4$-$7$). Accurate modeling of total claims is essential for premium calculation, capital allocation, reinsurance decisions, and overall risk management, particularly because such decisions depend strongly on tail behaviour.
 The data were obtained from (\href{https://www.kaggle.com/datasets/floser/swedish-motor-insurance}{Swedish Motor Insurance}), and parameter estimates from different methods are reported in Table~\ref{tab:threshold2}.

\begin{table}[!t]
\centering
\small
\textbf{Panel A2: Threshold and tail parameter estimates from various methods.}
\vspace{2mm}
\begin{tabular}{lllcccc}
\hline
\textbf{Type} & \textbf{Method / Model} & $u$ & $\xi$ & $\sigma$ & $N_u$ & $\phi_3$/$\phi_u$ \\
\hline
\multirow{3}{*}{Numerical methods} 
 & $90$th percentile 
&3.5345& 0.7637 & 4.9416&  124 & -- \\
 & $k=\sqrt{n}$     
 & 15.2949 &0.3642 &18.2864&   35 & -- \\
 & $k=n^{2/3}/\log\log n$ 
& 7.7403& 0.4245 &13.6210&   58  & -\\
\hline
\multirow{3}{*}{Graphical methods}
 & MEP,  PSP 
& 3.500 & 0.7600 & 4.9000 & 125 & -- \\
 & PP 
& 3.5000 & 1.5000 & -- & 125 & -- \\
 & HP
& 5.3000 & 0.9700 & -- & 91 & -- \\
\hline

\multirow{1}{*}{Other method}
 & DPT 
& 25.2782 & 0.4553 & 19.8433  & -- & 0.0170  \\
\hline
\multirow{5}{*}{Extreme value Mixture models}
 & EVMM 
& 1.7081 & 0.7238 & 3.8328 & -- & -- (0.1816) \\
 & M\_EVMM 
& 3.1985 & 0.7602 & 4.6681 & -- & -- (0.1455) \\
 & FEVMM 
& 1.232 & 0.8191 & 2.7792 & -- & 0.2538 \\
 & EVIMM 
& 3.0051 & 0.5747 & 6.243 & -- & -- (0.1124) \\
 & EEVIMM 
& 1.7103 & 0.7243 & 3.8173 & -- & 0.1520 \\
\hline
\end{tabular}
\label{tab:threshold2}

\vspace{4mm}
\textbf{Panel B2: Additional EEVIMM parameter estimates}
\vspace{2mm}

\begin{tabular}{ccccc}
\hline
$\phi_1$ & $\phi_2$ & $\theta$ & $d$ & $k$\\
\hline
  0.2611 & 0.0525  & 1.8988  & 0.0178  & 0.3562\\
\hline
\end{tabular}
\caption{Parameter estimates for the Swedish motor insurance dataset.
Panel A2 reports threshold and tail parameter estimates $(u,\xi,\sigma,\phi_3)$.
Panel B2 reports additional model-specific parameters.}
\label{tab:threshold2}
\end{table}
The performance of the proposed EEVIMM against existing considered models ( EVMM, M\_EVMM, FEVMM, EVIMM) is evaluated using both the normalised Akaike Information Criterion (nAIC) (\cite{cohen2021normalized}) and standard GoF tests. Model diagnostics are performed using standard GoF tests, namely the $AD$, $CvM$, and $KS$ statistics. For each considered model, the null hypothesis ($H_0$) in the GoF tests states that the fitted model adequately describes the observed data, i.e., $H_0: F(x) = F_{\hat{\theta}_\mathcal{M}}(x)$, where $\mathcal{M}$ denotes the fitted model and $F_{\hat{\theta}_\mathcal{M}}(x)$ is its CDF with estimated parameters. The corresponding test statistics and associated $p$-values, obtained using a parametric bootstrap with 10{,}000 replications from the fitted model, are reported in Table~\ref{tab:gof_results}. For all GoF tests, the model with a smaller test statistic and a larger \(p\)-value is considered to fit the observed data better, as shown in Table~\ref{tab:gof_results}. Lower AIC values indicate a better model fit, and the EEVIMM achieves the lowest nAIC and the best overall fit among all considered models according to the three GoF tests. Graphical assessments are provided in Figure \ref{fig:return_pp_comparison}; the Figures \ref{fig:pp_hyderabad} and \ref{fig:pp_insurance}
gives the empirical CDF vs. the fitted CDF for the Hyderabad rainfall and insurance data sets, highlighting the visual agreement between the empirical CDF and the fitted EEVIMM. Figure \ref{fig:return_hyderabad} and ~\ref{fig:return_insurance} present the return level plots against the return period, along with the 95\% confidence intervals, used in estimating extreme quantiles relevant for real-world risk assessment.

\begin{table}[!t]
\centering
\setlength{\tabcolsep}{3.5pt}
\renewcommand{\arraystretch}{1.1}
\caption{Goodness-of-fit tests and information criteria for the Hyderabad rainfall and Swedish motor insurance datasets.}
\begin{tabular}{|c|c|c|c|c|}
\hline
\textbf{Model} & $AD$ ($p$-value) & $CvM$ ($p$-value) & $KS$ ($p$-value) & \textbf{nAIC}\\
\hline
\multicolumn{5}{|c|}{\textbf{Hyderabad Rainfall Dataset}} \\
\hline
EVMM  & 7534.2880 (<0.01)  & 119.4132 (<0.01) & 0.5295 (<0.01) & 6.1657 \\
M\_EVMM & 7536.071  (<0.01)  &  115.2901(<0.01)  & 0.5295 (<0.01) & 20.0988 \\
FEVMM & 7534.1770(<0.01)  & 117.5738 (<0.01) & 0.5295 (<0.01) &  6.1393 \\
EVIMM & 274.7258 (0.2991) &  60.55872 (0.2623 ) & 0.0251 (0.2556) &  4.2860 \\
EEVIMM & 274.0181 (0.7668) & 60.40326 (0.6679) & 0.0173 (0.6342) & 3.5920  \\
\hline
\multicolumn{5}{|c|}{\textbf{Swedish Motor Insurance Dataset}} \\
\hline
EVMM & 1862.0195 (<0.01) & 34.9961 (<0.01) & 0.2611 (<0.01) & 2.2962  \\
M\_EVMM & 1858.3208 (<0.01) & 30.4200 (<0.01) & 0.2611 (<0.01) & 16.2301 \\
FEVMM & 1859.9596 (<0.01) & 31.9663 (<0.01) & 0.2611 (<0.01) & 2.2487  \\
EVIMM & 56.4606 (<0.01) & 8.5508 (<0.01) & 0.0733 (<0.01) & 2.9052  \\
EEVIMM & 51.5128 (0.9226) & 7.3760 (0.8696) & 0.0228 (0.4806) & 2.0388  \\
\hline
\end{tabular}
\label{tab:gof_results}
\end{table}

\begin{figure}[!t]
    \centering
    \begin{subfigure}{0.48\textwidth}
        \centering
        \includegraphics[width=7.5cm,height=7cm]{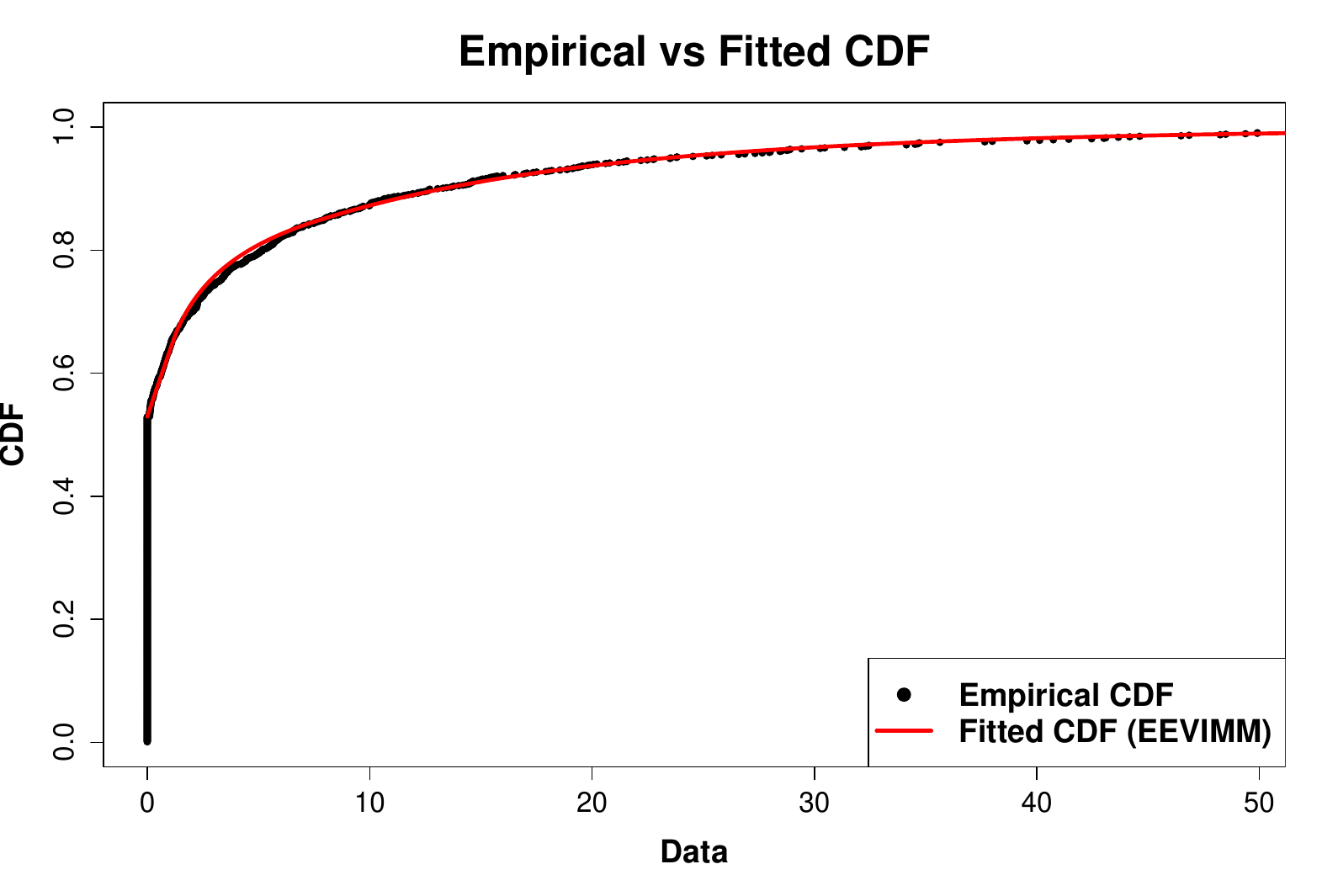}
       \caption{Empirical vs fitted CDF for the Hyderabad rainfall data.}
        \label{fig:pp_hyderabad}
    \end{subfigure}
    \hfill
    \begin{subfigure}{0.48\textwidth}
        \centering
        \includegraphics[width=7.5cm,height=7cm]{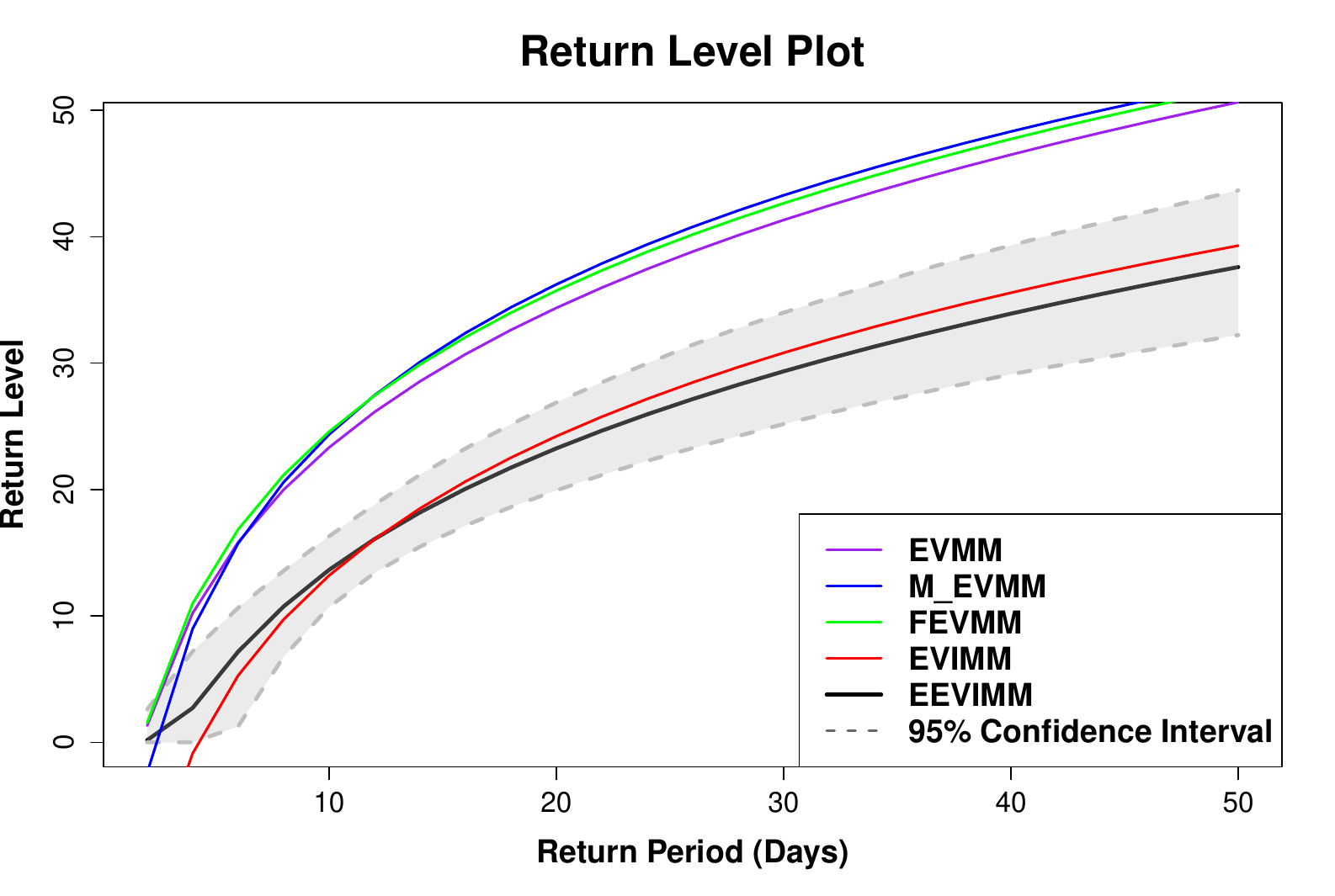}
      \caption{Return level estimates for the Hyderabad rainfall data.}
        \label{fig:return_hyderabad}
    \end{subfigure}

    \vspace{0.4cm}
    
    \begin{subfigure}{0.48\textwidth}
        \centering
        \includegraphics[width=7.5cm,height=7cm]{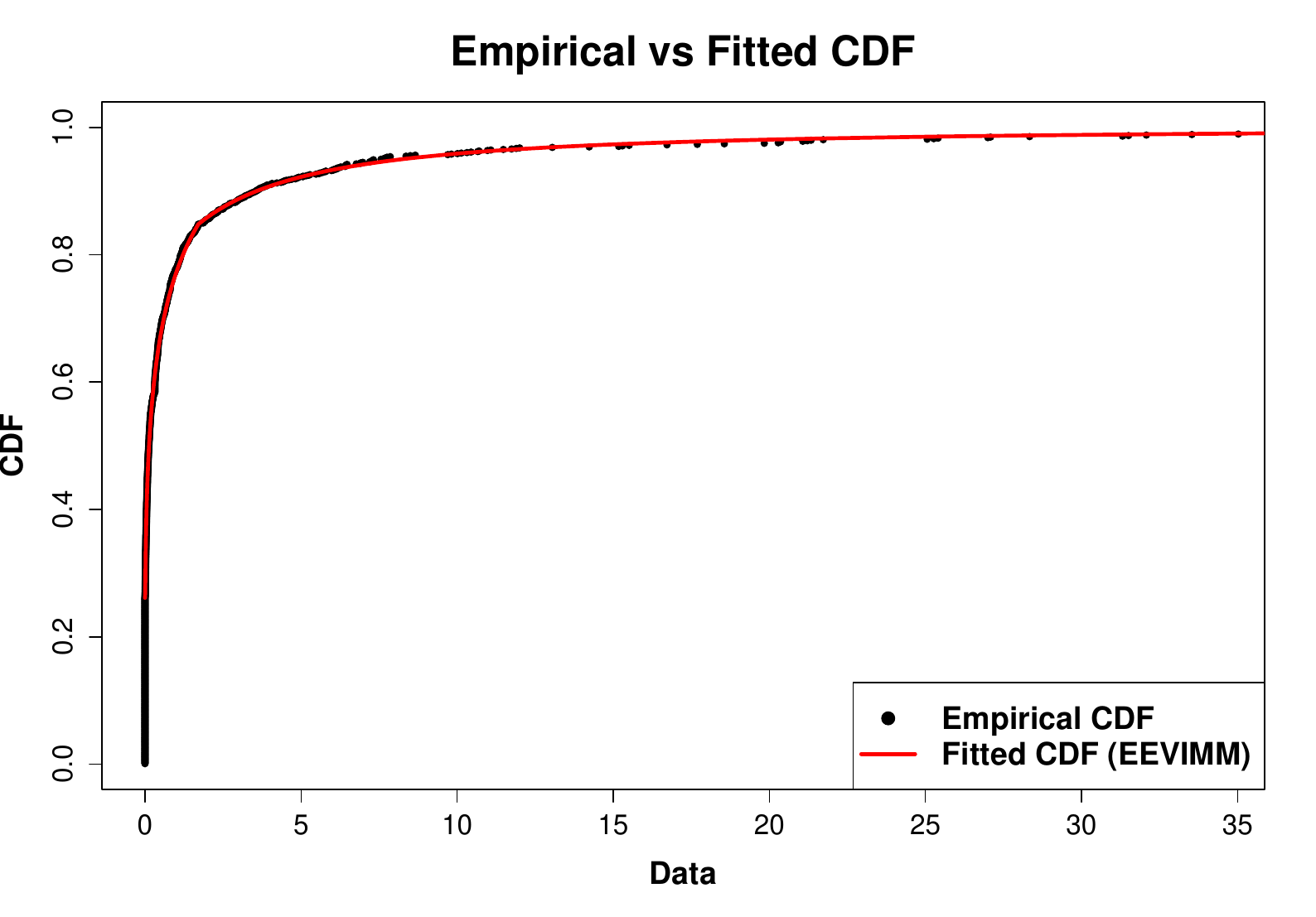}
       \caption{Empirical vs fitted CDF for the Swedish motor insurance data.}
        \label{fig:pp_insurance}
    \end{subfigure}
    \hfill
    \begin{subfigure}{0.48\textwidth}
        \centering
        \includegraphics[width=7.5cm,height=7cm]{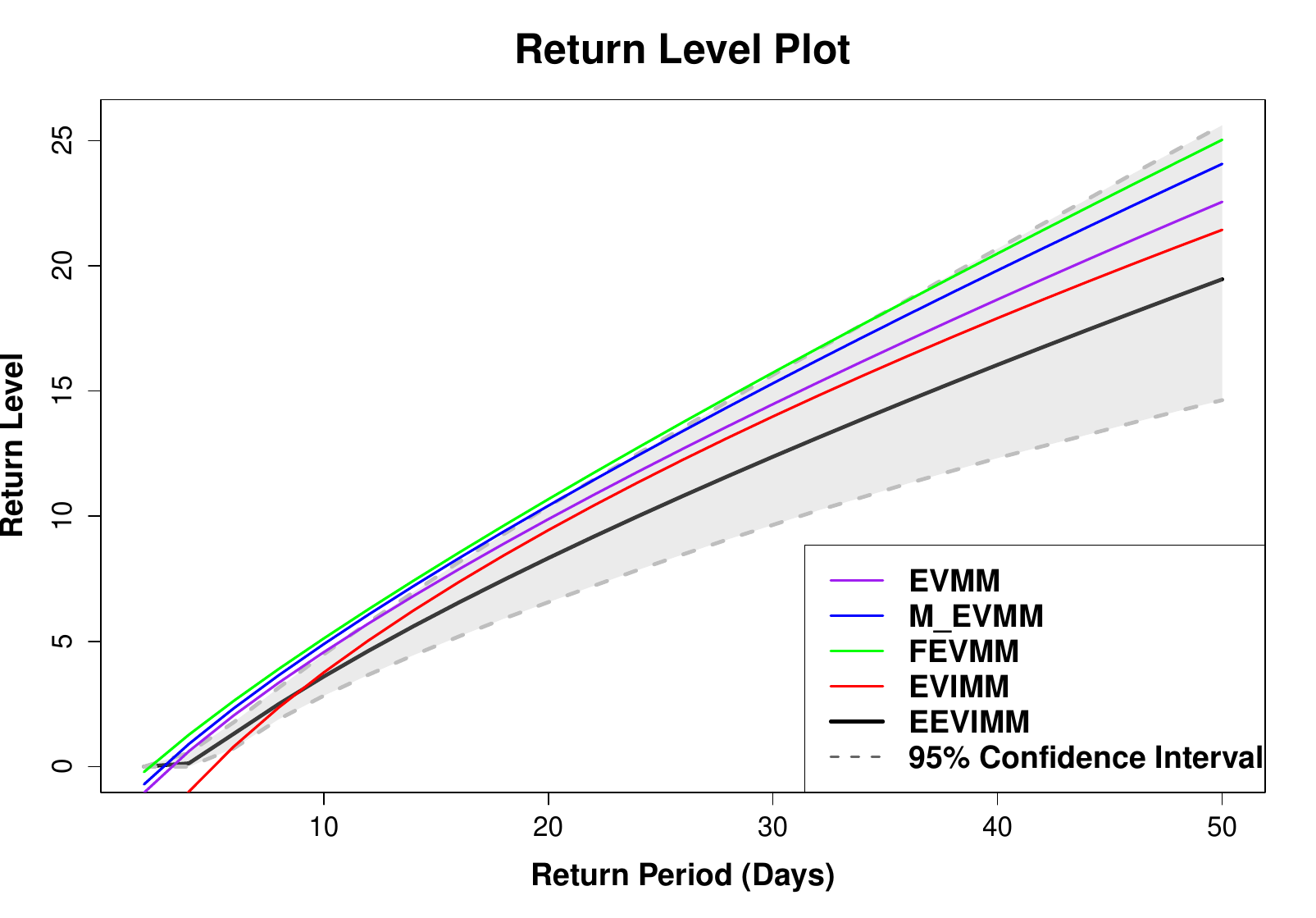}
         \caption{Return level estimates for the Swedish motor insurance data.}
        \label{fig:return_insurance}
    \end{subfigure}
 \caption{Empirical vs fitted CDF plots and return level estimates for the Hyderabad rainfall and Swedish motor insurance datasets.}
    \label{fig:return_pp_comparison}
\end{figure}

\section{Conclusions and future directions}\label{conclusion}
\noindent In this paper, we focused on better modeling of extreme events in the presence of instantaneous and early failures by proposing a novel framework, EEVIMM. Estimating an appropriate threshold is a key and persistent problem in extreme value analysis, as it affects the estimation of tail parameters. The presence of inliers in the data below the threshold further affects these parameters (see \cite{nila2025modeling} and Figure~\ref{fig:meanexcess}).
Existing classical threshold estimation methods can produce distorted threshold and tail estimates in the presence of inliers; addressing this issue was the main objective of this study. 
In general, it is assumed that instantaneous and early failures do not significantly affect the estimation of extreme-value parameters, such as the threshold or tail fraction. However, simulation studies and real-data applications show that ignoring these components can significantly affect extreme-parameter estimation. This is further supported by the theoretical finding that the mixture proportions $\phi_1$, $\phi_2$, and $\phi_3$ are inherently dependent under the unit-sum constraint, resulting in asymptotic dependence among their corresponding MLEs. Accounting for this dependence improves inference for the extreme component and its associated risk measures.
The performance of the proposed methodology was evaluated through extensive Monte Carlo simulation studies (Tables~\ref{Table-100}--\ref{Table-103}); EEVIMM generally produced estimates closer to the true parameter values and provided more accurate tail estimation than classical numerical methods, graphical methods such as MEP, PSP, PP, and HP, and EVMMs, including the recently proposed EVIMM (\cite{nila2025modeling}) and the DPT method (\cite{sakthivel2025dual}). In real data analysis, GoF results, nAIC, test statistic values, and $p$-values (see Table~\ref{tab:gof_results}), together with graphical diagnostics (Figures~\ref{fig:pp_hyderabad} and \ref{fig:pp_insurance}), indicate that it provides a better overall fit. 
The motivation for this work comes from real-world data, where inliers can influence extreme value analysis. For example, in rainfall data, observations may include no rainfall and low, moderate, and heavy intensities. Both very low and very high rainfall can lead to serious events such as droughts or floods. Several directions for future work can be considered. One important direction is to further investigate how the choice of the distribution below the threshold affects both the threshold and the extreme-value parameters. In addition, there is no general framework for defining EVMMs \citet{behrens2004bayesian,hu2018evmix,hu2013extreme}, and studies that explicitly incorporate inliers below the threshold remain limited \citet{nila2025modeling}. The proposed framework helps address this gap.
One challenge is that optimisation has not been well explored in the literature for such mixture models. This suggests that alternative estimation procedures, such as the Bayesian approach, can be considered to improve stability and estimation accuracy.
 One possible future direction for work is to consider Bernoulli-type events at multiple points, along with observations of extreme and other intensities. For greater flexibility in the bulk part, mixture distributions can be used together with instantaneous failure and early-range components. A Dirichlet process approach may also be used to allow more flexibility in modeling different data structures.  Finally, applying the proposed model to additional real-world data in reliability, environmental studies, and finance, and comparing different estimation methods, can further enhance its practical usefulness.
Overall, the proposed model provides a flexible and robust framework for modeling extreme events in the presence of instantaneous and early failures, leading to improved threshold estimation and better overall performance.
 \bibliographystyle{apalike}
 \bibliography{01ref}
 \section*{Appendix A. Proofs of Propositions}
\subsection*{Proposition 1}
\begin{proof}
From \eqref{eq:EEVIMM_general_pdf}, the limits at \(d\) are
\begin{align*}
f(d^-)
&=\phi_2\frac{f_1(d|\Phi_1)}{F_1(d|\Phi_1)},
&f(d^+)
&=(1-\phi_1-\phi_2-\phi_3)
\frac{f_2(d|\Phi_2)}
{F_2(u|\Phi_2)-F_2(d|\Phi_2)}.
\end{align*}
Hence, continuity at \(d\), i.e., \(f(d^-)=f(d^+)\), gives
\[
\phi_2 \frac{f_1(d|\Phi_1)}{F_1(d|\Phi_1)}
=
(1-\phi_1-\phi_2-\phi_3)
\frac{f_2(d|\Phi_2)}
{F_2(u|\Phi_2)-F_2(d|\Phi_2)}.
\]
For the DF in \eqref{evimmpdf}, the limits at \(d\) are
\begin{align*}
f(d^-) &=
\phi_2 \frac{\theta}{d},
&
f(d^+)
&=
(1-\phi_1-\phi_2-\phi_3)
\frac{
kd^{k-1}e^{-d^k}
}{
e^{-d^k}-e^{-u^k}
}.
\end{align*}
Therefore,
\begin{align*}
\phi_2 \frac{\theta}{d}
&=
(1-\phi_1-\phi_2-\phi_3)
\frac{
kd^{k-1}e^{-d^k}
}{
e^{-d^k}-e^{-u^k}
}
\\
&\iff
\theta
=
\frac{d}{\phi_2}
(1-\phi_1-\phi_2-\phi_3)
\frac{
kd^{k-1}e^{-d^k}
}{
e^{-d^k}-e^{-u^k}
}.
\end{align*}

Similarly, the limits at \(u\) from \eqref{eq:EEVIMM_general_pdf} are
\[f(u^-)=(1-\phi_1-\phi_2-\phi_3)
\frac{f_2(u|\Phi_2)}
{F_2(u|\Phi_2)-F_2(d|\Phi_2)},
\qquad f(u^+)
=\phi_3 g(u|u,\xi,\sigma)
=\frac{\phi_3}{\sigma}.\]
Thus,\[(1-\phi_1-\phi_2-\phi_3)
\frac{f_2(u|\Phi_2)}
{F_2(u|\Phi_2)-F_2(d|\Phi_2)}
=
\frac{\phi_3}{\sigma}
\iff
\sigma =\frac{
\phi_3\{F_2(u|\Phi_2)-F_2(d|\Phi_2)\}
}{
(1-\phi_1-\phi_2-\phi_3)f_2(u|\Phi_2)
}.
\]
For the specific DF in \eqref{evimmpdf}, the limits at \(u\) reduce to
\begin{align*}
f(u^-)
&=
(1-\phi_1-\phi_2-\phi_3)
\frac{
ku^{k-1}e^{-u^k}
}{
e^{-d^k}-e^{-u^k}
},
&
f(u^+)
&=
\phi_3 g(u\mid u,\xi,\sigma)
=
\frac{\phi_3}{\sigma}.
\end{align*}
Hence,
\begin{align*}
(1-\phi_1-\phi_2-\phi_3)
\frac{
ku^{k-1}e^{-u^k}
}{
e^{-d^k}-e^{-u^k}
}
&=
\frac{\phi_3}{\sigma}
\\
&\iff
\sigma
=
\frac{
\phi_3
\left(
e^{-d^k}-e^{-u^k}
\right)
}{
(1-\phi_1-\phi_2-\phi_3)
ku^{k-1}e^{-u^k}
}.
\end{align*}
Hence, the DF in \eqref{evimmpdf} is continuous at both \(d\) and \(u\) if and only if the stated conditions hold.
This completes the proof.
\end{proof}
\subsection*{Proposition 2}
\begin{proof}
First, assume that Proposition~\ref{rem:Continuity} holds. From \eqref{eq:EEVIMM_general_pdf}, evaluating derivatives at \(d\),
\begin{align*}
f'(d^-)
&=
\phi_2\frac{f_1'(d|\Phi_1)}{F_1(d|\Phi_1)},
&
f'(d^+)
&=
(1-\phi_1-\phi_2-\phi_3)
\frac{f_2'(d|\Phi_2)}
{F_2(u|\Phi_2)-F_2(d|\Phi_2)}.
\end{align*}
Thus, differentiability at \(d\), i.e., \(f'(d^-)=f'(d^+)\), gives
\[
\phi_2\frac{f_1'(d|\Phi_1)}{F_1(d|\Phi_1)}
=
(1-\phi_1-\phi_2-\phi_3)
\frac{f_2'(d|\Phi_2)}
{F_2(u|\Phi_2)-F_2(d|\Phi_2)}.
\]
For the DF in \eqref{evimmpdf}, at \(x=d\), we have
\[
f_1(x)=\frac{\theta}{d^\theta}x^{\theta-1}
\implies
f_1'(d)=\frac{\theta(\theta-1)}{d^2}.
\]

\[
f_2(x)=kx^{k-1}e^{-x^k},
\implies
f_2'(d)
=
f_2(d)\left(\frac{k-1}{d}-kd^{k-1}\right).
\]
Here \(F_1(d)=1\), and we obtain
\[
\phi_2\frac{\theta(\theta-1)}{d^2}
=
(1-\phi_1-\phi_2-\phi_3)
\frac{
f_2(d)\left(\frac{k-1}{d}-kd^{k-1}\right)
}{
e^{-d^k}-e^{-u^k}
}.
\]

To establish differentiability at \(u\), the left- and right-hand derivatives are
\begin{align*}
f'(u^-)
&=
(1-\phi_1-\phi_2-\phi_3)
\frac{f_2'(u|\Phi_2)}
{F_2(u|\Phi_2)-F_2(d|\Phi_2)},
&
f'(u^+)
&=
\phi_3 g'(u|u,\sigma,\xi).
\end{align*}
Here \(f_1'(\cdot|\Phi_1)\), \(f_2'(\cdot|\Phi_2)\), and \(g'(\cdot|u,\sigma,\xi)\) denote derivatives with respect to \(x\).
From \eqref{evimmpdf},
\[
g'(u|u,\sigma,\xi)
=
-\frac{1+\xi}{\sigma^2}
\implies
f'(u^+)
=
-\frac{\phi_3(1+\xi)}{\sigma^2}.
\]
Hence, differentiability at \(u\), i.e., \(f'(u^-)=f'(u^+)\), gives
\[
(1-\phi_1-\phi_2-\phi_3)
\frac{f_2'(u|\Phi_2)}
{F_2(u|\Phi_2)-F_2(d|\Phi_2)}
=
-\frac{\phi_3(1+\xi)}{\sigma^2},
\]
and hence
\[
\sigma^2
=
-\frac{
\phi_3(1+\xi)\{F_2(u|\Phi_2)-F_2(d|\Phi_2)\}
}{
(1-\phi_1-\phi_2-\phi_3)f_2'(u|\Phi_2)
}.
\]
For the specific DF in \eqref{evimmpdf}, at \(x=u\),
\[
f_2'(u)
=
f_2(u)
\left(
\frac{k-1}{u}-ku^{k-1}
\right)
\implies
f'(u^-)
=
(1-\phi_1-\phi_2-\phi_3)
\frac{
f_2(u)
\left(\frac{k-1}{u}-ku^{k-1}\right)
}{
e^{-d^k}-e^{-u^k}
}.
\]
Hence,
\[
(1-\phi_1-\phi_2-\phi_3)
\frac{
f_2(u)
\left(\frac{k-1}{u}-ku^{k-1}\right)
}{
e^{-d^k}-e^{-u^k}
}
=
-\frac{\phi_3(1+\xi)}{\sigma^2}.
\]
Solving for \(\sigma^2\), we obtain
\[
\sigma^2
=
-\frac{
\phi_3(1+\xi)
\left(e^{-d^k}-e^{-u^k}\right)
}{
(1-\phi_1-\phi_2-\phi_3)
f_2(u)
\left(\frac{k-1}{u}-ku^{k-1}\right)
}.
\]
Hence, the DF in \eqref{evimmpdf} is differentiable at both \(d\) and \(u\) if and only if the stated conditions hold.
This completes the proof.
\end{proof}
\section*{Appendix B: Additional simulation tables}
\noindent The additional Monte Carlo simulation results are reported in Tables~\ref{Table-EEVIMM}--\ref{Table-EEVIMM2}.
\label{app:bias_mse_comparison_apendex}
\begin{landscape}
\begin{table*}[htbp]
\centering
\scriptsize
\setlength{\tabcolsep}{3pt}
\renewcommand{\arraystretch}{0.9}
  \caption{Monte Carlo simulation results ($N=2000$) for the heavy-tailed scenario with sample sizes $n=300,400,500,750,1000,$ and $1200$, reporting Sample mean, BSE, 95\% BCI, RMSE, Bias, and CP.}
\begin{tabular}{|p{2.5cm}|p{1.8cm}|p{1.8cm}|p{2.3cm}|p{1.8cm}|p{2.6cm}|p{2.3cm}|p{2.3cm}|p{1.5cm}|}
\hline
\textbf{n} & \textbf{Parameters} & \textbf{True Value} &
\textbf{Sample Mean} & \textbf{BSE} & \textbf{BCI} &
\textbf{RMSE} & \textbf{Bias} & \textbf{CP} \\ \hline

\multirow{9}{*}{ 300 }
 & $\phi_1$ &  0.3  &  0.2996  &  0.0275  & ( 0.2767 ,  0.3867 )
 &  0.0260  &  -0.0004  &  90.0  \\ \cline{2-9}
 & $\phi_2$ &  0.2  &  0.1823  &  0.0531  & ( 0.0633 ,  0.2467 )
 &  0.0571  &  -0.0177  &  100.0  \\ \cline{2-9}
 & $\theta$ &  2  &  2.1186  &  0.3941  & ( 1.3285 ,  2.8666 )
 &  0.4695  &  0.1186  &  100.0  \\ \cline{2-9}
 & $d$ &  1  &  0.9436  &  0.1854  & ( 0.4534 ,  1.1649 )
 &  0.1863  &  -0.0564  &  100.0  \\ \cline{2-9}
 & $k$ &  0.5  &  0.4842  &  0.1455  & ( 0.3000 ,  0.8610 )
 &  0.1454  &  -0.0158  &  100.0  \\ \cline{2-9}
 & $\phi_3$ &  0.1  &  0.1233  &  0.0178  & ( 0.1000 ,  0.1600 )
 &  0.0288  &  0.0233  &  96.7  \\ \cline{2-9}
 & $u$ &  7.7263  &  6.5817  &  1.8457  & ( 3.6812 ,  9.3830 )
 &  1.6983  &  -1.1446  &  90.0  \\ \cline{2-9}
 & $\xi$ &  0.2  &  0.1358  &  0.2645  & ( -0.6938 ,  0.3678 )
 &  0.2845  &  -0.0642  &  91.7  \\ \cline{2-9}
 & $\sigma$ &  5  &  5.3609  &  2.1863  & ( 2.5539 ,  10.6416 )
 &  2.0072  &  0.3609  &  96.7  \\ \hline

\multirow{9}{*}{ 400 }
 & $\phi_1$ &  0.3  &  0.3002  &  0.0224  & ( 0.2525 ,  0.3401 )
 &  0.0229  &  0.0002  &  93.3  \\ \cline{2-9}
 & $\phi_2$ &  0.2  &  0.1817  &  0.0548  & ( 0.0675 ,  0.2575 )
 &  0.0536  &  -0.0183  &  100.0  \\ \cline{2-9}
 & $\theta$ &  2  &  2.0840  &  0.3305  & ( 1.4775 ,  2.7205 )
 &  0.3650  &  0.0840  &  98.3  \\ \cline{2-9}
 & $d$ &  1  &  0.9452  &  0.1952  & ( 0.5787 ,  1.3071 )
 &  0.1737  &  -0.0548  &  98.3  \\ \cline{2-9}
 & $k$ &  0.5  &  0.4862  &  0.1081  & ( 0.3000 ,  0.7266 )
 &  0.1260  &  -0.0138  &  100.0  \\ \cline{2-9}
 & $\phi_3$ &  0.1  &  0.1104  &  0.0233  & ( 0.0750 ,  0.1600 )
 &  0.0258  &  0.0104  &  100.0  \\ \cline{2-9}
 & $u$ &  7.7263  &  7.3324  &  1.3351  & ( 5.1886 ,  10.9339 )
 &  1.5641  &  -0.3939  &  100.0  \\ \cline{2-9}
 & $\xi$ &  0.2  &  0.1619  &  0.2535  & ( -0.2304 ,  0.7658 )
 &  0.2838  &  -0.0381  &  90.0  \\ \cline{2-9}
 & $\sigma$ &  5  &  5.2565  &  1.6052  & ( 2.3617 ,  8.7825 )
 &  1.9713  &  0.2565  &  93.3  \\ \hline

\multirow{9}{*}{ 500 }
 & $\phi_1$ &  0.3  &  0.3000  &  0.0210  & ( 0.2760 ,  0.3620 )
 &  0.0210  &  -0.0000  &  98.3  \\ \cline{2-9}
 & $\phi_2$ &  0.2  &  0.1829  &  0.0502  & ( 0.0660 ,  0.2460 )
 &  0.0515  &  -0.0171  &  100.0  \\ \cline{2-9}
 & $\theta$ &  2  &  2.0518  &  0.2946  & ( 1.6189 ,  2.7493 )
 &  0.3047  &  0.0518  &  96.7  \\ \cline{2-9}
 & $d$ &  1  &  0.9491  &  0.1537  & ( 0.5709 ,  1.1471 )
 &  0.1640  &  -0.0509  &  96.7  \\ \cline{2-9}
 & $k$ &  0.5  &  0.4849  &  0.1210  & ( 0.3000 ,  0.7896 )
 &  0.1125  &  -0.0151  &  100.0  \\ \cline{2-9}
 & $\phi_3$ &  0.1  &  0.1020  &  0.0237  & ( 0.0720 ,  0.1580 )
 &  0.0277  &  0.0020  &  100.0  \\ \cline{2-9}
 & $u$ &  7.7263  &  7.8835  &  1.3402  & ( 4.0550 ,  9.0470 )
 &  1.7836  &  0.1572  &  100.0  \\ \cline{2-9}
 & $\xi$ &  0.2  &  0.1716  &  0.2795  & ( -0.0484 ,  1.0402 )
 &  0.2740  &  -0.0284  &  91.7  \\ \cline{2-9}
 & $\sigma$ &  5  &  5.2894  &  1.2963  & ( 2.0004 ,  7.3075 )
 &  2.1019  &  0.2894  &  93.3  \\ \hline

\multirow{9}{*}{ 750 }
 & $\phi_1$ &  0.3  &  0.3000  &  0.0168  & ( 0.2787 ,  0.3427 )
 &  0.0168  &  0.0000  &  90.0  \\ \cline{2-9}
 & $\phi_2$ &  0.2  &  0.1795  &  0.0469  & ( 0.0627 ,  0.2320 )
 &  0.0515  &  -0.0205  &  91.7  \\ \cline{2-9}
 & $\theta$ &  2  &  2.0452  &  0.2757  & ( 1.8320 ,  2.8682 )
 &  0.2508  &  0.0452  &  98.3  \\ \cline{2-9}
 & $d$ &  1  &  0.9410  &  0.1338  & ( 0.5793 ,  1.0670 )
 &  0.1593  &  -0.0590  &  90.0  \\ \cline{2-9}
 & $k$ &  0.5  &  0.4965  &  0.0682  & ( 0.3000 ,  0.5255 )
 &  0.0955  &  -0.0035  &  98.3  \\ \cline{2-9}
 & $\phi_3$ &  0.1  &  0.1013  &  0.0152  & ( 0.0707 ,  0.1320 )
 &  0.0292  &  0.0013  &  98.3  \\ \cline{2-9}
 & $u$ &  7.7263  &  7.9751  &  0.9576  & ( 6.0179 ,  10.3996 )
 &  1.8426  &  0.2488  &  100.0  \\ \cline{2-9}
 & $\xi$ &  0.2  &  0.1886  &  0.1656  & ( -0.6971 ,  -0.0460 )
 &  0.2195  &  -0.0114  &  93.3  \\ \cline{2-9}
 & $\sigma$ &  5  &  5.0928  &  2.3462  & ( 6.7524 ,  15.7970 )
 &  1.6796  &  0.0928  &  95.0  \\ \hline

\multirow{9}{*}{ 1000 }
 & $\phi_1$ &  0.3  &  0.2997  &  0.0136  & ( 0.2250 ,  0.2780 )
 &  0.0141  &  -0.0003  &  93.3  \\ \cline{2-9}
 & $\phi_2$ &  0.2  &  0.1779  &  0.0478  & ( 0.0730 ,  0.2510 )
 &  0.0513  &  -0.0221  &  90.0  \\ \cline{2-9}
 & $\theta$ &  2  &  2.0325  &  0.2049  & ( 1.7642 ,  2.6147 )
 &  0.2066  &  0.0325  &  95.0  \\ \cline{2-9}
 & $d$ &  1  &  0.9346  &  0.1436  & ( 0.6214 ,  1.1678 )
 &  0.1572  &  -0.0654  &  91.7  \\ \cline{2-9}
 & $k$ &  0.5  &  0.4980  &  0.0769  & ( 0.4038 ,  0.7166 )
 &  0.0843  &  -0.0020  &  100.0  \\ \cline{2-9}
 & $\phi_3$ &  0.1  &  0.0999  &  0.0230  & ( 0.0730 ,  0.1600 )
 &  0.0283  &  -0.0001  &  100.0  \\ \cline{2-9}
 & $u$ &  7.7263  &  8.0334  &  1.2337  & ( 4.9241 ,  9.8891 )
 &  1.8046  &  0.3071  &  100.0  \\ \cline{2-9}
 & $\xi$ &  0.2  &  0.1958  &  0.1163  & ( -0.3149 ,  0.1461 )
 &  0.1812  &  -0.0042  &  96.7  \\ \cline{2-9}
 & $\sigma$ &  5  &  5.0716  &  0.9769  & ( 3.7557 ,  7.6231 )
 &  1.3978  &  0.0716  &  98.3  \\ \hline

\multirow{9}{*}{ 1200 }
 & $\phi_1$ &  0.3  &  0.2995  &  0.0130  & ( 0.2692 ,  0.3217 )
 &  0.0129  &  -0.0005  &  90.0  \\ \cline{2-9}
 & $\phi_2$ &  0.2  &  0.1763  &  0.0464  & ( 0.0658 ,  0.2333 )
 &  0.0517  &  -0.0237  &  91.7  \\ \cline{2-9}
 & $\theta$ &  2  &  2.0288  &  0.2119  & ( 1.9136 ,  2.7530 )
 &  0.1903  &  0.0288  &  98.3  \\ \cline{2-9}
 & $d$ &  1  &  0.9298  &  0.1269  & ( 0.5998 ,  1.0549 )
 &  0.1603  &  -0.0702  &  90.0  \\ \cline{2-9}
 & $k$ &  0.5  &  0.5030  &  0.0645  & ( 0.3846 ,  0.6552 )
 &  0.0772  &  0.0030  &  98.3  \\ \cline{2-9}
 & $\phi_3$ &  0.1  &  0.1015  &  0.0184  & ( 0.0750 ,  0.1558 )
 &  0.0269  &  0.0015  &  98.3  \\ \cline{2-9}
 & $u$ &  7.7263  &  7.9388  &  1.0188  & ( 5.2964 ,  9.9473 )
 &  1.7146  &  0.2125  &  100.0  \\ \cline{2-9}
 & $\xi$ &  0.2  &  0.2008  &  0.1330  & ( 0.0269 ,  0.5601 )
 &  0.1585  &  0.0008  &  95.0  \\ \cline{2-9}
 & $\sigma$ &  5  &  5.0084  &  0.7881  & ( 3.4211 ,  6.5051 )
 &  1.2189  &  0.0084  &  96.7  \\ \hline
\end{tabular}
\label{Table-EEVIMM}
\end{table*}
\end{landscape}

\begin{landscape}
\begin{table*}[htbp]
\centering
\scriptsize
\setlength{\tabcolsep}{3pt}
\renewcommand{\arraystretch}{0.9}
  \caption{Monte Carlo simulation results ($N=2000$) for the light-tailed scenario with sample sizes $n=300,400,500,750,1000,$ and $1200$, reporting Sample mean, BSE, 95\% BCI, RMSE, Bias, and CP.}
\begin{tabular}{|p{2.5cm}|p{1.8cm}|p{1.8cm}|p{2.3cm}|p{1.8cm}|p{2.6cm}|p{2.3cm}|p{2.3cm}|p{1.5cm}|}
\hline
\textbf{n} & \textbf{Parameters} & \textbf{True Value} &
\textbf{Sample Mean} & \textbf{BSE} & \textbf{BCI} &
\textbf{RMSE} & \textbf{Bias} & \textbf{CP} \\ \hline

\multirow{9}{*}{ 300 }
 & $\phi_1$ &  0.3  &  0.3000  &  0.0257  & ( 0.2467 ,  0.3467 )
 &  0.0267  &  0.0000  &  98.3  \\ \cline{2-9}
 & $\phi_2$ &  0.2  &  0.1838  &  0.0563  & ( 0.0667 ,  0.2567 )
 &  0.0538  &  -0.0162  &  100.0  \\ \cline{2-9}
 & $\theta$ &  2  &  2.1155  &  0.3249  & ( 1.2036 ,  2.4472 )
 &  0.4391  &  0.1155  &  98.3  \\ \cline{2-9}
 & $d$ &  1  &  0.9478  &  0.2225  & ( 0.4598 ,  1.2900 )
 &  0.1790  &  -0.0522  &  98.3  \\ \cline{2-9}
 & $k$ &  0.5  &  0.4675  &  0.1583  & ( 0.3000 ,  0.8265 )
 &  0.1446  &  -0.0325  &  100.0  \\ \cline{2-9}
 & $\phi_3$ &  0.1  &  0.1226  &  0.0193  & ( 0.1000 ,  0.1600 )
 &  0.0282  &  0.0226  &  96.7  \\ \cline{2-9}
 & $u$ &  7.7263  &  6.5217  &  1.0273  & ( 4.2881 ,  8.3593 )
 &  1.7300  &  -1.2046  &  91.7  \\ \cline{2-9}
 & $\xi$ &  -0.2  &  -0.3033  &  0.2305  & ( -0.7458 ,  0.1733 )
 &  0.2615  &  -0.1033  &  88.3  \\ \cline{2-9}
 & $\sigma$ &  5  &  5.9964  &  1.5830  & ( 2.4146 ,  8.3860 )
 &  2.2366  &  0.9964  &  95.0  \\ \hline

\multirow{9}{*}{ 400 }
 & $\phi_1$ &  0.3  &  0.3000  &  0.0230  & ( 0.2774 ,  0.3650 )
 &  0.0229  &  0.0000  &  95.0  \\ \cline{2-9}
 & $\phi_2$ &  0.2  &  0.1834  &  0.0476  & ( 0.0625 ,  0.2375 )
 &  0.0534  &  -0.0166  &  96.7  \\ \cline{2-9}
 & $\theta$ &  2  &  2.0887  &  0.3668  & ( 1.5646 ,  3.0266 )
 &  0.3830  &  0.0887  &  93.3  \\ \cline{2-9}
 & $d$ &  1  &  0.9479  &  0.1701  & ( 0.5787 ,  1.2440 )
 &  0.1737  &  -0.0521  &  93.3  \\ \cline{2-9}
 & $k$ &  0.5  &  0.4751  &  0.1115  & ( 0.3000 ,  0.6515 )
 &  0.1303  &  -0.0249  &  100.0  \\ \cline{2-9}
 & $\phi_3$ &  0.1  &  0.1118  &  0.0260  & ( 0.0750 ,  0.1600 )
 &  0.0262  &  0.0118  &  100.0  \\ \cline{2-9}
 & $u$ &  7.7263  &  7.1634  &  1.3714  & ( 4.7938 ,  10.0252 )
 &  1.5860  &  -0.5629  &  100.0  \\ \cline{2-9}
 & $\xi$ &  -0.2  &  -0.2669  &  0.2298  & ( -0.7674 ,  0.1463 )
 &  0.2393  &  -0.0669  &  93.3  \\ \cline{2-9}
 & $\sigma$ &  5  &  5.5506  &  1.6203  & ( 2.2735 ,  8.1233 )
 &  1.9504  &  0.5506  &  95.0  \\ \hline

\multirow{9}{*}{ 500 }
 & $\phi_1$ &  0.3  &  0.3007  &  0.0199  & ( 0.2660 ,  0.3460 )
 &  0.0205  &  0.0007  &  93.3  \\ \cline{2-9}
 & $\phi_2$ &  0.2  &  0.1800  &  0.0424  & ( 0.0540 ,  0.2100 )
 &  0.0532  &  -0.0200  &  96.7  \\ \cline{2-9}
 & $\theta$ &  2  &  2.0724  &  0.4493  & ( 1.8799 ,  3.6006 )
 &  0.3258  &  0.0724  &  95.0  \\ \cline{2-9}
 & $d$ &  1  &  0.9414  &  0.1155  & ( 0.5221 ,  0.9781 )
 &  0.1665  &  -0.0586  &  91.7  \\ \cline{2-9}
 & $k$ &  0.5  &  0.4820  &  0.1051  & ( 0.3000 ,  0.6233 )
 &  0.1163  &  -0.0180  &  96.7  \\ \cline{2-9}
 & $\phi_3$ &  0.1  &  0.1051  &  0.0266  & ( 0.0700 ,  0.1600 )
 &  0.0277  &  0.0051  &  100.0  \\ \cline{2-9}
 & $u$ &  7.7263  &  7.6486  &  1.7412  & ( 4.7243 ,  10.6616 )
 &  1.6622  &  -0.0777  &  100.0  \\ \cline{2-9}
 & $\xi$ &  -0.2  &  -0.2394  &  0.1787  & ( -0.9372 ,  -0.2193 )
 &  0.2267  &  -0.0394  &  93.3  \\ \cline{2-9}
 & $\sigma$ &  5  &  5.2326  &  1.8908  & ( 3.4637 ,  10.5298 )
 &  1.7658  &  0.2326  &  95.0  \\ \hline

\multirow{9}{*}{ 750 }
 & $\phi_1$ &  0.3  &  0.2993  &  0.0170  & ( 0.2627 ,  0.3307 )
 &  0.0162  &  -0.0007  &  91.7  \\ \cline{2-9}
 & $\phi_2$ &  0.2  &  0.1787  &  0.0448  & ( 0.0573 ,  0.2320 )
 &  0.0523  &  -0.0213  &  96.7  \\ \cline{2-9}
 & $\theta$ &  2  &  2.0449  &  0.2708  & ( 1.6375 ,  2.7055 )
 &  0.2513  &  0.0449  &  96.7  \\ \cline{2-9}
 & $d$ &  1  &  0.9359  &  0.1536  & ( 0.5176 ,  1.1521 )
 &  0.1631  &  -0.0641  &  91.7  \\ \cline{2-9}
 & $k$ &  0.5  &  0.4888  &  0.0925  & ( 0.3000 ,  0.6737 )
 &  0.0984  &  -0.0112  &  96.7  \\ \cline{2-9}
 & $\phi_3$ &  0.1  &  0.1020  &  0.0229  & ( 0.0720 ,  0.1573 )
 &  0.0281  &  0.0020  &  100.0  \\ \cline{2-9}
 & $u$ &  7.7263  &  7.8249  &  1.6102  & ( 4.6059 ,  10.1338 )
 &  1.6988  &  0.0986  &  100.0  \\ \cline{2-9}
 & $\xi$ &  -0.2  &  -0.2222  &  0.1383  & ( -0.6626 ,  -0.1106 )
 &  0.1758  &  -0.0222  &  93.3  \\ \cline{2-9}
 & $\sigma$ &  5  &  5.0831  &  1.5414  & ( 3.3962 ,  8.8558 )
 &  1.4052  &  0.0831  &  96.7  \\ \hline

\multirow{9}{*}{ 1000 }
 & $\phi_1$ &  0.3  &  0.3000  &  0.0144  & ( 0.2420 ,  0.2970 )
 &  0.0145  &  -0.0000  &  91.7  \\ \cline{2-9}
 & $\phi_2$ &  0.2  &  0.1760  &  0.0409  & ( 0.0630 ,  0.2230 )
 &  0.0514  &  -0.0240  &  95.0  \\ \cline{2-9}
 & $\theta$ &  2  &  2.0399  &  0.2316  & ( 1.6985 ,  2.6276 )
 &  0.2139  &  0.0399  &  98.3  \\ \cline{2-9}
 & $d$ &  1  &  0.9300  &  0.1281  & ( 0.5300 ,  1.0387 )
 &  0.1584  &  -0.0700  &  95.0  \\ \cline{2-9}
 & $k$ &  0.5  &  0.4926  &  0.0824  & ( 0.3457 ,  0.6841 )
 &  0.0882  &  -0.0074  &  98.3  \\ \cline{2-9}
 & $\phi_3$ &  0.1  &  0.1036  &  0.0236  & ( 0.0730 ,  0.1600 )
 &  0.0285  &  0.0036  &  98.3  \\ \cline{2-9}
 & $u$ &  7.7263  &  7.7394  &  1.0805  & ( 4.8343 ,  9.2416 )
 &  1.6938  &  0.0131  &  100.0  \\ \cline{2-9}
 & $\xi$ &  -0.2  &  -0.2177  &  0.1087  & ( -0.3421 ,  0.0947 )
 &  0.1372  &  -0.0177  &  93.3  \\ \cline{2-9}
 & $\sigma$ &  5  &  5.0765  &  0.8238  & ( 3.4016 ,  6.6961 )
 &  1.1792  &  0.0765  &  98.3  \\ \hline

\multirow{9}{*}{ 1200 }
 & $\phi_1$ &  0.3  &  0.2999  &  0.0137  & ( 0.2850 ,  0.3392 )
 &  0.0135  &  -0.0001  &  95.0  \\ \cline{2-9}
 & $\phi_2$ &  0.2  &  0.1764  &  0.0478  & ( 0.0692 ,  0.2408 )
 &  0.0511  &  -0.0236  &  91.7  \\ \cline{2-9}
 & $\theta$ &  2  &  2.0247  &  0.1860  & ( 1.6346 ,  2.3487 )
 &  0.1839  &  0.0247  &  95.0  \\ \cline{2-9}
 & $d$ &  1  &  0.9297  &  0.1530  & ( 0.5699 ,  1.1381 )
 &  0.1570  &  -0.0703  &  93.3  \\ \cline{2-9}
 & $k$ &  0.5  &  0.4986  &  0.0917  & ( 0.3000 ,  0.5987 )
 &  0.0808  &  -0.0014  &  95.0  \\ \cline{2-9}
 & $\phi_3$ &  0.1  &  0.1045  &  0.0145  & ( 0.0800 ,  0.1433 )
 &  0.0264  &  0.0045  &  96.7  \\ \cline{2-9}
 & $u$ &  7.7263  &  7.6629  &  0.6811  & ( 4.5641 ,  7.6528 )
 &  1.5515  &  -0.0634  &  98.3  \\ \cline{2-9}
 & $\xi$ &  -0.2  &  -0.2113  &  0.0811  & ( -0.5724 ,  -0.2468 )
 &  0.1190  &  -0.0113  &  91.7  \\ \cline{2-9}
 & $\sigma$ &  5  &  5.0505  &  0.8362  & ( 5.2703 ,  8.7618 )
 &  1.0499  &  0.0505  &  96.7  \\ \hline
\end{tabular}
\label{Table-EEVIMM1}
\end{table*}
\end{landscape}

\begin{landscape}
\begin{table*}[htbp]
\centering
\scriptsize
\setlength{\tabcolsep}{3pt}
\renewcommand{\arraystretch}{0.9}
  \caption{Monte Carlo simulation results ($N=2000$) for the exponential-tailed scenario with sample sizes $n=300,400,500,750,1000,$ and $1200$, reporting Sample mean, BSE, 95\% BCI, RMSE, Bias, and CP.}
\begin{tabular}{|p{2.5cm}|p{1.8cm}|p{1.8cm}|p{2.3cm}|p{1.8cm}|p{2.6cm}|p{2.3cm}|p{2.3cm}|p{1.5cm}|}
\hline
\textbf{n} & \textbf{Parameters} & \textbf{True Value} &
\textbf{Sample Mean} & \textbf{BSE} & \textbf{BCI} &
\textbf{RMSE} & \textbf{Bias} & \textbf{CP} \\ \hline
\multirow{9}{*}{ 300 }
 & $\phi_1$ &  0.3  &  0.2991  &  0.0263  & ( 0.2400 ,  0.3433 )
 &  0.0265  &  -0.0009  &  88.3  \\ \cline{2-9}
 & $\phi_2$ &  0.2  &  0.1838  &  0.0528  & ( 0.0600 ,  0.2533 )
 &  0.0555  &  -0.0162  &  100.0  \\ \cline{2-9}
 & $\theta$ &  2  &  2.1041  &  0.4900  & ( 1.6499 ,  3.5052 )
 &  0.4285  &  0.1041  &  95.0  \\ \cline{2-9}
 & $d$ &  1  &  0.9486  &  0.1575  & ( 0.5276 ,  1.1471 )
 &  0.1851  &  -0.0514  &  100.0  \\ \cline{2-9}
 & $k$ &  0.5  &  0.4806  &  0.1338  & ( 0.3000 ,  0.7092 )
 &  0.1451  &  -0.0194  &  100.0  \\ \cline{2-9}
 & $\phi_3$ &  0.1  &  0.1231  &  0.0199  & ( 0.1000 ,  0.1600 )
 &  0.0287  &  0.0231  &  100.0  \\ \cline{2-9}
 & $u$ &  7.7263  &  6.5903  &  1.5233  & ( 4.3739 ,  9.8278 )
 &  1.6983  &  -1.1360  &  90.0  \\ \cline{2-9}
 & $\xi$ &  0  &  -0.0840  &  0.2398  & ( -0.7003 ,  0.2487 )
 &  0.2679  &  -0.0840  &  90.0  \\ \cline{2-9}
 & $\sigma$ &  5  &  5.5982  &  2.5081  & ( 3.7240 ,  13.4792 )
 &  1.9825  &  0.5982  &  95.0  \\ \hline

\multirow{9}{*}{ 400 }
 & $\phi_1$ &  0.3  &  0.2998  &  0.0229  & ( 0.2575 ,  0.3500 )
 &  0.0227  &  -0.0002  &  90.0  \\ \cline{2-9}
 & $\phi_2$ &  0.2  &  0.1815  &  0.0476  & ( 0.0575 ,  0.2400 )
 &  0.0543  &  -0.0185  &  100.0  \\ \cline{2-9}
 & $\theta$ &  2  &  2.0883  &  0.4305  & ( 1.7769 ,  3.4190 )
 &  0.3672  &  0.0883  &  96.7  \\ \cline{2-9}
 & $d$ &  1  &  0.9418  &  0.1498  & ( 0.5850 ,  1.1931 )
 &  0.1759  &  -0.0582  &  95.0  \\ \cline{2-9}
 & $k$ &  0.5  &  0.4863  &  0.0950  & ( 0.3000 ,  0.6942 )
 &  0.1272  &  -0.0137  &  96.7  \\ \cline{2-9}
 & $\phi_3$ &  0.1  &  0.1113  &  0.0214  & ( 0.0775 ,  0.1600 )
 &  0.0259  &  0.0113  &  100.0  \\ \cline{2-9}
 & $u$ &  7.7263  &  7.2613  &  1.3318  & ( 5.1089 ,  10.9700 )
 &  1.5793  &  -0.4650  &  100.0  \\ \cline{2-9}
 & $\xi$ &  0  &  -0.0351  &  0.2458  & ( -0.3823 ,  0.6159 )
 &  0.2545  &  -0.0351  &  88.3  \\ \cline{2-9}
 & $\sigma$ &  5  &  5.3091  &  1.5685  & ( 2.2130 ,  8.7377 )
 &  1.9389  &  0.3091  &  95.0  \\ \hline

\multirow{9}{*}{ 500 }
 & $\phi_1$ &  0.3  &  0.3001  &  0.0211  & ( 0.2900 ,  0.3720 )
 &  0.0204  &  0.0001  &  96.7  \\ \cline{2-9}
 & $\phi_2$ &  0.2  &  0.1815  &  0.0497  & ( 0.0680 ,  0.2420 )
 &  0.0527  &  -0.0185  &  98.3  \\ \cline{2-9}
 & $\theta$ &  2  &  2.0730  &  0.3438  & ( 1.6579 ,  2.8782 )
 &  0.3195  &  0.0730  &  96.7  \\ \cline{2-9}
 & $d$ &  1  &  0.9442  &  0.1510  & ( 0.5795 ,  1.1403 )
 &  0.1665  &  -0.0558  &  98.3  \\ \cline{2-9}
 & $k$ &  0.5  &  0.4869  &  0.1214  & ( 0.3000 ,  0.7560 )
 &  0.1127  &  -0.0131  &  100.0  \\ \cline{2-9}
 & $\phi_3$ &  0.1  &  0.1038  &  0.0249  & ( 0.0720 ,  0.1600 )
 &  0.0276  &  0.0038  &  100.0  \\ \cline{2-9}
 & $u$ &  7.7263  &  7.7729  &  1.3695  & ( 4.6213 ,  10.3451 )
 &  1.7118  &  0.0466  &  100.0  \\ \cline{2-9}
 & $\xi$ &  0  &  -0.0272  &  0.2007  & ( -0.3553 ,  0.4491 )
 &  0.2430  &  -0.0272  &  91.7  \\ \cline{2-9}
 & $\sigma$ &  5  &  5.1276  &  1.4109  & ( 2.9010 ,  8.6027 )
 &  1.8157  &  0.1276  &  96.7  \\ \hline

\multirow{9}{*}{ 750 }
 & $\phi_1$ &  0.3  &  0.2996  &  0.0167  & ( 0.2653 ,  0.3320 )
 &  0.0170  &  -0.0004  &  93.3  \\ \cline{2-9}
 & $\phi_2$ &  0.2  &  0.1811  &  0.0352  & ( 0.0520 ,  0.1867 )
 &  0.0498  &  -0.0189  &  93.3  \\ \cline{2-9}
 & $\theta$ &  2  &  2.0341  &  0.3459  & ( 1.9701 ,  3.3422 )
 &  0.2432  &  0.0341  &  98.3  \\ \cline{2-9}
 & $d$ &  1  &  0.9447  &  0.1072  & ( 0.5376 ,  0.9723 )
 &  0.1548  &  -0.0553  &  93.3  \\ \cline{2-9}
 & $k$ &  0.5  &  0.4899  &  0.0882  & ( 0.3078 ,  0.6639 )
 &  0.0987  &  -0.0101  &  100.0  \\ \cline{2-9}
 & $\phi_3$ &  0.1  &  0.1011  &  0.0248  & ( 0.0747 ,  0.1600 )
 &  0.0290  &  0.0011  &  100.0  \\ \cline{2-9}
 & $u$ &  7.7263  &  7.9020  &  1.4658  & ( 6.0089 ,  11.6964 )
 &  1.7988  &  0.1757  &  100.0  \\ \cline{2-9}
 & $\xi$ &  0  &  -0.0235  &  0.1346  & ( -0.3291 ,  0.2204 )
 &  0.2017  &  -0.0235  &  96.7  \\ \cline{2-9}
 & $\sigma$ &  5  &  5.1282  &  1.1393  & ( 3.9547 ,  8.8505 )
 &  1.5137  &  0.1282  &  100.0  \\ \hline

\multirow{9}{*}{ 1000 }
 & $\phi_1$ &  0.3  &  0.2997  &  0.0145  & ( 0.2830 ,  0.3400 )
 &  0.0144  &  -0.0003  &  90.0  \\ \cline{2-9}
 & $\phi_2$ &  0.2  &  0.1783  &  0.0463  & ( 0.0620 ,  0.2410 )
 &  0.0503  &  -0.0217  &  91.7  \\ \cline{2-9}
 & $\theta$ &  2  &  2.0261  &  0.2005  & ( 1.5864 ,  2.3770 )
 &  0.2019  &  0.0261  &  98.3  \\ \cline{2-9}
 & $d$ &  1  &  0.9356  &  0.1828  & ( 0.6231 ,  1.3426 )
 &  0.1552  &  -0.0644  &  90.0  \\ \cline{2-9}
 & $k$ &  0.5  &  0.4991  &  0.0734  & ( 0.3000 ,  0.6285 )
 &  0.0850  &  -0.0009  &  96.7  \\ \cline{2-9}
 & $\phi_3$ &  0.1  &  0.1015  &  0.0215  & ( 0.0700 ,  0.1560 )
 &  0.0281  &  0.0015  &  100.0  \\ \cline{2-9}
 & $u$ &  7.7263  &  7.9339  &  1.2931  & ( 4.7811 ,  9.5589 )
 &  1.7606  &  0.2076  &  100.0  \\ \cline{2-9}
 & $\xi$ &  0  &  -0.0046  &  0.1647  & ( -0.1559 ,  0.4905 )
 &  0.1571  &  -0.0046  &  93.3  \\ \cline{2-9}
 & $\sigma$ &  5  &  5.0040  &  0.8749  & ( 2.3807 ,  5.8911 )
 &  1.2046  &  0.0040  &  95.0  \\ \hline

\multirow{9}{*}{ 1200 }
 & $\phi_1$ &  0.3  &  0.2995  &  0.0130  & ( 0.2716 ,  0.3225 )
 &  0.0132  &  -0.0005  &  95.0  \\ \cline{2-9}
 & $\phi_2$ &  0.2  &  0.1791  &  0.0456  & ( 0.0683 ,  0.2317 )
 &  0.0489  &  -0.0209  &  95.0  \\ \cline{2-9}
 & $\theta$ &  2  &  2.0202  &  0.1815  & ( 1.8010 ,  2.5283 )
 &  0.1870  &  0.0202  &  98.3  \\ \cline{2-9}
 & $d$ &  1  &  0.9378  &  0.1310  & ( 0.5745 ,  1.0487 )
 &  0.1492  &  -0.0622  &  91.7  \\ \cline{2-9}
 & $k$ &  0.5  &  0.5046  &  0.0763  & ( 0.3947 ,  0.6988 )
 &  0.0804  &  0.0046  &  100.0  \\ \cline{2-9}
 & $\phi_3$ &  0.1  &  0.1040  &  0.0185  & ( 0.0816 ,  0.1600 )
 &  0.0267  &  0.0040  &  100.0  \\ \cline{2-9}
 & $u$ &  7.7263  &  7.7340  &  0.7002  & ( 5.2708 ,  8.3214 )
 &  1.5637  &  0.0077  &  100.0  \\ \cline{2-9}
 & $\xi$ &  0  &  -0.0100  &  0.1075  & ( -0.0164 ,  0.3899 )
 &  0.1344  &  -0.0100  &  96.7  \\ \cline{2-9}
 & $\sigma$ &  5  &  5.0189  &  0.6030  & ( 2.9617 ,  5.2616 )
 &  1.0363  &  0.0189  &  100.0  \\ \hline
\end{tabular}
\label{Table-EEVIMM2}
\end{table*}
\end{landscape}

\end{document}